\title{The Fluted Fragment with Transitivity}
\author{Ian Pratt-Hartmann}
{University of Warsaw, Poland/University of Opole, Poland/University of Manchester, UK}
{ipratt@cs.man.ac.uk}{}{}
\author{Lidia Tendera
}{University of Opole, Poland}{tendera@math.uni.opole.pl}{}{}
\authorrunning{I. Pratt-Hartmann and L. Tendera}
\keywords{First-Order logic, Decidability,  Satisfiability, Transitivity, Complexity.}
\newif\ifdraftpaper
\begin{document}
	
	\maketitle
	
	
\setlength{\abovedisplayskip}{3pt}
\setlength{\belowdisplayskip}{3pt}
\setlength{\abovedisplayshortskip}{0pt}
\setlength{\belowdisplayshortskip}{0pt}

	\begin{abstract}
		We study the satisfiability problem for the fluted fragment extended with transitive relations. We show that the logic enjoys the finite model property when only one transitive relation is available. On the other hand we show that the satisfiability problem is undecidable already for  the two-variable fragment of the logic in the presence of three transitive relations. 
	\end{abstract}

	\definecolor{GreenYellow}   {cmyk}{0.15,0,0.69,0}
	\definecolor{Yellow}        {cmyk}{0,0,1,0}
	\definecolor{myYellow}      {cmyk}{0,0,0.15,0}
	
	\definecolor{Goldenrod}     {cmyk}{0,0.10,0.84,0}
	\definecolor{Dandelion}     {cmyk}{0,0.29,0.84,0}
	\definecolor{Apricot}       {cmyk}{0,0.32,0.52,0}
	\definecolor{Peach}         {cmyk}{0,0.50,0.70,0}
	\definecolor{Melon}         {cmyk}{0,0.46,0.50,0}
	\definecolor{myMelon}       {cmyk}{0,0.1,0.1,0}
		
	\definecolor{YellowOrange}  {cmyk}{0,0.42,1,0}
	\definecolor{Orange}        {cmyk}{0,0.61,0.87,0}
	\definecolor{BurntOrange}   {cmyk}{0,0.51,1,0}
	\definecolor{Bittersweet}   {cmyk}{0,0.75,1,0.24}
	\definecolor{RedOrange}     {cmyk}{0,0.77,0.87,0}
	\definecolor{Mahogany}      {cmyk}{0,0.85,0.87,0.35}
	\definecolor{Maroon}        {cmyk}{0,0.87,0.68,0.32}
	\definecolor{BrickRed}      {cmyk}{0,0.89,0.94,0.28}
	\definecolor{Red}           {cmyk}{0,1,1,0}
	\definecolor{OrangeRed}     {cmyk}{0,1,0.50,0}
	\definecolor{RubineRed}     {cmyk}{0,1,0.13,0}
	\definecolor{WildStrawberry}{cmyk}{0,0.96,0.39,0}
	\definecolor{Salmon}        {cmyk}{0,0.53,0.38,0}
	\definecolor{CarnationPink} {cmyk}{0,0.63,0,0}
	\definecolor{Magenta}       {cmyk}{0,1,0,0}
	\definecolor{VioletRed}     {cmyk}{0,0.81,0,0}
	\definecolor{Rhodamine}     {cmyk}{0,0.82,0,0}
	\definecolor{Mulberry}      {cmyk}{0.34,0.90,0,0.02}
	\definecolor{RedViolet}     {cmyk}{0.07,0.90,0,0.34}
	\definecolor{Fuchsia}       {cmyk}{0.47,0.91,0,0.08}
	\definecolor{myFuchsia}     {cmyk}{0.17,0.18,0,0.04}

	\definecolor{Lavender}      {cmyk}{0,0.48,0,0}
	\definecolor{myLavender}      {cmyk}{0,0.1,0,0}

	\definecolor{Thistle}       {cmyk}{0.12,0.59,0,0}
	\definecolor{Orchid}        {cmyk}{0.32,0.64,0,0}
	\definecolor{DarkOrchid}    {cmyk}{0.40,0.80,0.20,0}
	\definecolor{Purple}        {cmyk}{0.45,0.86,0,0}
	\definecolor{Plum}          {cmyk}{0.50,1,0,0}
	\definecolor{Violet}        {cmyk}{0.79,0.88,0,0}
	\definecolor{RoyalPurple}   {cmyk}{0.75,0.90,0,0}
	\definecolor{BlueViolet}    {cmyk}{0.86,0.91,0,0.04}
	\definecolor{Periwinkle}    {cmyk}{0.57,0.55,0,0}
	\definecolor{CadetBlue}     {cmyk}{0.62,0.57,0.23,0}
	\definecolor{CornflowerBlue}{cmyk}{0.65,0.13,0,0}
	\definecolor{MidnightBlue}  {cmyk}{0.98,0.13,0,0.43}
	\definecolor{NavyBlue}      {cmyk}{0.94,0.54,0,0}
	\definecolor{RoyalBlue}     {cmyk}{1,0.50,0,0}
	\definecolor{Blue}          {cmyk}{1,1,0,0}
	\definecolor{Cerulean}      {cmyk}{0.94,0.11,0,0}
	\definecolor{Cyan}          {cmyk}{1,0,0,0}
	\definecolor{ProcessBlue}   {cmyk}{0.96,0,0,0}
	\definecolor{SkyBlue}       {cmyk}{0.62,0,0.12,0}
	\definecolor{Turquoise}     {cmyk}{0.85,0,0.20,0}
	\definecolor{TealBlue}      {cmyk}{0.86,0,0.34,0.02}
	\definecolor{Aquamarine}    {cmyk}{0.82,0,0.30,0}
	\definecolor{BlueGreen}     {cmyk}{0.85,0,0.33,0}
	\definecolor{Emerald}       {cmyk}{1,0,0.50,0}
	\definecolor{JungleGreen}   {cmyk}{0.99,0,0.52,0}
	\definecolor{SeaGreen}      {cmyk}{0.69,0,0.50,0}
	\definecolor{Green}         {cmyk}{1,0,1,0}
	\definecolor{ForestGreen}   {cmyk}{0.91,0,0.88,0.12}
	\definecolor{PineGreen}     {cmyk}{0.92,0,0.59,0.25}
	\definecolor{LimeGreen}     {cmyk}{0.50,0,1,0}
	\definecolor{YellowGreen}   {cmyk}{0.44,0,0.74,0}
	\definecolor{SpringGreen}   {cmyk}{0.26,0,0.76,0}
	\definecolor{OliveGreen}    {cmyk}{0.64,0,0.95,0.40}
	\definecolor{RawSienna}     {cmyk}{0,0.72,1,0.45}
	\definecolor{Sepia}         {cmyk}{0,0.83,1,0.70}
	\definecolor{Brown}         {cmyk}{0,0.81,1,0.60}
	\definecolor{Tan}           {cmyk}{0.14,0.42,0.56,0}
	\definecolor{Gray}          {cmyk}{0,0,0,0.50}
	\definecolor{Black}         {cmyk}{0,0,0,1}
	\definecolor{White}         {cmyk}{0,0,0,0}


\newcommand{\fA}{\mathfrak{A}}
\newcommand{\fB}{\mathfrak{B}}%
\newcommand{\fC}{\mathfrak{C}}%
\newcommand{\fD}{\mathfrak{D}}%
\newcommand{\fG}{\mathfrak{G}}%
\newcommand{\fI}{\mathfrak{I}}%

\newcommand{\ft}{\mathfrak{t}} 

\newcommand{\cA}{\mathcal{A}}%
\newcommand{\cB}{\mathcal{B}}%
\newcommand{\cC}{\mathcal{C}}%
\newcommand{\cD}{\mathcal{D}}%
\newcommand{\cE}{\mathcal{E}}%

\newcommand{\fod}{\ensuremath{\text{FO\/}^2}\xspace}
\newcommand{\fodt}{\ensuremath{\text{FO\/}^{2}_{1,<}}\xspace}

\renewcommand{\phi}{\varphi} 

\newcommand{\set}[1]{\ensuremath{\{ #1 \}}}  
\newcommand{\tuple}[1]{\langle#1\rangle}
\newcommand{\sizeof}[1]{|\!|#1|\!|}

\newcommand{\EQ}{\ensuremath{{\mathcal EQ}}}
\newcommand{\Sat}{\ensuremath{\textit{Sat}}}
\newcommand{\FinSat}{\ensuremath{\textit{FinSat}}}

\newcommand{\FO}{\mbox{\bf FO}}
\newcommand{\FOt}{\mbox{$\mbox{\rm FO}^2$}}

\newcommand{\FOtEC}{\mbox{$\mbox{\rm EC}^2$}}
\newcommand{\FOtECth}{\mbox{$\mbox{\rm EC}^2_3$}}
\newcommand{\FOtECt}{\mbox{$\mbox{\rm EC}^2_2$}}
\newcommand{\FOtECo}{\mbox{$\mbox{\rm EC}^2_1$}}
\newcommand{\FOtECk}{\mbox{$\mbox{\rm EC}^2_k$}}
\newcommand{\GFk}{\mbox{$\mbox{\rm GF}^k$}}
\newcommand{\GFt}{\mbox{$\mbox{\rm GF}^2$}}
\newcommand{\GFthree}{\mbox{$\mbox{\rm GF}^3$}}
\newcommand{\GF}{\mbox{$\mbox{\rm GF}$}}
\newcommand{\Ct}{{\mathcal C}^2}
\newcommand{\GCt}{{\mathcal {GC}}^2}

\newcommand{\FLt}{{\mathcal{FL}}^2}
\newcommand{\FL}{{\mathcal{ FL}}}
\newcommand{\FLotrans}{{\mathcal{FL}1\mbox{\rm T}}}
\newcommand{\FLotranso}{{\mathcal{FL}^11\mbox{\rm T}}} 
\newcommand{\FLotranst}{{\mathcal{FL}^21\mbox{\rm T}}} 
\newcommand{\FLotranstMinus}{{\mathcal{FL}^{2}1\mbox{\rm T}^{\/u}}} 
\newcommand{\FLotransm}{{\mathcal{FL}^{m}1\mbox{\rm T}}} 
\newcommand{\FLotransmpo}{{\mathcal{ FL}^{m+1}1\mbox{\rm T}}} 

\newcommand{\FLtwotrans}{{\mathcal{FL}^22\mbox{\rm T}}} 	
\newcommand{\FLthreetrans}{{\mathcal{FL}^23\mbox{\rm T}}}	
\newcommand{\FLmtrans}[1]{{\mathcal{FL}^m{#1}\mbox{\rm T}}}	
\newcommand{\FLtwomtransMinus}[1]{{\mathcal{FL}^2{#1}\mbox{\rm T}^{\/u}}}	
\newcommand{\FLtwomtrans}[1]{{\mathcal{FL}^2{#1}\mbox{\rm T}}}				
\newcommand{\FLtwotransmpo}{{\mathcal{ FL}^{m+1}2\mbox{\rm T}}} 

\newcommand{\UNF}{\mbox{$\mathcal{UNF}$}}
\newcommand{\FOk}{\mbox{$\mathcal{FO}^k$}}

\newcommand{\FOtEth}{\mbox{$\mbox{\rm EQ}^2_3$}}
\newcommand{\FOtEt}{\mbox{$\mbox{\rm EQ}^2_2$}}
\newcommand{\FOtEo}{\mbox{$\mbox{\rm EQ}^2_1$}}
\newcommand{\FOtEk}{\mbox{$\mbox{\rm EQ}^2_k$}}

\newcommand{\NLogSpace}{\textsc{NLogSpace}}
\newcommand{\NP}{\textsc{NPTime}}
\newcommand{\NPTime}{\textsc{NPTime}}
\newcommand{\PTime}{\textsc{PTime}}
\newcommand{\PSpace}{\textsc{PSpace}}
\newcommand{\ExpTime}{\textsc{ExpTime}}
\newcommand{\ExpSpace}{\textsc{ExpSpace}}
\newcommand{\NExpTime}{\textsc{NExpTime}}
\newcommand{\TwoExpTime}{2\textsc{-ExpTime}}
\newcommand{\TwoNExpTime}{2\textsc{-NExpTime}}
\newcommand{\ThreeNExpTime}{3\textsc{-NExpTime}}
\newcommand{\APSpace}{\textsc{APSpace}}
\newcommand{\Tower}{\textsc{Tower}}


\newcommand{\N}{{\mathbb N}}
\newcommand{\Q}{{\mathbb Q}}
\newcommand{\Z}{{\mathbb Z}}

\newcommand{\tp}{\ensuremath{\mbox{\rm tp}}}
\newcommand{\ftp}{\ensuremath{\mbox{\rm ftp}}}
\newcommand{\stp}{\ensuremath{\mbox{\rm stp}}} 

\newcommand{\Tp}{\ensuremath{\mbox{{\bf{tp}}}}}
\newcommand{\tpSource}{\ensuremath{\mbox{tp}_1}}
\newcommand{\tpTarg}{\ensuremath{\mbox{tp}_2}}

\newcommand{\vtw}{\ensuremath{\mathsf{v}}}
\newcommand{\htw}{\ensuremath{\mathsf{h}}}

\newcommand{\rt}{\ensuremath{\mathsf{rt}}}
\newcommand{\lt}{\ensuremath{\mathsf{lt}}}
\newcommand{\up}{\ensuremath{\mathsf{up}}}
\newcommand{\dw}{\ensuremath{\mathsf{dn}}}

\ifdraftpaper
 \newcommand{\nb}[1]{$|$\marginpar{\scriptsize\raggedright\textcolor{red}{#1}}}
 \else
\newcommand{\nb}[1]{}
\fi
 




\newcommand{\restr}{\!\!\restriction\!\!}
\newcommand{\calEo}{1{\cal EQ}}
\newcommand{\calEt}{2{\cal EQ}}
\newcommand{\calEth}{3{\cal EQ}}
\newcommand{\calEk}{k{\cal EQ}}
\newcommand{\calTo}{1{\cal TR}}
\newcommand{\calTt}{2{\cal TR}}

\newcommand{\mynegsp}{\!\!\!\!}

\newcommand{\towerOrder}[2]{\ensuremath{\mathfrak{T}(#1,#2)}}

%
%

\addtolength{\itemsep}{0.5\baselineskip}

\section{Introduction}
\label{sec:intro}
The \textit{fluted fragment}, here denoted $\FL$, is a fragment of first-order logic in which, roughly speaking, the order of quantification of variables coincides with the order in which those variables appear as arguments of predicates. The allusion is presumably architectural: we are invited to think of arguments of predicates as being `lined up' in columns. The following formulas are sentences of $\FL$
\begin{align}
& \mbox{
	\begin{minipage}{10cm}
	\begin{tabbing}
	No student admires every professor\\
	$\forall x_1 (\mbox{student}(x_1) \rightarrow \neg \forall x_2 (\mbox{prof}(x_2) \rightarrow \mbox{admires}(x_1, x_2)))$
	\end{tabbing}
	\end{minipage}
}
\label{eq:eg1}\\
& \mbox{
	\begin{minipage}{10cm}
	\begin{tabbing}
	No lecturer introduces any professor to every student\\
	$\forall x_1 ($\=$\mbox{lecturer}(x_1) \rightarrow$
	$\neg \exists x_2 ($\=$\mbox{prof}(x_2)
	\wedge \forall x_3 (\mbox{student}(x_3) \rightarrow \mbox{intro}(x_1,x_2,x_3))))$,
	\end{tabbing}
	\end{minipage}
}
\label{eq:eg2}
\end{align}
with the `lining up' of variables illustrated in Fig.~\ref{fig:lining}. By contrast, none of the formulas
%
\begin{align*}
& \forall x_1 . r(x_1, x_1)\\
& \forall x_1 \forall x_2 (r(x_1, x_2) \rightarrow r(x_2,x_1))\\
& \forall x_1 \forall x_2 \forall x_3 (r(x_1,x_2) \wedge r(x_2,x_3) \rightarrow r(x_1,x_3)),
\end{align*}
expressing, respectively, the reflexivity, symmetry and transitivity of the relation $r$, is fluted, as the atoms involved cannot be arranged so that their argument sequences `line up' in the fashion of Fig.~\ref{fig:lining}. 
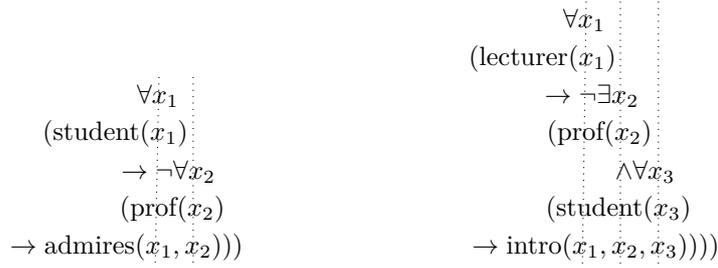
\begin{figure}
\begin{center}
\begin{tikzpicture}[scale=0.5]
\draw[dotted] (3.35,7.5) -- (3.25,2.5);
\draw[dotted] (4.25,7.5) -- (4.25,2.5);
\draw (3.325,7) node {$\forall x_1$};
\draw (2.2,6) node {$(\mbox{student}(x_1)$};
\draw (3.6,5) node {$\rightarrow \neg \forall x_2$};
\draw (3.7,4) node {$(\mbox{prof}(x_2)$};
\draw (2.5,3) node {$ \rightarrow \mbox{admires}(x_1, x_2)))$};
\end{tikzpicture}
\hspace{2.5cm}
\begin{tikzpicture}[scale=0.5]
\draw[dotted] (3.35,7.5) -- (3.25,0.5);
\draw[dotted] (4.25,7.5) -- (4.25,0.5);
\draw[dotted] (5.25,7.5) -- (5.25,0.5);
\draw (3.3,7) node {$\forall x_1$};
\draw (2.2,6) node {$(\mbox{lecturer}(x_1)$};
\draw (3.5,5) node {$ \rightarrow \neg \exists x_2$};
\draw (3.7,4) node {$(\mbox{prof}(x_2)$};
\draw (4.9,3) node {$ \wedge \forall x_3$};
\draw (4.2,2) node {$(\mbox{student}(x_3)$};
\draw (3.6,1) node {$\rightarrow \mbox{intro}(x_1,x_2,x_3))))$};
\end{tikzpicture}
\end{center}
\caption{The `lining up' of variables in the fluted formulas~\eqref{eq:eg1} and~\eqref{eq:eg2}; all quantification is executed on the right-most available column.}
\label{fig:lining}
\end{figure}

The history of this fragment is somewhat tortuous.
The basic idea of {\em fluted logic} can be traced to a paper given by W.V.~Quine to the 1968 {\em International Congress of Philosophy}~\cite{purdyTrans:quine69}, in which the author defined the {\em homogeneous $m$-adic formulas}. Quine later relaxed this fragment, in the context of a discussion of predicate-functor logic, to what he called `fluted' quantificational schemata~\cite{purdyTrans:quine76b}, 
claiming that the satisfiability problem for the relaxed fragment is decidable.
The viability of the proof strategy sketched by Quine was explicitly called into question by Noah \cite{purdyTrans:noah80}, and the subject then taken up by 
W.C.~Purdy~\cite{purdyTrans:purdy96a}, who gave his own definition of `fluted formulas', proving decidability.
It is questionable whether Purdy's reconstruction is faithful to Quine's intentions: the matter is clouded by differences in the definitions of predicate functors between between~\cite{purdyTrans:noah80}
and~\cite{purdyTrans:quine76b}, both of which Purdy cites. In fact, Quine's original definition of 
`fluted' quantificational schemata appears to coincide with a logic introduced---apparently independently---by A.~Herzig~\cite{purdyTrans:herzig90}.
Rightly of wrongly, however, the name `fluted fragment' 
has now attached itself to Purdy's definition in~\cite{purdyTrans:purdy96a}; and we shall continue to use it in that way in the present article. See Sec.~\ref{sec:prelim} for a formal definition. 

To complicate matters further, Purdy claimed in~\cite{purdyTrans:purdy02} that  
$\FL$ (i.e.~the fluted fragment, in our sense, and his) has the exponential-sized model property: if a fluted formula $\phi$ is satisfiable, then it is satisfiable over a domain of size bounded by an exponential function of the number of symbols in $\phi$. Purdy concluded that the satisfiability problem for $\FL$ is \NExpTime-complete. These latter claims are false. It was shown in~\cite{purdyTrans:P-HST16} that, although
$\FL$ has the finite model property, there
is no elementary bound on the sizes of the models required, and the satisfiability problem for $\FL$ is non-elementary.
More precisely, define $\FL^m$ to be the subfragment of
$\FL$ in which at most $m$ variables (free or bound) appear. Then the satisfiability problem 
for $\FL^m$ is $\lfloor m/2 \rfloor$-\NExpTime-hard for all $m \geq 2$ and in
$(m-2)$-\NExpTime{} for all $m \geq 3$~\cite{P-HST-FLrev}.
It follows that the satisfiability problem 
for $\FL$ is \Tower-complete, in the framework of~\cite{purdyTrans:schmitz16}.
These results fix  the exact complexity 
of satisfiability of $\FL^m$ for small values of $m$.
Indeed, the satisfiability problem for \FOt, 
the two-variable fragment of first-order logic, is known to be \NExpTime-complete~\cite{purdyTrans:gkv97}, whence the corresponding problem for 
$\FL^2$ is certainly in \NExpTime. Moreover, 
for $0 \leq m \leq 1$, $\FL^m$ coincides with the $m$-variable fragment of first-order logic,
whence its satisfiability problem is \NPTime-complete. Thus,
taking $0$-\NExpTime{} to mean \NPTime, we see that the satisfiability problem 
for $\FL^m$ is $\lfloor m/2 \rfloor$-\NExpTime-complete, at least for $m\leq 4$.

The focus of the present paper is what happens when we add to the fluted fragment the 
ability to stipulate that certain designated binary relations are \textit{transitive}, or are
{\em equivalence relations}. The motivation comes from analogous results obtained for other
decidable fragments of first-order logic.  Consider basic propositional modal logic K.
Under the standard translation into first-order logic (yielded by Kripke semantics), we can regard K
as a fragment of first-order logic---indeed as a fragment of $\FLt$.
From basic modal logic K, we obtain the logic K4 under the supposition that the accessibility relation
on possible worlds is transitive, and the logic S5 under the supposition that it is an equivalence relation: it is well-known that the satisfiability problems for K and K4 are \PSpace-complete, whereas that for
S5 is \NPTime-complete~\cite{purdyTrans:ladner77}. (For analogous results on {\em graded} modal logic, see~\cite{purdyTrans:kph09}.)
Closely related are also description logics (cf.~\cite{purdyTrans:2003handbook}) with {\em role hierarchies} and {\em transitive roles}. In particular, the description logic  $\mathcal{SH}$, which has the finite model property, is an $\ExpTime$-complete fragment of $\FL$ with transitivity. 
Similar investigations have been carried out in respect of \FOt{}, which has the finite model property and
whose satisfiability problem, as just mentioned, is \NExpTime-complete. The finite model property is lost when one transitive relation or two equivalence relations are allowed. 
For equivalence, everything is known: the (finite) satisfiability problem for \FOt{} in the presence of a single equivalence relation 
remains \NExpTime-complete, but this increases to \TwoNExpTime-complete in the presence of two equivalence relations~\cite{purdyTrans:kmp-ht,purdyTrans:KO12}, and becomes undecidable with three. For transitivity, we
have an incomplete picture:
the {\em finite} satisfiability problem for \FOt{} in the presence with a single transitive relation in decidable in \ThreeNExpTime~\cite{purdyTrans:ph18}, while the decidability of the satisfiability problem remains open (cf.~\cite{ST-FO2T}); the corresponding problems with two transitive relations 
are both undecidable~\cite{purdyTrans:KT09}. 

Adding equivalence relations to the fluted fragment poses no new problems. Existing results on
of \FOt{} with two equivalence relations can be used to show that the satisfiability and finite
satisfiability problems for $\FL$\nb{L for FL we need the resolution technique from this paper. Or?} (not just $\FLt$) with \textit{two} equivalence relations are decidable. Furthermore,
the proof that the corresponding problems for
\FOt{} in the presence of \textit{three} equivalence relations are undecidable can easily be seen to apply also to $\FLt$. 
On the other hand, the situation with transitivity is much less clear. In particular, it is not known to the present authors whether the description logic $\mathcal{SHI}$, the extension of $\mathcal{SH}$ where also role {\em inverses} can be used (a feature not expressible in $\FL$), enjoys the finite model property. 
Some indication that flutedness interacts in interesting ways with transitivity is provided by known complexity results on various extensions of
guarded two-variable fragment with transitive relations.\nb{L something could perhaps be shortened in the following text} 
The {\em guarded fragment}, denoted \GF{}, is that fragment
of first-order logic in which all quantification
is of either of the forms 
$\forall \bar{v}(\alpha \rightarrow \psi)$ or $\exists \bar{v}(\alpha \wedge \psi)$,
where $\alpha$ is an atomic formula (a so-called {\em guard}) 
featuring all free variables of $\psi$. The
{\em guarded two-variable fragment}, denoted \GFt, is the intersection of
\GF{} and \FOt. It is straightforward to show that the addition
of two transitive relations to \GFt{} yields a logic whose satisfiability 
problem is undecidable. However, as long as the distinguished 
transitive relations appear only in guards, we can extend the whole
of \GF{} with any number of transitive relations, yielding the so-called 
{\em guarded fragment with transitive guards}, whose satisfiability 
problem is in \TwoExpTime~\cite{purdyTrans:st04}. Intriguingly, in the
two-variable case, we obtain a reduction in complexity if we
require transitive relations in guards to point {\em forward}---i.e.~allowing only $\forall v(t(u,v) \rightarrow \psi)$
rather than $\forall v(t(v,u) \rightarrow \psi)$, and similarly
for existential quantification. Thus, the extension of
\GFt{} with (any number of) transitive guards has a \TwoExpTime-complete
satisfiability problem; however, the corresponding problem under 
the  restriction to one-way transitive guards is \ExpSpace-complete~\cite{purdyTrans:kieronski06}. Since the above-mentioned extensions of \GFt{} also do not enjoy the finite model property, the satisfiability and the finite satisfiability problems do not coincide. 
Decidability and complexity bounds for the finite satisfiability problems are shown in \cite{purdyTrans:KT09,purdyTrans:KT18}. 

We show in the sequel that 
$\FL$ in the presence of a single transitive relation has the finite model property.
On the other hand, the satisfiability problems for 
$\FL$ in the presence of three transitive relations are undecidable even for the intersection of $\FL$ with $\GFt$. (Indeed, the same holds even when one of these transitive relations is constrained to be the identity.)
The status of the decidability of $\FL$ with just two transitive relations remains open.


\section{Preliminaries}
\label{sec:prelim}
Unless explicitly stated to the contrary, the fragments of first-order logic considered here do not contain equality. We employ purely relational signatures, i.e.~no individual constants or function symbols. We
do, however, allow 0-ary relations (proposition letters).

Let $\bar{x}_\omega= x_1, x_2, \ldots$ be a fixed sequence of variables.
		%
		%
		%
We define the sets of formulas $\FL^{[m]}$ (for $m \geq 0$) by structural induction as follows:
(i) any atom $\alpha(x_\ell, \ldots, x_m)$, where $x_\ell, \dots, x_m$ is a contiguous subsequence of $\bar{x}_\omega$,
		is in $\FL^{[m]}$;
		(ii) $\FL^{[m]}$ is closed under boolean combinations;
		(iii) if $\phi$ is in $\FL^{[m+1]}$, then $\exists x_{m+1} \phi$ and $\forall x_{m+1} \phi$
		are in $\FL^{[m]}$.
		The set of \textit{fluted formulas} is defined as \smash{$\FL = \bigcup_{m\geq 0} \FL^{[m]}$}. A \textit{fluted sentence} is a fluted formula with no free variables. 
		Thus, when forming Boolean combinations in the fluted fragment, all the combined formulas must have as
		their free variables some suffix of some prefix $x_1, \dots, x_m$ of $\bar{x}_\omega$; and, when quantifying, only the last variable in this sequence may be bound. 
Note also that proposition letters (0-ary predicates)
may, according to the above definitions, be combined freely with formulas: if $\phi$ is in 
$\FL^{[m]}$, then so, for example, is $\phi \wedge P$, where $P$ is a proposition letter. 
	
Denote by $\FL^{m}$ the sub-fragment of $\FL$ consisting of
those formulas featuring at most $m
$ variables, free or bound.
Do not confuse $\FL^{m}$ (the set of fluted formulas with at most $m$ variables, free or bound) with $\FL^{[m]}$ (the set of fluted formulas with {\em free} variables $x_\ell, \dots, x_m$). These are, of course, quite different. For example, \eqref{eq:eg1} is in $\FL^{2}$, and \eqref{eq:eg2} is in $\FL^{3}$, but they  
are both in $\FL^{[0]}$. Note that $\FL^{m}$ cannot include predicates of arity greater than $m$. 

For $m \geq 2$, denote by $\FLmtrans{k}$ the $m$-variable fluted fragment $\FL^m$ together with $k$ distinguished transitive relations. In addition, denote by $\FLtwomtransMinus{k}$ 
the sub-fragment of 
$\FLtwomtrans{k}$ in which no binary predicates occur except the $k$ distinguished transitive ones.

\section{The decidability of fluted logic with one transitive relation}\label{sec:onetrans}

In this section, we show that the logic $\FLotrans$, the fluted fragment together with a single distinguished transitive relation $\ft$, has the finite model property.
We proceed in stages. 
First, we show that $\FLotranstMinus$ has a doubly exponential-sized model property. Next, we show that $\FLotranst$ has a triply exponential-sized model property, via an exponential-sized reduction to $\FLotranstMinus$.
Finally, for $m \geq 2$, we provide an exponential-sized reduction of the satisfiability problem for 
$\FLotransmpo$ to the corresponding problem for $\FLotransm$, showing that, if the target of the reduction has a model of size $N$, then the source has a model of size 
$O(2^N)$.
The satisfiability problems considered here will all have at least exponential complexity. Therefore,
we may assume without loss of generality in this section that all signatures feature no 0-ary predicates,
since their truth values can simply be guessed.

\subsection{The logic $\FLotranstMinus$}
Fix some signature $\Sigma$ of unary predicates. We consider $\FLotranstMinus$-formulas
over the signature $\Sigma \cup \{\ft\}$, where $\ft$ is the distinguished transitive predicate. (Thus, $\ft\not\in \Sigma$.)
By a 1-{\em type over} $\Sigma$, we mean a maximal consistent conjunction of literals $\pm p(x)$, where $p \in \Sigma$. If $\fA$ is a structure interpreting $\Sigma \cup \set{\ft}$, any element $a \in A$ satisfies a unique 1-type over $\Sigma$;
we denote it $\tp^\fA[a]$. Since $\Sigma$ will not vary, we typically omit reference to it when speaking of 1-types.
We use the letters 
$\pi$ and $\pi'$ always to range over 1-types
and $\mu$ always to range over arbitrary 
quantifier-free $\Sigma$-formulas involving just the variable $x$. We write
$\pi(y)$ to indicate the result of substituting $y$ everywhere for $x$ in 
$\pi$, and similarly for $\pi'$ and $\mu$. 

Call a $\FLotranstMinus$-formula over $\Sigma\cup \{\ft\}$ {\em basic} if it is of one of the forms
$$\exists x . \mu \hspace{0.8cm}  \forall x . \mu \hspace{0.8cm}  \forall x (\pi \rightarrow \exists y (\mu(y) \wedge \pm \ft(x,y))) \hspace{0.8cm}
\forall x (\pi \rightarrow \forall y (\pi'(y) \rightarrow \pm \ft(x,y))).$$
The following Lemma is a version of the familiar `Scott normal form' for $\FLt$ from \cite{purdyTrans:scott62}.
%
\begin{lemma}
	Let $\phi$ be a $\FLotranstMinus$-sentence. There exists a set $\Psi$ of
	basic $\FLotranstMinus$-formulas with the following properties:
	\textup{(}i\textup{)} $\models \left(\bigwedge \Psi \right) \rightarrow
	\phi$; \textup{(}ii\textup{)} if $\phi$ has a model, then so has $\Psi$;
	\textup{(}iii\textup{)} 
	$\sizeof{\Psi}$ is bounded by a polynomial function of $\sizeof{\phi}$.
	\label{lma:nf}
\end{lemma}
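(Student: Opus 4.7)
The plan is a Scott-style renaming procedure adapted to the restricted syntax of $\FLotranstMinus$. The crucial simplification is that only unary predicates (from $\Sigma$) and the distinguished binary $\ft$ are available, so every binary atom appearing anywhere in $\phi$ is $\pm\ft(x_1,x_2)$, and every quantified subformula of $\phi$ has at most one free variable. Consequently, every renaming step will only introduce a fresh \emph{unary} predicate, which both preserves flutedness and avoids syntactic blowup.

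Concretely, I would first put $\phi$ into negation normal form. Next, working from the outside in, I would repeatedly apply the following move: pick a maximal proper subformula $\chi$ of the form $Qx_k\psi$ whose body $\psi$ itself contains a nested quantifier, introduce a fresh unary predicate $P_\chi$, replace $\chi$ by $P_\chi(v)$ (where $v$ is the unique free variable of $\chi$), and record the defining axiom $\forall v(P_\chi(v) \leftrightarrow \chi)$. After saturation, every axiom has a quantifier-free body, of one of the general shapes $\forall x\,\mu$, $\exists x\,\mu$, $\forall x(\sigma(x)\to\exists y\,\beta(x,y))$, or $\forall x(\sigma(x)\to\forall y\,\beta(x,y))$, where $\beta(x,y)$ is a Boolean combination of unary literals on $y$ and a single $\pm\ft(x,y)$-literal. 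To finish, I would put $\beta$ into DNF and distribute the inner quantifier over its disjuncts (existential case) or its conjuncts after dualization (universal case). Each atomic piece then has the shape $\mu(y)\wedge\pm\ft(x,y)$, yielding axioms of the third and fourth basic forms modulo replacing each antecedent $\sigma(x)$ (and, in the fourth case, also the unary-on-$y$ restriction) by a fresh unary predicate; the stipulation that $\pi,\pi'$ be complete 1-types is then met by interpreting $\pi,\pi'$ schematically as any 1-type extending the relevant atomic literal.

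Conditions (i) and (ii) of the lemma are then routine: (i) because each biconditional defining axiom makes $P_\chi$ equivalent to $\chi$, so $\bigwedge\Psi$ entails $\phi$; and (ii) because any model of $\phi$ extends to a model of $\Psi$ by interpreting each fresh $P_\chi$ as the extension of $\chi$. The polynomial bound (iii) holds because only linearly many subformulas of $\phi$ require renaming, each contributing a constant number of defining axioms, and the DNF step at the innermost level acts on two-variable quantifier-free bodies of length linear in $\sizeof{\phi}$.

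The step I expect to require the most care is matching the exact shape of the fourth basic form, which demands full 1-types $\pi,\pi'$ on both sides rather than arbitrary unary conditions. My plan is to introduce sufficient fresh unary predicates so that both the antecedent and the universally quantified consequent-restriction become single atomic literals, and then to adopt the schematic reading above; one must check that this bookkeeping keeps $\sizeof{\Psi}$ polynomial, which it does since only linearly many fresh predicates are introduced overall.
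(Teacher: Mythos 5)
Your overall strategy --- negation normal form, Scott-style renaming of quantified subformulas by fresh unary predicates, splitting the resulting definitions into implications, and a final normalization of the two-variable bodies --- is the same as the paper's, which renames the innermost quantified subformulas $Qx_{k+1}\,\zeta$ by fresh $k$-ary predicates (including $0$-ary ones for the sentence-level occurrences, whose truth values are then guessed away) and declares the final massaging ``routine''. Two details of your last step, however, do not work as written. First, putting $\beta(x,y)$ into full DNF can produce exponentially many disjuncts even though $\beta$ has length linear in $\sizeof{\phi}$, which would destroy property (iii); since the basic forms allow $\mu$ to be an \emph{arbitrary} quantifier-free unary formula rather than a conjunction of literals, the correct move is the Shannon expansion on the single atom $\ft(x,y)$ only, i.e.\ $\beta \equiv (\ft(x,y)\wedge\beta[\ft/\top])\vee(\neg\ft(x,y)\wedge\beta[\ft/\bot])$, which yields exactly two disjuncts, each of linear size. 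Second, after distributing $\exists y$ over that disjunction you obtain $\forall x\bigl(\sigma(x)\rightarrow \exists y\,\delta_1 \vee \exists y\,\delta_2\bigr)$; replacing this by the conjunction of the two existential conjuncts strengthens the formula and breaks property (ii). You need a selection step: fresh unary predicates $u_1,u_2$, a conjunct $\forall x\bigl(\sigma(x)\rightarrow u_1(x)\vee u_2(x)\bigr)$ --- which is itself of the basic form $\forall x.\mu$ --- and the two conjuncts $\forall x\bigl(u_i(x)\rightarrow\exists y\,\delta_i\bigr)$.

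On the difficulty you flag concerning full $1$-types: you are right that a polynomial-size $\Psi$ cannot literally contain one basic formula per $1$-type extending a given literal, and the paper's own proof is no more explicit here than you are. The clean resolution is essentially your ``schematic'' one: either read the guards $\pi,\pi'$ in the third and fourth basic forms as single unary literals (equivalently, arbitrary unary formulas), under which the certificate conditions and the proofs of Lemmas~\ref{lma:correct} and~\ref{lma:complete} go through verbatim once ``$\langle\pi,\Pi\rangle\in S$'' is replaced by ``$\langle\pi,\Pi\rangle\in S$ with $\models\pi\rightarrow\sigma$''; or note that only the size of the signature of $\Psi$, not the cardinality of $\Psi$, is actually needed downstream. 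So this is a presentational wrinkle shared with the paper rather than a defect specific to your argument; the two points in the previous paragraph are the ones you should repair.
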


We say that a {\em super-type} {\em over} $\Sigma$ is a pair $\langle \pi, \Pi \rangle$, where $\pi$ is a 1-type over $\Sigma$ and $\Pi$ a set of 1-types over $\Sigma$. If $\fA$ is a structure interpreting the signature $\Sigma \cup \set{\ft}$ and $a \in A$, the super-type of $a$ in $\fA$, denoted $\stp^\fA[a]$, is the pair 
$\langle \tp^\fA[a], \Pi \rangle$, where $\Pi = \set{\tp^\fA[b] \mid \fA \models \ft[a,b]}$. 
Intuitively, a super-type is a description of an element in a structure specifying that
element's 1-type together with the 1-types of those elements to which it is related by $\ft$. 
If $S$ is a set of super-types, we write $\tp(S) = \set{\pi \mid 
	\langle \pi, \Pi \rangle \in S \mbox{ for some } \Pi}$. 
Since $\Sigma$ will not vary, we again omit reference to it when speaking of super-types.
By a {\em certificate}, we mean a pair $C = (S, \ll)$, where $S$ is a set 
of super-types and $\ll$ is a transitive relation on $\tp(S)$ satisfying the following conditions:
\begin{description}
	\item [(C1)] if $\langle \pi, \Pi \rangle \in S$ and $\pi' \in \Pi$, then there exists $\langle \pi', \Pi' \rangle \in S$ with $\Pi' \subseteq \Pi$; 
	\item [(C2)] if $\pi \ll \pi'$, $\langle \pi, \Pi \rangle \in S$ and
	$\langle \pi', \Pi' \rangle \in S$, then $\set{\pi'} \cup \Pi' \subseteq \Pi$.  
\end{description}
If $\fA$ is a structure, then the {\em certificate of} $\fA$, denoted
$C(\fA)$, is the pair $(S, \ll)$, where $S = \set{\stp^\fA[a] \mid a \in A}$
is the set of super-types realized in $\fA$, and $\pi \ll \pi'$ if and only
if $\pi$ and $\pi'$ are realized in $\fA$ and $\fA \models \forall x(\pi \rightarrow \forall y(\pi'(y) \rightarrow \ft(x,y)))$.
Intuitively, a certificate is a description of a structure listing the realized super-types
and containing a partial order which specifies when all elements realizing one 1-type
are related by $\ft$ to all elements realizing another 1-type.

\begin{lemma}
	If $\fA$ is any structure interpreting $\Sigma\cup \{\ft\}$, $C(\fA)$ is a certificate.
	\label{lma:isaCerificate}
\end{lemma}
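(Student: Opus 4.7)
The proof is essentially a routine unpacking of definitions, with the transitivity of $\ft$ doing all the work. The plan is to verify in turn that $\ll$ is transitive on $\tp(S)$ and that conditions (C1) and (C2) hold, in each case by picking witnesses realizing the relevant super-types and chaining $\ft$-edges.

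For transitivity of $\ll$, I would suppose $\pi \ll \pi'$ and $\pi' \ll \pi''$, with all three types realized in $\fA$. Given arbitrary $a,c \in A$ with $\tp^\fA[a]=\pi$ and $\tp^\fA[c]=\pi''$, I would pick any realizer $b$ of $\pi'$ (which exists since $\pi' \in \tp(S)$). By the definitions of $\pi \ll \pi'$ and $\pi' \ll \pi''$ we get $\fA \models \ft(a,b) \wedge \ft(b,c)$, and transitivity of $\ft$ gives $\fA \models \ft(a,c)$. Hence $\pi \ll \pi''$.

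For (C1), suppose $\langle \pi, \Pi\rangle \in S$ and $\pi' \in \Pi$. Then some $a \in A$ has $\stp^\fA[a]=\langle \pi,\Pi\rangle$, and by definition of $\Pi$ there is some $b$ with $\fA \models \ft(a,b)$ and $\tp^\fA[b]=\pi'$. Set $\Pi' = \{\tp^\fA[c] \mid \fA \models \ft(b,c)\}$, so $\stp^\fA[b] = \langle \pi', \Pi'\rangle \in S$. For any $\pi'' \in \Pi'$, a realizer $c$ gives $\fA \models \ft(b,c)$, and together with $\ft(a,b)$ transitivity yields $\ft(a,c)$, so $\pi'' \in \Pi$. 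Thus $\Pi' \subseteq \Pi$. The argument for (C2) is analogous: pick realizers $a$ of $\langle\pi,\Pi\rangle$ and $b$ of $\langle\pi',\Pi'\rangle$; from $\pi \ll \pi'$ get $\ft(a,b)$, hence $\pi' \in \Pi$; for any $\pi'' \in \Pi'$, pick a $c$ with $\ft(b,c)$ and apply transitivity.

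There is no genuine obstacle here — the only thing to watch is that $\ll$ is defined only between \emph{realized} 1-types, so in the transitivity argument one must explicitly invoke a realizer of the intermediate type $\pi'$, which is guaranteed because $\pi' \in \tp(S)$. Everything else is immediate from the definitions of $\stp^\fA$, $\tp(S)$, $\ll$, and the transitivity of $\ft$ in $\fA$.
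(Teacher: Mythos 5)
Your proposal is correct and follows essentially the same route as the paper's proof: pick realizers of the relevant super-types and chain $\ft$-edges using transitivity. You merely spell out two steps the paper leaves implicit (the transitivity of $\ll$, which the paper dismisses as obvious, and the verification that $\Pi'\subseteq\Pi$ in (C1)).
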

\begin{proof}
	Write $C(\fA) = (S,\ll)$. Obviously $\ll$ is  transitive.
	We must check ({\bf C1}) and ({\bf C2}).
	
	\noindent
	({\bf C1}): Suppose $\langle \pi, \Pi \rangle \in S$ and $\pi' \in \Pi$. Let $a$ be
	such that $\stp^\fA[a] = \langle \pi, \Pi \rangle$. Then there exists a $b \in A$
	such that $\tp^\fA[b] = \pi'$ and $\fA \models \ft[a,b]$. Let 
	$\stp^\fA[b] = \langle \pi', \Pi' \rangle$.
	
	\noindent
	({\bf C2}): 
	Suppose $\langle \pi, \Pi \rangle \in S$ and 
	$\langle \pi', \Pi' \rangle \in S$ with $\pi \ll \pi'$, and let $a, b \in A$ be such that 
	$\stp^\fA[a] = \langle \pi, \Pi \rangle$ and $\stp^\fA[b] = \langle \pi', \Pi' \rangle$.
	Since $\pi \ll \pi'$, by construction of $C(\fA)$, we have 
	$\fA \models \forall x (\pi \rightarrow \forall y(\pi'(y) \rightarrow \ft(x,y)))$,
	whence it is immediate that $\pi' \in \Pi$ and $\Pi' \subseteq \Pi$.
\end{proof}
If $C = (S, \ll)$ is a certificate, and $\psi$ a basic $\FLotranstMinus$-formula,
we define the relation $C \models \psi$ to hold provided the following six conditions are satisfied.
The motivation for this definition is provided by Lemmas~\ref{lma:correct} and~\ref{lma:complete}.
\begin{enumerate}[(i)]
	\item if $\psi$ is of the form $\forall x (\pi \rightarrow \exists y(\mu(y) \wedge \ft(x,y)))$, then, for all $\Pi$ such that $\langle \pi, \Pi \rangle \in S$,
	there exists $\pi' \in \Pi$ such that 
	$\models \pi' \rightarrow \mu$;
	\item  if $\psi$ is of the form $\forall x (\pi \rightarrow \forall y(\pi'(y) \rightarrow \ft(x,y)))$ and $\pi, \pi' \in \tp(S)$, then
	$\pi \ll \pi'$;
	\item if $\psi$ is of the form $\forall x (\pi \rightarrow \exists y(\mu(y) \wedge \neg \ft(x,y)))$, then, for all $\langle \pi, \Pi \rangle \in S$,
	there exists $\langle \pi', \Pi' \rangle \in S$ such that
	$\models \pi' \rightarrow \mu$ and there exists 
	no $\alpha \in \set{\pi} \cup \Pi$ such that $\alpha \ll  \pi'$;
	\item  if $\psi$ is of the form $\forall x (\pi \rightarrow \forall y(\pi'(y) \rightarrow \neg \ft(x,y)))$, then, for all 
	$\langle \pi, \Pi \rangle \in S$, $\pi' \not \in \Pi$;
	\item if $\psi$ is of the form $\exists x . \mu$, then there exists
	$\langle \pi, \Pi \rangle \in S$ such that $\models \pi \rightarrow \mu$;
	\item if $\psi$ is of the form $\forall x . \mu$, then, for all
	$\langle \pi, \Pi \rangle \in S$, $\models \pi \rightarrow \mu$.
\end{enumerate}
\begin{lemma}
	Let $\fA$ be a structure interpreting $\Sigma\cup \{\ft\}$ and let
	$\psi$ be a basic $\FLotranstMinus$-formula over $\Sigma \cup \{\ft\}$.
	If $\fA \models \psi$, then $C(\fA) \models \psi$. 
	\label{lma:correct}
\end{lemma}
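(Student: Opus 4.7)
The proof proceeds by a routine case analysis over the six possible forms of the basic formula $\psi$, using the definition of the certificate $C(\fA) = (S,\ll)$ and (crucially, in one case) the transitivity of $\ft$. The only real content lies in cases (i), (iii) and (iv); cases (ii), (v), (vi) reduce almost immediately to the definitions.

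For case (i), $\psi = \forall x(\pi \rightarrow \exists y(\mu(y) \wedge \ft(x,y)))$: given $\langle \pi, \Pi \rangle \in S$, pick a witness $a \in A$ with $\stp^\fA[a] = \langle \pi, \Pi \rangle$; the hypothesis supplies $b \in A$ with $\fA \models \mu[b]$ and $\fA \models \ft[a,b]$, and then $\pi' := \tp^\fA[b]$ lies in $\Pi$ by definition of super-type, and clearly $\models \pi' \rightarrow \mu$. Case (ii) is immediate from the definition of $\ll$, since $\pi, \pi' \in \tp(S)$ simply says that both are realized in $\fA$. Case (iv) is by contradiction: if $\pi' \in \Pi$ for some $\langle \pi, \Pi \rangle \in S$ realized by $a$, some $b$ with $\tp^\fA[b]=\pi'$ would satisfy $\ft[a,b]$, contradicting $\psi$.

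The main case is (iii), $\psi = \forall x(\pi \rightarrow \exists y(\mu(y) \wedge \neg \ft(x,y)))$, as it is the only one requiring transitivity of $\ft$. Given $\langle \pi, \Pi \rangle \in S$, pick $a$ realizing it; $\psi$ yields $b$ with $\fA \models \mu[b]$ and $\fA \not\models \ft[a,b]$. Let $\langle \pi', \Pi' \rangle := \stp^\fA[b] \in S$, so $\models \pi' \rightarrow \mu$. I must then show that no $\alpha \in \set{\pi} \cup \Pi$ satisfies $\alpha \ll \pi'$. If $\pi \ll \pi'$ then, by definition of $\ll$, $\fA \models \ft[a,b]$, a contradiction. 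If $\alpha \in \Pi$ and $\alpha \ll \pi'$, pick $c$ with $\fA \models \ft[a,c]$ and $\tp^\fA[c] = \alpha$; then $\alpha \ll \pi'$ gives $\fA \models \ft[c,b]$, and transitivity of $\ft$ yields $\fA \models \ft[a,b]$, again a contradiction.

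Finally, cases (v) and (vi) are trivial: a witness to $\exists x.\mu$ contributes a super-type whose 1-type entails $\mu$, and every realized 1-type entails $\mu$ whenever $\fA \models \forall x.\mu$. The only step where care is needed is case (iii), where the combined use of the inclusion $\Pi' \subseteq \Pi$ from super-types and the transitivity of $\ft$ is essential; everything else is just unwinding definitions.
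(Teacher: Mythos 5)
Your proposal is correct and follows essentially the same case analysis as the paper's proof, including the key use of transitivity of $\ft$ in case (iii) (which the paper leaves slightly more implicit than you do). The only quibble is your closing remark: the inclusion $\Pi' \subseteq \Pi$ plays no role in case (iii) — your own argument there uses only the definition of $\Pi$, the definition of $\ll$, and transitivity — but this does not affect the correctness of the proof.
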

\begin{proof}
	We write $C(\fA) = (S,\ll)$ and consider the possible forms of $\psi$ in turn.
	\begin{itemize}
		\item 
		$\psi = \forall x (\pi \rightarrow \exists y(\mu(y) \wedge \ft(x,y)))$: 
		Suppose $\langle \pi, \Pi \rangle \in S$. Then there exists $a \in A$ with
		$\stp^\fA[a] =  \langle \pi, \Pi \rangle$. Since $\fA \models \psi$, 
		choose $b \in A$ such that $\fA \models \mu[b]$ and $\fA \models \ft[a,b]$,
		and let $\tp^\fA[b] = \pi'$. Then $\models \pi' \rightarrow \mu$ and $\pi' \in \Pi$, as required. 
		\item 
			$\psi = \forall x (\pi \rightarrow \forall y(\pi'(y) \rightarrow \ft(x,y)))$: 
		It is immediate by the construction of $C(\fA)$ that, if $\pi, \pi' \in \tp(S)$, then $\pi \ll \pi'$;
			\item 
	$\psi = \forall x (\pi \rightarrow \exists y(\mu(y) \wedge \neg \ft(x,y)))$: 
	Suppose $\langle \pi, \Pi \rangle \in S$. Then there exists $a \in A$ with
	$\stp^\fA[a] =  \langle \pi, \Pi \rangle$. Since $\fA \models \psi$, 
	choose $b \in A$ such that $\fA \models \mu[b]$ and $\fA \not 
	\models \ft[a,b]$,
	and let $\tp^\fA[b] = \pi'$, so that $\models \pi' \rightarrow \mu$.
	We require only to show that there exists no $\alpha \in \set{\pi} \cup \Pi$ such that $\alpha \ll \pi'$. Suppose, for contradiction, that such an $\alpha$ exists. By ({\bf C1}), $\alpha \in \tp(S)$. 
	If $\alpha = \pi$, then, by the definition of $\ll$, we have  
	$\fA \models \forall x (\pi \rightarrow \forall y (\pi'(y) \rightarrow \ft(x,y)))$, which contradicts the supposition that $\fA \not 
	\models \ft[a,b]$.
	If $\alpha \in \Pi$, then, by the definition of 
	$\Pi$ and $\ll$, we have an element $a' \in A$ such that
	$\tp^\fA[a'] = \alpha$, $\fA \models \ft[a,a']$ and 
	$\fA \models \forall x (\alpha \rightarrow \forall y (\pi'(y) \rightarrow \ft(x,y)))$, which again contradicts the supposition that $\fA \not 
	\models \ft[a,b]$.
	\item 
	$\psi = \forall x (\pi \rightarrow \forall y(\pi'(y) \rightarrow \neg \ft(x,y)))$: 
	Suppose $\langle \pi, \Pi \rangle \in S$ and let $a \in A$ be such that
	$\stp^\fA[a] = \langle \pi, \Pi \rangle$. Since $\fA \models \psi$,
	we have $\pi' \not \in \Pi$. 
		\item $\psi = \exists x. \mu$ or $\psi = \forall x. \mu$. Immediate by construction of $S$.
		\end{itemize}
\vspace{-15pt}\end{proof}

\begin{lemma}
	If $C = (S,\ll)$ is a certificate, then there exists a structure $\fA$ over 
	a domain of cardinality $2|S|$ such that, for any basic 
	$\FLotranstMinus$-formula $\psi$ over $\Sigma$, $C \models \psi$ implies
	$\fA \models \psi$. 
	\label{lma:complete}
\end{lemma}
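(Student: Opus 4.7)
The plan is to build $\fA$ from two copies of each super-type, using one copy as the target of all ``existential'' $\ft$-edges and reserving the other as a safe target for negated-existential witnesses. Concretely, set $A = \set{a_s, b_s \mid s \in S}$, so that $|A| = 2|S|$, and for each $s = \langle \pi, \Pi \rangle \in S$ interpret the unary predicates so that $\tp^\fA[a_s] = \tp^\fA[b_s] = \pi$; since $\ft$ is the only remaining symbol, this fixes $\fA$ up to the interpretation of $\ft$. To interpret $\ft$, I would define an auxiliary relation $R \subseteq A \times A$ with two clauses and take $\ft^\fA$ to be its transitive closure, guaranteeing transitivity by construction. The \emph{universal} clause puts $R(u,v)$ whenever $\tp^\fA[u] \ll \tp^\fA[v]$, with $u, v$ arbitrary copies; the \emph{existential} clause puts $R(u, a_{s'})$ whenever $u$ is a copy of some $s = \langle \pi, \Pi \rangle$ and $s' = \langle \pi', \Pi' \rangle \in S$ satisfies $\pi' \in \Pi$ and $\Pi' \subseteq \Pi$. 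The critical design choice is that existential edges terminate only at $a$-copies, so $b$-copies are reachable only via the universal clause.

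The technical heart is the following closure invariant: for any $R$-path $u = v_0, v_1, \ldots, v_k$, where $u$ is a copy of $\langle \pi, \Pi \rangle$ and each $v_i$ is a copy of $\langle \pi_i, \Pi_i \rangle$, we have $\pi_i \in \Pi$ and $\Pi_i \subseteq \Pi$ for every $i \geq 1$. This is proved by induction on $i$: at the inductive step, condition ({\bf C2}) applied to a universal edge $v_i \to v_{i+1}$ gives $\set{\pi_{i+1}} \cup \Pi_{i+1} \subseteq \Pi_i$, while an existential edge gives the same inclusions by its definition, and the containments compose with the inductive hypothesis. An immediate consequence is that $\stp^\fA[u]$ equals the super-type of which $u$ is a copy: the invariant yields the ``$\subseteq$'' direction on the successor-1-type set, and condition ({\bf C1}) yields the ``$\supseteq$'' direction by producing, for each $\pi' \in \Pi$, a one-step existential edge from $u$ to some $a_{s'}$ with $\tp^\fA[a_{s'}] = \pi'$.

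The six clauses in the definition of $C \models \psi$ now transfer to $\fA$ one at a time. Clauses (v) and (vi) are immediate from the observation that the 1-types realized in $\fA$ are exactly those in $\tp(S)$; clause (ii) is immediate from the universal clause of $R$; clause (i) uses the ``$\supseteq$'' half of the super-type identification to provide a positive witness; and clause (iv) uses the ``$\subseteq$'' half to rule out forbidden edges. The delicate case is clause (iii). Given $u$ a copy of $s = \langle \pi, \Pi \rangle$ and the super-type $s' = \langle \pi', \Pi' \rangle$ supplied by the hypothesis, I would take $b_{s'}$ as witness. Suppose for contradiction that $\fA \models \ft[u, b_{s'}]$, and consider the final edge of a witnessing $R$-path $u = v_0, \ldots, v_k = b_{s'}$. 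That edge cannot be existential, since $b_{s'}$ is not an $a$-copy, so it must be universal, forcing $\tp^\fA[v_{k-1}] \ll \pi'$. Yet the invariant (with the degenerate case $v_{k-1} = u$ when $k=1$) places $\tp^\fA[v_{k-1}] \in \set{\pi} \cup \Pi$, contradicting the certificate's promise that no $\alpha \in \set{\pi} \cup \Pi$ satisfies $\alpha \ll \pi'$. The main obstacle throughout is calibrating the existential clause to be generous enough for the super-type identification to succeed, yet restrictive enough (by bypassing $b$-copies) that negative witnesses survive the transitive closure.
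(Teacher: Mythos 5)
Your construction is, up to renaming, exactly the paper's: two copies $a_s,b_s$ per super-type (the paper's $a^+_{\pi,\Pi},a^-_{\pi,\Pi}$), $\ft$ taken as the transitive closure of the same two edge relations with existential edges landing only on the $a$-copies, the same closure invariant proved by induction along $R$-paths using \textbf{(C1)}/\textbf{(C2)}, and the same final-edge argument for the negated-existential case. The proof is correct and essentially identical to the paper's, with the super-type identification $\stp^{\fA}[u]=s$ being a slightly more packaged statement of facts the paper uses implicitly.
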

\begin{proof}
	Define 
	$	A^+ =  \set{a^+_{\pi, \Pi} \mid \langle \pi, \Pi \rangle \in S}$ and  
	$A^- =  \set{a^-_{\pi, \Pi} \mid \langle \pi, \Pi \rangle \in S}$, 
	where the various $a^+_{\pi, \Pi}$ and $a^-_{\pi, \Pi}$ are some objects
	(assumed distinct), and set $A = A^+ \cup A^-$. Define the binary relations 
$T_1 =  \set{\langle a^\pm_{\pi, \Pi}, a^+_{\pi', \Pi'}\rangle \mid \set{\pi'} \cup \Pi' \subseteq \Pi}$ and 
$T_2 =  \set{\langle a^\pm_{\pi, \Pi}, a^\pm_{\pi', \Pi'}\rangle \mid \pi \ll \pi'}$,
%
	and let $T$ be the transitive closure of $T_1 \cup T_2$. 
	Intuitively, we may think of the elements $a^+_{\pi', \Pi'}$ as witnessing existential formulas of the form
	$\exists y (\mu(y) \wedge \ft(x,y))$, where $\models \pi' \rightarrow \mu$, and
	of the elements $a^-_{\pi', \Pi'}$ as witnessing existential formulas of the form
	$\exists y (\mu(y) \wedge \neg \ft(x,y))$. Now define
	$\fA$ on the domain $A$ by setting $\tp^\fA[a^\pm_{\pi,\Pi}] = \pi$ for
	all $\langle \pi, \Pi \rangle \in S$, and by setting $\ft^\fA = T$.
	
	We observe that if $a = a^\pm_{\pi, \Pi}$ and $b = a^\pm_{\pi', \Pi'}$ 
	are elements of $A$ such that $a$ is related to $b$ by either $T_1$ or $T_2$,
	then $\set{\pi'} \cup \Pi' \subseteq \Pi$. Indeed, for $T_1$, this
	is immediate by definition; and for $T_2$, it follows from property ({\bf C2}) 
	of certificates. It follows by induction that, if
	$a$ is related to $b$ by $T$, then $\set{\pi'} \cup \Pi' \subseteq \Pi$. 
	To prove the lemma, we consider the possible forms of $\psi$ in turn.
	\begin{itemize}
\item 	$\psi = \forall x (\pi \rightarrow \exists y(\mu(y) \wedge \ft(x,y)))$: 
	Suppose  $a = a^\pm_{\pi, \Pi}$. Since $C \models \psi$, there exists 
	$\pi' \in \Pi$ such that $\pi' \rightarrow \mu$. By ({\bf C1}), there exists
	$\langle \pi', \Pi' \rangle \in S$ such that $\Pi' \subseteq \Pi$. 
	Letting $b = a^+_{\pi', \Pi'}$, we have
	that $a$ is related to $b$ by $T_1$. But then $\fA \models \ft[a,b]$ 
	and $\fA \models \mu[b]$ by construction of $\fA$.
	
\item 	$\psi = \forall x (\pi \rightarrow \forall y(\pi'(y) \rightarrow \ft(x,y)))$: 
	Since $C \models \psi$, we have $\pi \ll \pi'$. Suppose now
	$a = a^\pm_{\pi, \Pi}$ and $b = a^\pm_{\pi', \Pi'}$. 
	Thus, $a$ is related to $b$ by $T_2$, and so
	by construction of $\fA$, $\fA \models \ft[a,b]$. 
	
\item 	$\psi = \forall x (\pi \rightarrow \exists y(\mu(y) \wedge \neg \ft(x,y)))$: 
	Suppose  $a = a^\pm_{\pi, \Pi}$. Since $C \models \psi$, there exists 
	$\langle \pi', \Pi' \rangle \in S$ such that $\pi' \rightarrow \mu$, and such that there is no $\alpha \in \set{\pi} \cup \Pi$ with $\alpha \ll \pi'$. Now let $b = a^-_{\pi',\Pi'}$. By construction of $\fA$,
	$\fA \models \mu[b]$. It suffices to show that $\fA \not \models \ft[a,b]$.
	For otherwise, by the definition of $T$, there exists  a chain of
	elements $a = a_1, \dots, a_{m} = b$ with each related to the next by either 
	$T_1$ or $T_2$ and with $a_{m-1}$ related to $a_m$ by $T_2$. 
	(Notice that nothing can be related by $T_1$ to $b = a^-_{\pi',\Pi'}$.)
	Writing $a_{m-1}= a^\pm_{\alpha, \Pi''} \in S$, we see that 
	$\alpha \ll \pi'$, and, moreover, that 
	$a$ is either identical to $a_{m-1}$, or related to it by $T$.
	As we observed above, if $a^\pm_{\pi, \Pi}$ is related to $a^\pm_{\alpha, \Pi''}$
	by $T$, then $\alpha \in \Pi$. Thus, either way, $\alpha \in \set{\pi} \cup \Pi$. 
	But we are supposing that no such $\alpha$ exists.
	
\item 	$\psi = \forall x (\pi \rightarrow \forall y(\pi'(y) \rightarrow \neg \ft(x,y)))$: 
	Suppose $a = a^\pm_{\pi, \Pi}$ and $b = a^\pm_{\pi', \Pi'}$ are elements of $A$. We observed above that, if $a$ is related to $b$ by $T$, then $\pi' \in \Pi$, contradicting the assumption that $C \models \psi$.
	Thus, 
	by construction of $\fA$, $\fA \not \models \ft[a,b]$. 
	
\item
	$\psi = \exists x. \mu$ or $\psi = \forall x. \mu$. Immediate by construction of $\fA$.
\end{itemize}
\vspace{-15pt}
\end{proof}
Since the number of super-types over $\Sigma$ is bounded by $2^{(2^{|\Sigma|}+|\Sigma|)}$, Lemmas~\ref{lma:nf}--\ref{lma:complete} yield:
\begin{lemma}
	If $\phi$ is a satisfiable formula of $\FLotranstMinus$, then $\phi$ has a model
	of size at most doubly exponential in $\sizeof{\phi}$. Hence
	the satisfiability problem for $\FLotranstMinus$ is in $\TwoNExpTime$.
	\label{lma:FL1transoUpper}
\end{lemma}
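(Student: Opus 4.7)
The plan is to combine the three preceding lemmas in the obvious way, and then simply count super-types.

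First, I would apply Lemma~\ref{lma:nf} to $\phi$, obtaining a set $\Psi$ of basic $\FLotranstMinus$-formulas of size polynomial in $\sizeof{\phi}$ which is equisatisfiable with $\phi$ and whose conjunction entails $\phi$. Since $\phi$ is satisfiable, so is $\Psi$; pick any model $\fA \models \Psi$. By Lemma~\ref{lma:correct}, the certificate $C(\fA) = (S, \ll)$ satisfies $C(\fA) \models \psi$ for every $\psi \in \Psi$. Then Lemma~\ref{lma:complete} produces a structure $\fB$ on a domain of size $2|S|$ such that $\fB \models \psi$ for all $\psi \in \Psi$, and hence, by property~(i) of Lemma~\ref{lma:nf}, $\fB \models \phi$.

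Next, I would bound $|S|$. Letting $\Sigma$ denote the unary signature of $\phi$ (we are assuming no $0$-ary predicates, as noted at the start of Sec.~\ref{sec:onetrans}), the number of $1$-types over $\Sigma$ is $2^{|\Sigma|}$, so the number of super-types $\langle \pi, \Pi\rangle$ is at most $2^{|\Sigma|} \cdot 2^{2^{|\Sigma|}} \leq 2^{2^{|\Sigma|} + |\Sigma|}$. Since $|\Sigma| \leq \sizeof{\phi}$, the size of $\fB$ is doubly exponential in $\sizeof{\phi}$, giving the model-size bound.

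For the complexity upper bound, I would give a straightforward nondeterministic procedure: on input $\phi$, compute $\Psi$ in polynomial time, then guess a certificate $C = (S, \ll)$, which, by the counting above, can be written down in doubly exponential space. Verifying that $C$ really is a certificate (transitivity of $\ll$ and conditions~({\bf C1}), ({\bf C2})) and that $C \models \psi$ for each $\psi \in \Psi$ according to the six clauses in the definition, is a routine polynomial-time check in $|C| + |\Psi|$, hence doubly exponential in $\sizeof{\phi}$. Combined with Lemmas~\ref{lma:correct} and~\ref{lma:complete}, the existence of such a $C$ is equivalent to the satisfiability of $\Psi$, and hence of $\phi$. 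No real obstacle is anticipated here: all the conceptual work is already carried out in the three preceding lemmas, and what remains is the cardinality count for super-types together with the observation that certificate-checking is polynomial in the certificate's description.
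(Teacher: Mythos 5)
Your proof is correct and, for the model-size bound, follows exactly the paper's route: normalize via Lemma~\ref{lma:nf}, pass from a model of $\Psi$ to its certificate (Lemmas~\ref{lma:isaCerificate} and~\ref{lma:correct}), rebuild a model of size $2|S|$ via Lemma~\ref{lma:complete}, and bound $|S|$ by the number $2^{2^{|\Sigma|}+|\Sigma|}$ of super-types. The only divergence is in the decision procedure: the paper simply guesses a doubly exponential-sized structure and model-checks it in time $O(\sizeof{\phi}\cdot|A|^2)$, whereas you guess a certificate and verify conditions ({\bf C1}), ({\bf C2}) and the six clauses of $C\models\psi$; both are sound (your equivalence between satisfiability of $\Psi$ and existence of a suitable certificate is exactly what Lemmas~\ref{lma:correct} and~\ref{lma:complete} deliver), and both run in nondeterministic doubly exponential time, so nothing is lost either way.
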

\begin{proof}

A structure $\fA$ can be guessed and verified to be a model of any $m$-variable first-order formula $\phi$ in time $O(|\phi|\cdot|A|^m)$~\cite{purdyTrans:vardi95}.
\end{proof}

\subsection{The logics $\FLotransm$ for $m \geq 2$}
Let $\Sigma$ be a signature of predicates of positive arity, excluding $\ft$. 
An atomic formula of $\FLotransm$ involving a predicate from $\Sigma \cup \set{\ft}$
will be called a \textit{fluted $m$-atom} \textit{over} $\Sigma \cup \set{\ft}$. 
A {\em fluted $m$-literal} is a fluted $m$-atom or the negation thereof. A {\em fluted $m$-clause} is a disjunction of fluted $m$-literals. We allow the absurd formula $\bot$ (i.e.~the empty disjunction) to count as a fluted $m$-clause. Thus, any literal of a fluted $m$-clause has arguments $x_h, \dots, x_m$, in that order, for some $h$ ($1 \leq h \leq m)$.
When writing fluted $m$-clauses, we silently remove bracketing, re-order literals and delete duplicated literals as necessary. 
A \textit{fluted $m$-type} is a maximal consistent set of fluted $m$-literals;
where convenient, we identify fluted $m$-types with their conjunctions. 
If $\fA$ is a structure interpreting $\Sigma \cup \set{\ft}$, any tuple $a_1, \dots,
a_m$ from $A$ satisfies a unique fluted $m$-type; we denote it $\ftp^\fA[a_1, \dots, a_m]$.
Note that a fluted
1-type  over $\Sigma \cup \set{\ft}$ coincides with what we earlier called a 1-type over $\Sigma$.
Reference to the signature $\Sigma \cup \set{\ft}$
will as usual be suppressed when clear from context. Predicates in $\Sigma$ will be referred to as {\em non-distinguished}. Our strategy will be to reduce the (finite) satisfiability problem for 
$\FLotransm$ to that for $\FLotranst$
(Lemma~\ref{lma:eliminate}), and thence to
that for $\FLotranstMinus$ (Lemma~\ref{lma:eliminateNonTrans}), which 
we have already dealt with (Lemma~\ref{lma:FL1transoUpper}).
\nb{II: slightly elaborated}

A $\FLotransm$-formula $\phi$ ($m \geq 2$) is in \textit{clause normal form} 
if it is of the form
\begin{equation}
\forall x_1 \cdots x_{m} . \Omega \ \wedge
{\bigwedge_{i=1}^{s} }\forall x_1 \cdots x_{m-1} \left(\alpha_i \rightarrow \exists x_m . \Gamma_i\right) \wedge
{\bigwedge_{j=1}^{t}} \forall x_1 \cdots x_{m-1} (\beta_j \rightarrow \forall x_m . \Delta_j),
\label{eq:cnf}
\end{equation}
where $\Omega, \Gamma_1, \dots, \Gamma_s, \Delta_1, \dots, \Delta_t$ are sets of fluted $m$-clauses,
and $\alpha_1, \dots, \alpha_s$, $\beta_1, \dots, \beta_t$ fluted \mbox{$(m-1)$}-atoms. We refer to $\forall x_1 \cdots x_{m} . \Omega$ as the \textit{static conjunct}
of $\phi$, to conjuncts of the form $\forall x_1 \cdots x_{m-1} \left(\alpha_i \rightarrow \exists x_m \Gamma_i\right)$ as the \textit{existential conjuncts}
of $\phi$, and to conjuncts of the form $\forall x_1 \cdots x_{m-1} (\beta_j \rightarrow \forall x_m . \Delta_j)$ as the \textit{universal conjuncts} of $\phi$. 

Using the same techniques as for Lemma~\ref{lma:nf}, we can transform any $\FLotransm$-formula into clause normal form.
\begin{lemma}
	Let $\phi$ be an $\FLotransm$-formula, $m \geq 2$. There exists an
	$\FLotransm$-formula $\psi$ in clause normal form such that:
	\textup{(}i\textup{)} $\models \psi \rightarrow \phi$; and \textup{(}ii\textup{)} if $\phi$ has a model then so has $\psi$;
	\textup{(}iii\textup{)} 
	$\sizeof{\psi}$ is bounded by a polynomial function of $\sizeof{\phi}$.
	\label{lma:cnf}
\end{lemma}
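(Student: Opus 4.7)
The plan mirrors the argument behind Lemma~\ref{lma:nf} (Scott-style normal form), adapted to the fluted setting of $\FLotransm$. The strategy is to rename quantified subformulas inside-out by fresh predicates, then distribute to clauses.

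\textbf{Step 1 (renaming).} First rewrite $\phi$ in negation normal form; flutedness is preserved since $\FL$ is closed under negation, and the size grows by only a constant factor. Then, while $\phi$ contains a quantified subformula $\chi = Q x_j.\theta$ that is not the outermost quantifier, proceed as follows. By the inductive definition of $\FL$, the free variables of $\chi$ form a (possibly empty) contiguous suffix $x_k, \ldots, x_{j-1}$ of $x_1, \ldots, x_{j-1}$. Introduce a fresh predicate $P_\chi$ of arity $j-k$, replace $\chi$ in $\phi$ by the fluted atom $P_\chi(x_k, \ldots, x_{j-1})$, and add the definitional conjunct $\forall x_k \cdots x_{j-1}(P_\chi(x_k, \ldots, x_{j-1}) \leftrightarrow \chi)$. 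Since the freshly introduced atom and $\chi$ share exactly the same free variables forming the required suffix, the definition is itself a fluted sentence. Iterate until every conjunct has quantifier depth at most two.

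\textbf{Step 2 (clausification).} After Step~1, $\phi$ is a conjunction of fluted sentences, each of the form $\forall x_1 \cdots x_{j}.\eta$ or $\forall x_1 \cdots x_{j-1}(\eta \leftrightarrow Q x_j. \eta')$ for some $j \leq m$, with $\eta$ and $\eta'$ quantifier-free. Split each biconditional into its two implications, and push negations inward. The implications whose consequents begin with $\forall$ are handled directly. For implications whose consequents begin with $\exists$, one carries out a further renaming inside the matrix: any conjunction of disjunctions occurring under $\exists x_j$ can be named by a fresh predicate, and its definition pushed back into the universal part, so that the existential body becomes a single conjunction of fluted $m$-clauses, as required by~\eqref{eq:cnf}. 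Finally, normalize all conjuncts to bind exactly $x_1, \ldots, x_{m-1}$ universally (padding with dummy variables guarded by trivially true atoms as needed) and put the remaining matrix in CNF.

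\textbf{Main obstacle and conclusion.} The crux is to ensure that every freshly introduced atom is \emph{fluted}, i.e.\ its arguments form a contiguous suffix of $\bar x_\omega$. This is automatic at every renaming step because the free variables of the replaced subformula already have this form, by the syntactic definition of $\FL$. Correctness, i.e.\ parts (i)–(ii) of the statement, is immediate from the biconditional definitions and the usual properties of CNF/renaming. For part~(iii), each subformula of $\phi$ triggers at most one renaming, and each renaming contributes a conjunct of size polynomial in $\sizeof{\phi}$; since $\phi$ has only linearly many subformulas, the resulting $\psi$ has size polynomial in $\sizeof{\phi}$.
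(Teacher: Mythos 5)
Your overall strategy---inside-out renaming of quantified subformulas by fresh predicates followed by clausification---is exactly the technique the paper invokes (it proves Lemma~\ref{lma:cnf} by appeal to ``the same techniques as for Lemma~\ref{lma:nf}''), and most of your write-up is sound, including the key observation that every freshly introduced atom is automatically fluted because the free variables of the replaced subformula already form a contiguous suffix. However, there is one step that would fail as literally described. You keep full biconditionals $\forall\bar x\,(P_\chi \leftrightarrow Qx_j.\eta')$ and claim that after splitting, ``the implications whose consequents begin with $\forall$ are handled directly.'' The reverse implication of an existential definition is $(\exists x_j.\eta') \rightarrow P_\chi$, i.e.\ $\neg P_\chi(x_k,\dots,x_{j-1}) \rightarrow \forall x_j\,\neg\eta'$, whose antecedent is a \emph{negated} atom; but the target form~\eqref{eq:cnf} requires the guards $\alpha_i,\beta_j$ to be fluted $(m-1)$-\emph{atoms}, and the literal $\neg P_\chi$ cannot instead be absorbed into the matrix $\Delta_j$, because every literal of a fluted $m$-clause must have argument suffix ending in $x_m$, which $\neg P_\chi(x_k,\dots,x_{j-1})$ does not. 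The repair is standard and cheap: since you have already put $\phi$ in negation normal form, every renamed subformula occurs positively, so the one-directional definitions $\forall\bar x\,(P_\chi \rightarrow \chi)$ already suffice for both (i) and (ii) (monotonicity gives $\models \psi \rightarrow \phi$, and interpreting $P_\chi$ as $\chi$ expands any model of $\phi$ to one of $\psi$); alternatively, introduce complement predicates $\bar P_\chi$ together with the shifted static clause $\forall x_1\cdots x_m\,(P_\chi(x_h,\dots,x_m) \vee \bar P_\chi(x_h,\dots,x_m))$ and use $\bar P_\chi$ as the guard.

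Two smaller points deserve one sentence each in a careful write-up. First, the ``padding'' must be done by renaming bound variables \emph{upward} so that the innermost quantified variable becomes $x_m$ and then prepending vacuous outer universal quantifiers $\forall x_1\cdots$; padding at the inner end is impossible, again because fluted $m$-clauses require every literal to end in $x_m$. Second, for part (iii) the final ``put the remaining matrix in CNF'' must be structure-preserving (i.e.\ continue the renaming down into the Boolean structure of the matrices); a naive CNF conversion can be exponential, which would break the polynomial bound.
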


For fragments of first-order logic not involving equality, we are free to duplicate any element 
$a$ in a structure $\fA$. More formally, we have the following lemma,
which will be used as a step in the ensuing argument.
\begin{lemma}
	Let $\fA$ be any structure, and let $z >0$. There exists a structure $\fB$ such that \textup{(}i\textup{)} if $\phi$ is any first-order formula without equality, then	$\fA \models \phi$ if and only if $\fB \models \phi$;  \textup{(}ii\textup{)} 
	$|B| = z \cdot |A|$; and \textup{(}iii\textup{)}
	if $\psi(x_1, \dots, x_{m-1}) = \exists x_m . \chi(x_1, \dots, x_m)$ is a first-order formula without equality, and $\fB \models \psi[b_1, \dots, b_{m-1}]$, then there exist at least $z$ distinct elements $b$ of $B$ such that $\fB \models \chi[b_1, \dots, b_{m-1}, b]$. 
	\label{lma:multiply}
\end{lemma}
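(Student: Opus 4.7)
The plan is to define $\fB$ as $z$ disjoint ``copies'' of $\fA$ that are indistinguishable in the equality-free setting. Specifically, take $B = A \times \{1, \ldots, z\}$, and for every $k$-ary predicate $R$ in the signature, set
$$R^\fB = \{\langle (a_1,i_1), \ldots, (a_k,i_k)\rangle : \langle a_1,\ldots,a_k\rangle \in R^\fA,\ i_1,\ldots,i_k \in \{1,\ldots,z\}\}.$$
Thus, the interpretation in $\fB$ depends only on the first component of each tuple. This immediately gives (ii).

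The core step is to show, by a routine induction on the structure of equality-free first-order formulas, that for any such formula $\phi(x_1, \ldots, x_m)$, any $a_1, \ldots, a_m \in A$, and any $i_1, \ldots, i_m \in \{1,\ldots,z\}$,
$$\fA \models \phi[a_1, \ldots, a_m] \iff \fB \models \phi[(a_1,i_1), \ldots, (a_m,i_m)].$$
The atomic case holds by the definition of the $R^\fB$; the Boolean cases are immediate. For the quantifier cases, the forward direction supplies a witness $a \in A$ and we pick any copy $(a,j)$; for the backward direction, a witness $(a,j) \in B$ projects back to $a \in A$. Taking the formula to be a sentence yields (i). This inductive argument is the only real work, and it is entirely standard, relying crucially on the absence of equality (since $(a,1)$ and $(a,2)$ would otherwise be distinguishable).

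For (iii), suppose $\fB \models \psi[b_1, \ldots, b_{m-1}]$ with $b_j = (a_j,i_j)$. By the equivalence just proved, $\fA \models \psi[a_1, \ldots, a_{m-1}]$, so there exists $a \in A$ with $\fA \models \chi[a_1, \ldots, a_{m-1}, a]$. Applying the equivalence in the other direction, for every $i \in \{1, \ldots, z\}$ we obtain $\fB \models \chi[b_1, \ldots, b_{m-1}, (a,i)]$. The $z$ elements $(a,1), \ldots, (a,z)$ are distinct, giving the required witnesses. No step here poses a genuine obstacle; the only subtlety worth flagging in the writeup is why equality must be excluded, which is precisely what makes the distinct copies of a single $a$ satisfy exactly the same open formulas.
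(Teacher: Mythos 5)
Your proof is correct and is exactly the standard ``inflation'' argument the paper has in mind (it states the lemma without proof, remarking only that one is free to duplicate elements in the absence of equality): $z$ indistinguishable copies of each element, predicates interpreted via projection, and a routine induction showing equality-free formulas cannot separate the copies. All three claims (i)--(iii) follow as you describe, so nothing is missing.
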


Keeping the signature
$\Sigma$ fixed, we employ the standard apparatus of resolution theorem-proving to eliminate
non-distinguished predicates of arity 2 or more.
Suppose $p \in \Sigma$ is a 
predicate of arity
$m$, and let $\gamma'$ and $\delta'$ be fluted $m$-clauses
over $\Sigma$. Then, $\gamma = p(x_1, \dots, x_m) \vee \gamma'$ and $\delta = \neg p(x_1, \dots, x_m) \vee \delta'$ are also fluted $m$-clauses, as indeed is $\gamma' \vee \delta'$. In that case, we call $\gamma' \vee \delta'$ a {\em fluted resolvent} of $\gamma$ and $\delta$, and we say that $\gamma' \vee \delta'$ is {\em obtained by fluted resolution} from $\gamma$ and $\delta$ \textit{on} $p(x_1, \dots, x_m)$. Thus, fluted resolution 
is simply a restriction of the familiar resolution rule from first-order logic to the case where the resolved-on literals have maximal arity, $m$,
and (in the case $m =2$) do not feature the distinguished predicate $\ft$. It may be helpful to note the following at this point: 
(i) if $\gamma$ and $\delta$ resolve to form $\epsilon$, then $\models \forall x_1 \cdots \forall x_m (\gamma \wedge \delta \rightarrow \epsilon)$;
(ii) the fluted resolvent of two fluted $m$-clauses may or may not
involve predicates of arity $m$;
(iii) in fluted resolution, the arguments of the literals in the fluted $m$-clauses undergo no change when forming the resolvent;
(iv) if the fluted $m$-clause $\gamma$ involves no predicates
of arity $m$, then it cannot undergo fluted resolution at all.

If $\Gamma$ is a set of fluted $m$-clauses, denote by $\Gamma^*$ the smallest set of fluted $m$-clauses including $\Gamma$ and closed under fluted resolution. If $\Gamma = \Gamma^*$, we say that it is \textit{closed under fluted resolution}. We further denote by $\Gamma^\circ$ the result of deleting
from $\Gamma^*$ any clause involving a \textit{non-distinguished} predicate of arity $m$. Observe that, since
all fluted $m$-atoms involving predicates of {\em non}-maximal arity are of the form $p(x_h, \dots, x_m)$ for some $h \geq 2$,
it follows that
$\Gamma^\circ$ features the variable $x_1$ {\em only} in the case $m=2$, and even then 
only in literals of the form $\pm \ft(x_1,x_2)$.

The following lemma is, in effect, nothing more than the familiar completeness theorem for (ordered) propositional resolution. 
 Due to space limits the proof is given in Section~\ref{app:resolution}.
\begin{lemma}
	Let $\Gamma$ be a set of fluted $m$-clauses over a signature $\Sigma \cup \set{\ft}$, let $\Sigma'$ be the result of removing all predicates
	of maximal arity $m$  from $\Sigma$, and let $\tau^-$ be a fluted $m$-type over $\Sigma' \cup \set{\ft}$.  
	If $\tau^-$ is consistent with $\Gamma^\circ$, then there exists a fluted $m$-type $\tau$ over the signature $\Sigma\cup \set{\ft}$ 
	such that $\tau \supseteq \tau^-$ and $\tau$ is consistent with $\Gamma$.
	\label{lma:resolution}
\end{lemma}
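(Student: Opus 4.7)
My plan is to reduce this statement to the classical completeness theorem for propositional resolution, exploiting the crucial observation that fluted resolution never modifies argument tuples. The point is that every fluted $m$-atom has its argument sequence determined by the arity of its predicate: a $k$-ary predicate $p$ can occur in a fluted $m$-literal only in the form $p(x_{m-k+1}, \ldots, x_m)$, since every literal of a fluted $m$-clause must end at $x_m$. Hence we can regard the set of all fluted $m$-atoms over $\Sigma \cup \{\ft\}$ as a finite set of propositional variables, partitioned as $P \cup Q$, where $Q$ consists of the variables arising from non-distinguished predicates of maximal arity $m$ (the only literals on which fluted resolution is permitted), and $P$ consists of the remaining atoms (those involving $\ft$ or a predicate in $\Sigma'$). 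Under this encoding, each fluted $m$-clause becomes a propositional clause, fluted resolution becomes ordinary propositional resolution on a $Q$-variable, the fluted $m$-type $\tau^-$ corresponds to a total truth assignment $\hat\tau^-$ on $P$, and $\Gamma^\circ$ corresponds to the sub-theory $\hat\Gamma^\circ$ of $\hat\Gamma^*$ consisting of those derived clauses in which no $Q$-variable occurs.

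The whole lemma then becomes an instance of the standard fact about variable elimination by resolution: a $P$-assignment $\hat\tau^-$ extends to a $(P \cup Q)$-assignment satisfying $\hat\Gamma$ if and only if $\hat\tau^-$ satisfies $\hat\Gamma^\circ$. I would prove this by contraposition. Assume $\hat\tau^-$ admits no extension satisfying $\hat\Gamma$. Then the restricted propositional theory $\hat\Gamma|_{\hat\tau^-}$, obtained by fixing each $P$-literal according to $\hat\tau^-$, deleting already-true clauses and already-false literals, is an unsatisfiable propositional theory whose only variables lie in $Q$. Standard propositional resolution completeness then yields a derivation of $\bot$ from $\hat\Gamma|_{\hat\tau^-}$ using only resolution steps on $Q$-variables.

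The heart of the argument is to \emph{lift} this derivation back to $\hat\Gamma^*$. Every clause $C' = C|_{\hat\tau^-}$ in the restricted derivation comes from a clause $C \in \hat\Gamma^*$ obtained by restoring the $P$-literals suppressed by restriction. Whenever two clauses $C_1', C_2'$ are resolved on a $Q$-literal $q$, their preimages $C_1, C_2 \in \hat\Gamma^*$ still contain $q$ and $\neg q$ respectively (since restriction leaves $Q$-literals untouched), so they fluted-resolve to a clause $C \in \hat\Gamma^*$ whose restriction is the propositional resolvent. Iterating until the restricted derivation produces $\bot$ yields a clause $D \in \hat\Gamma^*$ that contains no $Q$-literal at all and whose $P$-literals are all falsified by $\hat\tau^-$. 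The first condition places $D$ in $\hat\Gamma^\circ$, and the second contradicts the assumed consistency of $\tau^-$ with $\Gamma^\circ$. Translating back through the encoding, we recover the desired fluted $m$-type $\tau \supseteq \tau^-$ that is consistent with $\Gamma$.

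The main obstacle I anticipate is a careful bookkeeping in the lifting step: one needs to verify by induction on the length of the propositional derivation that the inductively-constructed clauses in $\hat\Gamma^*$ really do end up with no remaining $Q$-literals when the underlying derivation terminates in $\bot$. This follows from the observation that any $Q$-literal surviving in some lifted $C_i$ must also occur in $C_i|_{\hat\tau^-}$, and therefore be eliminated at a subsequent step of the propositional derivation, and hence also at the matching step of the lifted derivation. No new literals of non-maximal arity can be introduced along the way, because fluted resolution never alters argument tuples, so the property of belonging to $\hat\Gamma^\circ$ is forced on the final clause.
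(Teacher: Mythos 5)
Your proof is correct, and it rests on the same basic observation as the paper's: because every literal of a fluted $m$-clause has its argument tuple determined by its predicate's arity, the whole problem propositionalizes, with fluted resolution becoming ordinary propositional resolution on the atoms of maximal-arity non-distinguished predicates. Where you diverge is in how the completeness of resolution is deployed. The paper argues directly and semantically: it enumerates the maximal-arity atoms $\rho_1, \dots, \rho_n$ and extends $\tau^-$ one atom at a time, using the single observation that if both $\tau' \wedge \rho_{i+1}$ and $\tau' \wedge \neg\rho_{i+1}$ violate some clause of $\Gamma^*$, then $\tau'$ already violates their fluted resolvent, which lies in $\Gamma^*$ by closure; since $\tau^-$ does not violate $\Gamma^\circ$ (and hence not $\Gamma^*$, as any clause it violates contains no maximal-arity non-distinguished atoms), a non-violating complete type $\tau$ exists by induction on the levels. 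You instead argue by contraposition: restrict $\Gamma$ by the partial assignment $\tau^-$, invoke propositional resolution completeness as a black box to refute the restricted set, and lift that refutation into $\Gamma^*$ to produce a clause of $\Gamma^\circ$ violated by $\tau^-$. Both routes are sound. The paper's version is more self-contained, re-proving exactly the instance of completeness it needs and avoiding the bookkeeping of a lifting argument; yours makes the connection to Davis--Putnam variable elimination explicit and isolates cleanly the one fact that makes the lift work, namely that restriction never touches the resolved-on literals, so propositional resolvents of restricted clauses are restrictions of fluted resolvents (and the resolvent is never deleted by restriction, since its $P$-literals all come from clauses that survived restriction). The obstacle you flag at the end is in fact a non-issue: since restriction never deletes $Q$-literals, a lifted clause whose restriction is $\bot$ can contain no $Q$-literal at all, so it lies in $\Gamma^\circ$ immediately, with no further induction required.
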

The following lemma employs a technique from~\cite{purdyTrans:P-H18}
to eliminate binary predicates.
\begin{lemma}
	Let $\phi$ be an $\FLotranst$-formula in clause normal form
	over a signature $\Sigma\cup \{\ft\}$, and suppose that $\phi$ has $s$ existential
	and $t$ universal conjuncts. Then there exists a clause normal form $\FLotranstMinus$-formula $\phi'$ over a signature $\Sigma'\cup \{\ft\}$ such that:
	\textup{(}i\textup{)} $\phi'$ has at most $2^{t} s$ existential and $2^{t}$ universal conjuncts;
	\textup{(}ii\textup{)} $|\Sigma'| \leq |\Sigma| + 2^t(s+1)$;
	\textup{(}iii\textup{)} if $\phi$ has a model, so does $\phi'$; and
	\textup{(}iv\textup{)} if $\phi'$ has a model of size $M$, then $\phi$ has a model of size at most $sM$.
	\label{lma:eliminateNonTrans}
\end{lemma}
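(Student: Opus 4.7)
The strategy is to saturate $\phi$ under fluted resolution on non-distinguished binary atoms, and then encode each required witness by a fresh unary predicate, so that the remaining constraints on pairs can be expressed using only unary predicates and $\ft$. Write $\Gamma$ for the set of universal $2$-clauses implicit in $\phi$: the static clauses $\Omega$, together with $\neg\beta_j\vee\delta$ for each $\delta\in\Delta_j$. Let $\Gamma^\circ$ denote the projection to non-distinguished-binary-free clauses of the fluted resolution closure, as in Lemma~\ref{lma:resolution}. Introduce fresh unary predicates as follows: for each $K\subseteq\{1,\ldots,t\}$ a predicate $Q_K$ marking elements whose precondition profile $\{j:\beta_j\text{ holds}\}$ is exactly $K$ (giving $2^t$ predicates), and for each pair $(i,K)$ with $1\le i\le s$ and $K\subseteq\{1,\ldots,t\}$ a witness predicate $P_{i,K}$ (giving $s\cdot 2^t$ predicates). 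This already yields $|\Sigma'|\le|\Sigma|+2^t(s+1)$, settling~(ii).

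The formula $\phi'$ is then assembled from a static conjunct that enforces that each element satisfies exactly one $Q_K$ (synchronised with the literals $\beta_j$), encodes the unary and $\ft$-literal content of $\Gamma_i$ on witnesses via implications $P_{i,K}(x_2)\to\mu_{i,K}(x_2)$ and $Q_K(x_1)\wedge P_{i,K}(x_2)\to \pm\ft(x_1,x_2)$ as dictated by $\Gamma_i$, and includes every clause in $\Gamma^\circ$. For each $(i,K)$ we add the existential conjunct $\forall x_1(Q_K(x_1)\wedge\alpha_i(x_1)\to\exists x_2(P_{i,K}(x_2)\wedge\lambda_{i,K}(x_1,x_2)))$, where $\lambda_{i,K}$ collects the $\ft$-literals on $(x_1,x_2)$ forced by $\Gamma_i$ when $x_1$ has type $K$; and for each $K$ we add one universal conjunct asserting that, when $Q_K(x_1)$ holds, the unary-and-$\ft$ part of $\bigwedge_{j\in K}\Delta_j$ holds for all $x_2$. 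This gives at most $2^ts$ existential and $2^t$ universal conjuncts, settling~(i).

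For (iii), a model $\fA\models\phi$ is expanded to $\fA'\models\phi'$ by interpreting $Q_K$ and $P_{i,K}$ in the natural way: mark each element with its true precondition profile, and for each source element and each applicable existential requirement select one satisfying witness. For (iv), given $\fA'\models\phi'$ of size $M$, apply Lemma~\ref{lma:multiply} with $z=s$ to obtain $\fB$ of size $sM$ in which every existential of $\phi'$ has at least $s$ distinct witnesses. Build $\fA\models\phi$ on the universe of $\fB$, inheriting all unary predicates and $\ft$; then, for each element $a$ of type $K$ and each index $i$ with $\alpha_i(a)$, pick one $P_{i,K}$-witness $b_i$ (distinct across $i$, which is possible since there are $s$ copies of each witness), and use Lemma~\ref{lma:resolution} to extend the fluted $2$-type of $(a,b_i)$ over $\Sigma'\cup\{\ft\}$ to a full fluted $2$-type over $\Sigma\cup\{\ft\}$ that is consistent with $\Gamma_i\cup\Gamma$. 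For every remaining pair, apply Lemma~\ref{lma:resolution} again to extend the existing fluted $2$-type over $\Sigma'\cup\{\ft\}$ to one consistent with $\Gamma^\circ$, hence with $\Gamma$. Transitivity of $\ft$ is preserved because $\ft$ itself is untouched.

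The main obstacle is ensuring that the 2-types independently assigned to the designated witness pairs and to all other pairs are jointly compatible with the universal conjuncts of $\phi$. This reduces to verifying that $\Gamma^\circ$, together with the unary constraints we have baked into $\phi'$ through the $P_{i,K}$ and $Q_K$ predicates, is strong enough to imply every universal consequence of $\phi$ expressible over $\Sigma'\cup\{\ft\}$; this is precisely the content of Lemma~\ref{lma:resolution}. The factor $s$ in the size bound is unavoidable because a single element of $\fA'$ may need to play the role of witness for several source elements' distinct existential requirements simultaneously, and only after multiplying do we have enough distinct witnesses whose $2$-types with each source can be assigned independently without clashes.
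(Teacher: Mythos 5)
Your overall strategy is the paper's: close the relevant clause sets under fluted resolution, project away the non-distinguished binary predicates, introduce $2^t(s+1)$ fresh unary predicates indexed by subsets of the universal-conjunct indices, and recover a model of $\phi$ from a model of $\phi'$ by $s$-fold witness duplication (Lemma~\ref{lma:multiply}) followed by type extension (Lemma~\ref{lma:resolution}). The counting for (i)--(ii) and essentially all of the argument for (iv) match the paper. However, your construction of $\phi'$ has a genuine gap in how the existential requirement sets $\Gamma_i$ are encoded. The set $(\Gamma_i\cup\Omega\cup\bigcup\{\Delta_j\mid j\in J\})^\circ$ consists of \emph{clauses}: disjunctions that may mix unary literals on $x_2$ with $\pm\ft(x_1,x_2)$, such as $\neg u(x_2)\vee\ft(x_1,x_2)$. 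Such a clause does not ``dictate'' any particular unary content $\mu_{i,K}$ of the witness, nor any ``forced'' $\ft$-literals $\lambda_{i,K}$; your decomposition into $P_{i,K}(x_2)\rightarrow\mu_{i,K}(x_2)$ plus a conjunction $\lambda_{i,K}$ of $\ft$-literals simply cannot express it. The paper avoids this by leaving the \emph{entire} projected clause set under the existential quantifier --- the conjunct is $\forall x_1\bigl(p_{i,J}(x_1)\rightarrow\exists x_2\,\bigl(\Gamma_i\cup\Omega\cup\bigcup\{\Delta_j\mid j\in J\}\bigr)^\circ\bigr)$, which is a legitimate $\FLotranstMinus$ existential conjunct because after projection only unary literals and $\pm\ft(x_1,x_2)$ survive. (Note also that the combined predicate is placed on the \emph{source}, so the guard stays a single atom; your guard $Q_K(x_1)\wedge\alpha_i(x_1)$ is not in clause normal form, though that is cosmetic.)

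The second problem is your static conjunct $Q_K(x_1)\wedge P_{i,K}(x_2)\rightarrow\pm\ft(x_1,x_2)$. This relates \emph{every} $Q_K$-element to \emph{every} $P_{i,K}$-marked element, whereas $\Gamma_i$ constrains only a source and \emph{its own} witness. Two sources $a,a'$ of the same profile $K$ will in general have distinct witnesses $b,b'$, and nothing in $\phi$ forces $\ft(a,b')$; so the natural expansion of a model of $\phi$ violates this conjunct, and direction (iii) is no longer established (you would need to exhibit some other model of $\phi'$, which you do not). Dropping this conjunct and moving all pairwise content into the existentially quantified projected clause set, as above, repairs both issues and recovers the paper's proof. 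A smaller point: your global clause set $\Gamma$ containing guarded clauses $\neg\beta_j\vee\delta$ is not a set of fluted $2$-clauses in the paper's sense (the literal $\neg\beta_j$ has argument $x_1$ only, not a suffix ending in $x_2$), so Lemma~\ref{lma:resolution} does not apply to it verbatim; the paper instead forms, for each $J$, the unguarded union $\Omega\cup\bigcup\{\Delta_j\mid j\in J\}$ and projects that.
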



\begin{proof}
Let $\phi = \forall x_1 x_{2} . \Omega \ \wedge
{\bigwedge_{i=1}^{s} }\forall x_1 \left(p_i(x_1) \rightarrow \exists x_2 . \Gamma_i\right) \wedge
{\bigwedge_{j=1}^{t}} \forall x_1 (q_j(x_1) \rightarrow \forall x_2 . \Delta_j)$,
%
where $\Omega, \Gamma_1, \dots, \Gamma_s, \Delta_1, \dots, \Delta_t$ are sets of fluted 2-clauses,
and $p_1, \dots, p_s$, $q_1, \dots, q_t$ unary predicates. Write $T = \{1, \dots, t\}$.
For all $i$ ($1 \leq i \leq s$) and all $J \subseteq T$, let $p_{i,J}$ and $q_J$ be new unary
predicates.
The intended interpretation of $p_{i,J}(x_1)$ is ``$x_1$ satisfies $p_i$, and also satisfies $q_j$ for every $j \in J$;'' and
the intended interpretation of $q_{J}(x_1)$ is ``$x_1$ satisfies $q_j$ for every $j \in J$.''
Let $\phi'$ be the conjunction of the sentences:\\ 
\hspace*{1em}(a) $	\bigwedge_{i=1}^{s} \bigwedge_{J \subseteq T} \forall x_2
((p_i(x_2) \wedge \bigwedge_{j \in J} q_j(x_2))  \rightarrow p_{i,J}(x_2))$, \\ 
\hspace*{1em}(b) $\bigwedge_{J \subseteq T}\forall x_2
((\bigwedge_{j \in J} q_j(x_2)) \rightarrow q_{J}(x_2))$,\\ 
\hspace*{1em}(c) $\bigwedge_{i=1}^{s} \bigwedge_{J \subseteq T} \forall x_1
\left( p_{i,J}(x_1)\rightarrow \exists x_2 \left(\Gamma_i \cup \Omega \cup \bigcup \{\Delta_j \mid j \in J \}\right)^\circ \right) $, and\\ 
\hspace*{1em}(d)
$\bigwedge_{J \subseteq T}\forall x_1 
\left(q_{J}(x_1) \rightarrow \forall x_2 \left(\Omega \cup \bigcup \{\Delta_j \mid j \in J \}\right)^\circ \right).$\\
%
%
%
%
Observe that $\phi'$ contains no non-distinguished binary predicates, and hence is in $\FLotranstMinus$.
Clearly, $\phi'$ satisfies properties (i) and (ii). To show (iii), suppose $\fA \models \phi$, and 
let $\fA'$ be the structure obtained by interpreting the predicates $p_{i,J}$ and $q_J$ as suggested above. To show (iv), suppose $\phi'$ has a model of size $M$. By Lemma~\ref{lma:multiply},
$\phi'$ has a model $\fB$ of size $sM$ in which witnesses for all the conjuncts in (c) are duplicated $s$ times. We need to show that $\fB$ can be expanded to a model of $\phi$. Fix $a \in B$ and suppose
$a$ satisfies $p_1$. Let $J$ be the set of indices $j$ such that $a$ satisfies $q_j$. By (a), putting $i = 1$,
$a$ satisfies $p_{1,J}$, whence, by (c), there exists $b$ such that the pair $\langle a,b\rangle$
satisfies $\left( \Gamma_1 \cup \Omega \cup \bigcup \{\Delta_j \mid j \in J \}\right)^\circ$. But 
Lemma~\ref{lma:resolution} guarantees that we can expand $\fB$ by interpreting the non-distinguished 
binary predicates so that $\langle a,b\rangle$  satisfies 
$\Gamma_1 \cup \Omega \cup \bigcup \{\Delta_j \mid j \in J \}$. Because of the duplication of witnesses, we 
can repeat with $p_2, \dots, p_s$, choosing a fresh witness each time, so as to avoid clashes.
Do this for all elements $a$. At the end of the process, the partially defined expansion of $\fB$ satisfies all the existential conjuncts of $\phi$,
and violates none of the universal or static conjuncts. A precisely similar argument shows that we may complete the expansion so that no universal or static conjuncts of $\phi$ are violated. 
See Section~\ref{app:eliminateNonTrans} in the Appendix for details.
\end{proof}

Thus, at the expense of an exponentially larger signature, we have reduced the
(finite) satisfiability problem for $\FLotranst$ to that for $\FLotranstMinus$. 
By Lemmas~\ref{lma:FL1transoUpper} and~\ref{lma:eliminateNonTrans}, we obtain
\begin{lemma}
	Let $\phi$ be a $\FLotranst$-formula. If $\phi$ is satisfiable, then $\phi$ has a model of size at most triply exponential in $\sizeof{\phi}$. Hence the satisfiability problem for $\FLotranst$ is in \ThreeNExpTime.
	\label{lma:FLotranstUpper}
\end{lemma}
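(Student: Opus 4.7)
The plan is to compose the three results of this subsection in a straightforward cascade. Starting from a $\FLotranst$-sentence $\phi$, first apply Lemma~\ref{lma:cnf} to obtain an equisatisfiable clause-normal-form sentence $\psi$ of size polynomial in $\sizeof{\phi}$; in particular, the numbers $s$, $t$ of existential and universal conjuncts of $\psi$, together with the underlying signature $\Sigma$, are all polynomially bounded in $\sizeof{\phi}$.

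Next, I would feed $\psi$ into Lemma~\ref{lma:eliminateNonTrans}, obtaining a $\FLotranstMinus$-sentence $\phi'$ over an enlarged signature $\Sigma'$ with $|\Sigma'| \leq |\Sigma| + 2^{t}(s+1)$, so that $\sizeof{\phi'}$ is bounded by $2^{p(\sizeof{\phi})}$ for some polynomial $p$. Property (iii) of that lemma guarantees that $\phi'$ is satisfiable whenever $\phi$ is. Applying Lemma~\ref{lma:FL1transoUpper} to $\phi'$, we obtain, under the hypothesis that $\phi$ is satisfiable, a model of $\phi'$ of cardinality $M$ at most doubly exponential in $\sizeof{\phi'}$, hence at most triply exponential in $\sizeof{\phi}$. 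Property (iv) of Lemma~\ref{lma:eliminateNonTrans} then transfers this to a model of $\phi$ itself of size at most $sM$, which is still bounded by a triple exponential in $\sizeof{\phi}$.

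For the complexity statement, I would guess such a structure nondeterministically in triply exponential time and verify that it satisfies $\phi$, using the fact (cf.~\cite{purdyTrans:vardi95}, already invoked in the proof of Lemma~\ref{lma:FL1transoUpper}) that model-checking a two-variable first-order formula is performable in time polynomial in the formula size and in the square of the domain size. This places satisfiability of $\FLotranst$ in $\ThreeNExpTime$.

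There is no real technical obstacle here beyond careful arithmetic bookkeeping: all the substantive work has already been done in the preceding three lemmas, and the only point to watch is that the exponential signature blow-up from Lemma~\ref{lma:eliminateNonTrans} composes with the doubly-exponential model bound for $\FLotranstMinus$ from Lemma~\ref{lma:FL1transoUpper} to yield exactly a triply-exponential bound, and that the further factor of $s$ inherited from clause (iv) of Lemma~\ref{lma:eliminateNonTrans} remains absorbed inside this bound.
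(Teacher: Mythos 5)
Your proposal is correct and follows exactly the route the paper intends: the paper states this lemma as an immediate consequence of Lemmas~\ref{lma:FL1transoUpper} and~\ref{lma:eliminateNonTrans} (with Lemma~\ref{lma:cnf} supplying the normal form), and your cascade---normal form, elimination of non-distinguished binary predicates at the cost of an exponential signature blow-up, the doubly exponential model bound for $\FLotranstMinus$, and the transfer back via clause (iv)---is precisely that composition. The bookkeeping you describe (exponential composed with doubly exponential, plus the absorbed factor $s$) is the whole content of the step.
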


We now establish the finite model property for the whole of $\FLotrans$ by eliminating variables
from $\FLotransmpo$, where $m \geq 2$, one at a time. The proof of the following Lemma is similar to the proof of Lemma~\ref{lma:eliminateNonTrans} and has been relegated to the Appendix.
\begin{lemma}
	Let $\phi$ be a clause normal form $\FLotransmpo$-formula \textup{(}$m \geq 2$\textup{)} 
	over a signature $\Sigma\cup \{\ft\}$, and suppose that $\phi$ has $s$ existential conjuncts
	and $t$ universal conjuncts. 
	Then there exists a clause normal form $\FLotransm$-formula $\phi'$ over a signature $\Sigma'\cup \{\ft\}$ such that the following hold:
	\textup{(}i\textup{)} $\phi'$ has at most $2^{t} s$ existential and $2^{t}$ universal conjuncts;
	\textup{(}ii\textup{)} $|\Sigma'| \leq |\Sigma| + 2^t(s+1)$;
	\textup{(}iii\textup{)} if $\phi$ has a model, so does $\phi'$; and
	\textup{(}iv\textup{)} if $\phi'$ has a model of size $M$, then $\phi$ has a model of size at most $sM$.
	\label{lma:eliminate}
\end{lemma}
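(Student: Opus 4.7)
The plan is to imitate the proof of Lemma~\ref{lma:eliminateNonTrans} one level up, trading the elimination of non-distinguished binary predicates for the elimination of the variable $x_{m+1}$. Write $\phi$ in clause normal form as $\forall x_1\cdots x_{m+1}.\Omega \wedge \bigwedge_{i=1}^s \forall x_1\cdots x_m(\alpha_i \rightarrow \exists x_{m+1}.\Gamma_i) \wedge \bigwedge_{j=1}^t \forall x_1\cdots x_m(\beta_j \rightarrow \forall x_{m+1}.\Delta_j)$ over $\Sigma\cup\{\ft\}$, where $\alpha_i,\beta_j$ are fluted $m$-atoms. Set $T = \{1,\dots,t\}$. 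For each $i$ and each $J\subseteq T$ I introduce a fresh predicate $p_{i,J}$ of arity $m$ with intended reading ``$\alpha_i$ holds here and so does $\beta_j$ for every $j\in J$'', and similarly $q_J$ of arity $m$ meaning ``$\beta_j$ holds for every $j\in J$''. The formula $\phi'$ is then the conjunction of four blocks exactly analogous to (a)--(d) of Lemma~\ref{lma:eliminateNonTrans}: definitional implications for $p_{i,J}$ and $q_J$, and then existential and universal conjuncts built from $\bigl(\Gamma_i\cup\Omega\cup\bigcup_{j\in J}\Delta_j\bigr)^\circ$ and $\bigl(\Omega\cup\bigcup_{j\in J}\Delta_j\bigr)^\circ$ respectively. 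The bounds (i) and (ii) follow by counting, and $\phi'$ contains no non-distinguished predicates of arity $m+1$, so it lies in $\FLotransm$.

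For (iii), given $\fA\models\phi$, interpret $p_{i,J}$ and $q_J$ on $\fA$ according to the intended meanings; blocks (a), (b) then hold by construction, and blocks (c), (d) follow because $(\cdot)^\circ$ is a semantic consequence of $(\cdot)$. For (iv), suppose $\phi'$ has a model of size $M$. Apply Lemma~\ref{lma:multiply} with $z=s$ to obtain an equivalent model $\fB$ of size $sM$ in which every existential conjunct in block (c) has at least $s$ distinct witnesses at every $m$-tuple. I then build an expansion of $\fB$ that satisfies $\phi$, by interpreting only the non-distinguished $(m+1)$-ary predicates of $\Sigma$. Fix an $m$-tuple $\bar a=(a_1,\dots,a_m)$ and let $J_{\bar a}=\{j\in T \mid \fB\models\beta_j[\bar a]\}$. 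For each $i$ with $\fB\models\alpha_i[\bar a]$, block (a) yields $\fB\models p_{i,J_{\bar a}}[\bar a]$, so by (c) there are at least $s$ witnesses $b$ with $(\bar a,b)$ satisfying $\bigl(\Gamma_i\cup\Omega\cup\bigcup\{\Delta_j : j\in J_{\bar a}\}\bigr)^\circ$. I allocate one distinct such witness to each existential conjunct at $\bar a$; this is where the duplication factor $s$ is needed, since it ensures that the $(m+1)$-tuples arising as witnesses for different conjuncts of $\phi$ at $\bar a$ are pairwise distinct, so no $(m+1)$-tuple is subjected to two possibly incompatible demands. Lemma~\ref{lma:resolution} (applied at level $m+1$) then lets me extend the fluted $(m+1)$-type of $(\bar a,b)$ over the reduced signature to a fluted $(m+1)$-type over $\Sigma\cup\{\ft\}$ consistent with $\Gamma_i\cup\Omega\cup\bigcup\{\Delta_j : j\in J_{\bar a}\}$. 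For every remaining $(m+1)$-tuple $(\bar a,b')$, block (d) together with Lemma~\ref{lma:resolution} delivers a fluted $(m+1)$-type consistent with $\Omega\cup\bigcup\{\Delta_j : j\in J_{\bar a}\}$.

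The main obstacle, as in Lemma~\ref{lma:eliminateNonTrans}, is confirming that these local extensions glue into a coherent global interpretation: each fresh $(m+1)$-ary predicate is set independently on each $(m+1)$-tuple, the existing interpretations of $\Sigma$-predicates of arity at most $m$ (including $\ft$, whose arity $2 \leq m$) are preserved and enter into $\tau^-$ when invoking Lemma~\ref{lma:resolution}, and the element-duplication from Lemma~\ref{lma:multiply} blocks any collision between distinct existential conjuncts at the same $\bar a$. With that in place, the resulting expansion satisfies every existential conjunct (by the choice of witness and Lemma~\ref{lma:resolution}) and violates no universal or static conjunct (by the remaining-tuple argument), yielding a model of $\phi$ of size $sM$.
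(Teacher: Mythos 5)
There is a genuine gap, and it sits at the heart of the lemma: your $\phi'$ is not an $\FLotransm$-formula. You give the new guard predicates $p_{i,J}$ and $q_J$ arity $m$, reading them as ``$\alpha_i$ (resp.\ the $\beta_j$, $j\in J$) holds at the current $m$-tuple''. The exact analogue of block (c) of Lemma~\ref{lma:eliminateNonTrans} is then
$\forall x_1\cdots\forall x_m\,\bigl(p_{i,J}(x_1,\dots,x_m)\rightarrow\exists x_{m+1}\,\bigl(\Gamma_i\cup\Omega\cup\bigcup_{j\in J}\Delta_j\bigr)^\circ\bigr)$,
which still uses $m+1$ variables: $x_1$ occurs in the guard while $x_{m+1}$ occurs in the body. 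Deleting the non-distinguished predicates of maximal arity controls the \emph{arity} of the symbols occurring in $\phi'$, not the number of \emph{variables}, and membership in $\FL^{m}$ is a constraint on the latter; your inference ``$\phi'$ contains no non-distinguished predicates of arity $m+1$, so it lies in $\FLotransm$'' is a non sequitur. (In Lemma~\ref{lma:eliminateNonTrans} the goal genuinely is to shrink the signature while keeping two variables, which is why copying its pattern verbatim does not transfer.)

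The paper's proof turns on exactly the point you omit: $p_{i,J}$ and $q_J$ are given arity $m-1$, with arguments $x_2,\dots,x_m$, and the \emph{existential} reading ``for some $x_1$, the tuple $x_1,\dots,x_m$ satisfies $\alpha_i$ and the $\beta_j$ for $j\in J$''. The definitional conjuncts $\forall x_1\cdots\forall x_m\bigl(\alpha_i\wedge\bigwedge_{j\in J}\beta_j\rightarrow p_{i,J}(x_2,\dots,x_m)\bigr)$ use only $x_1,\dots,x_m$, and the witness conjuncts, guarded by $p_{i,J}(x_2,\dots,x_m)$, quantify $\exists x_{m+1}$ over a body that contains no occurrence of $x_1$: by the observation preceding Lemma~\ref{lma:resolution}, every literal surviving the $(\cdot)^\circ$ operation has arguments forming a suffix of $x_2,\dots,x_{m+1}$ (the exceptional case $\pm\ft(x_1,x_2)$ arises only at clause level $2$, which is excluded here since the clauses have level $m+1\geq 3$). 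Hence those conjuncts mention only $x_2,\dots,x_{m+1}$ and can be uniformly re-indexed to $x_1,\dots,x_m$, giving a genuine $\FLotransm$-formula. Your handling of (iii) and (iv) --- Lemma~\ref{lma:multiply} with $z=s$, per-conjunct witness allocation, and Lemma~\ref{lma:resolution} to complete the fluted $(m+1)$-types --- matches the paper and would survive this correction, but as written the construction does not eliminate a variable, so the lemma is not proved.
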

\begin{theorem}
	Let $\phi$ be a $\FLotransm$-formula for $m \geq 2$. If $\phi$ is satisfiable, then $\phi$
	 has a model
	 of size at most $(m+1)$-tuply exponential in $\sizeof{\phi}$. Hence the satisfiability problem for $\FLotransm$ is in non-deterministic $(m+1)$-tuply exponential time.
	\label{theo:FLotransmUpper}
\end{theorem}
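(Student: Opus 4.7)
The plan is a straightforward induction on $m \geq 2$, with Lemma~\ref{lma:FLotranstUpper} as the base case and Lemma~\ref{lma:eliminate} driving the step. For $m=2$, the statement reduces to Lemma~\ref{lma:FLotranstUpper}: every satisfiable $\FLotranst$-formula has a model of size at most triply exponential in its length, which is precisely $(m+1)$-tuply exponential when $m=2$. For the inductive step, assume every satisfiable $\FLotransm$-formula of length $n$ admits a model of size at most $f_m(n)$, where $f_m$ is an $(m+1)$-tuply exponential function.

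Given a $\FLotransmpo$-sentence $\phi$ of size $n$, first put $\phi$ into clause normal form using Lemma~\ref{lma:cnf}, incurring only a polynomial blow-up, so we may assume $\phi$ has at most $n$ existential and $n$ universal conjuncts. Apply Lemma~\ref{lma:eliminate} to obtain a clause normal form $\FLotransm$-formula $\phi'$ whose length is bounded by $2^{O(n)}$, and such that (a) $\phi$ is satisfiable iff $\phi'$ is, and (b) any model of $\phi'$ of size $M$ yields a model of $\phi$ of size at most $nM$. By the inductive hypothesis, if $\phi'$ is satisfiable it has a model of size at most $f_m(2^{O(n)})$; composing the outer exponential with the $(m+1)$-tuply exponential $f_m$ yields an $(m+2)$-tuply exponential bound in $n$. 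Multiplying by the factor $n$ from Lemma~\ref{lma:eliminate}(iv) does not change the tower height, so $\phi$ itself has a model of size $(m+2)$-tuply exponential in $n$, which is the desired $((m+1)+1)$-tuply exponential bound for $\FLotransmpo$-formulas. This closes the induction.

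The complexity statement then follows from the standard model-checking argument already invoked in Lemma~\ref{lma:FL1transoUpper}: a first-order formula with at most $m$ variables can be verified against a candidate structure $\fA$ in time $O(\sizeof{\phi}\cdot |A|^m)$ by evaluating subformulas bottom-up. A non-deterministic decision procedure for $\FLotransm$ therefore guesses a structure of $(m+1)$-tuply exponential size and verifies it in time polynomial in the structure's cardinality, i.e.\ in time $(m+1)$-tuply exponential in $\sizeof{\phi}$.

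There is no real obstacle beyond bookkeeping: the technical work has been done in Lemmas~\ref{lma:FLotranstUpper} and~\ref{lma:eliminate}. The one point worth double-checking is that the single exponential blow-up introduced by each application of Lemma~\ref{lma:eliminate} raises the tower of exponentials by exactly one level and not more; this is true because if $f$ is $k$-tuply exponential then $n \mapsto f(2^{O(n)})$ is $(k+1)$-tuply exponential, and multiplication by the polynomial factor $s \leq n$ does not change this class. Consequently one reduction step matches one increment of $m$, and the induction produces the claimed $(m+1)$-tuply exponential bound uniformly for all $m \geq 2$.
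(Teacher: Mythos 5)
Your proposal is correct and follows exactly the paper's route: induction on $m$ with Lemma~\ref{lma:FLotranstUpper} as the base case and Lemma~\ref{lma:eliminate} (applied after normalisation via Lemma~\ref{lma:cnf}) as the inductive step. The paper states this in one line; you have merely filled in the bookkeeping (the single-exponential blow-up per elimination step adding one level to the tower, and the guess-and-check upper bound), all of which is accurate.
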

\begin{proof}
	Induction on $m$. The case $m=2$ is Lemma~\ref{lma:FLotranstUpper}. The inductive step is Lemma~\ref{lma:eliminate}.	
\end{proof}	
We mentioned in Sec.~\ref{sec:intro} that~\cite{purdyTrans:P-HST16} establishes a lower bound of $\lfloor m/2 \rfloor$-\NExpTime-hard for the satisfiability
problem for $\FL^{m}$. For $m \geq 3$, this appears to be the best available lower bound on the corresponding problem for $\FLotransm$.
Thus, a gap remains between the best available upper and lower complexity bounds. \nb{L I think Emanuel's proof could be written in FL$^4$, so it doesn't give new bounds.}
Certainly, it follows that the satisfiability problem for $\FLotrans$ is \Tower-complete, as for $\FL$.

%

\section{Fluted Logic with more Transitive Relations}\label{sec:undecidable}
In this section we show two undecidability results for the fluted fragment with two variables, $\FLt$, extended with more transitive relations, that have been informally announced in \cite{Tendera18}. We employ the apparatus of tiling systems. 

A \emph{tiling system} is a tuple $\boldsymbol{\mathcal{C}}=({\mathcal C}, {\mathcal C}_H, {\mathcal C}_V)$, where
$\mathcal C$ is a finite set of {\em tiles},
and
${\mathcal C}_H$, ${\mathcal C}_V \subseteq {\mathcal C} \times {\mathcal C}$ are the  {\em horizontal} and {\em vertical} constraints.

Let $S$ be either of the spaces $\N \times \N$, $\Z\times \Z$  or $\Z_t \times \Z_t$. 
A tiling system $\boldsymbol{\mathcal{C}}$ \emph{tiles} $S$, if there exists a function $\rho: S \rightarrow {\cal C}$ such that for all $(p,q)\in S$: 
$	(\rho(p,q), \rho(p+1,q)) \in {\cal C}_H$ and $(\rho(p,q), \rho(p,q+1)) \in {\cal C}_V$. 
The following problems are known to be undecidable (cf.~e.g.~\cite{purdyTrans:BGG97}):
\begin{itemize}
	\item
Given a tiling system $\boldsymbol{\mathcal{C}}$ determine if $\boldsymbol{\mathcal{C}}$ tiles $\Z\times\Z$, or $\N\times \N$.
	\item
	Given a tiling system $\boldsymbol{\mathcal{C}}$ determine if $\boldsymbol{\mathcal{C}}$ tiles $\Z_t\times\Z_t$, for some $t\geq 1$.
\end{itemize}

In this section we first prove the following theorem.
\begin{theorem}
	The satisfiability problem for $\FLthreetrans$, the two-variable fluted fragment with three transitive relations, is undecidable. 
	\label{th:three}
\end{theorem}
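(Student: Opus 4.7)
The plan is to prove undecidability by reduction from the tiling problem for $\N \times \N$ (or $\Z \times \Z$), which is undecidable. Given a tiling system $\boldsymbol{\mathcal{C}} = (\mathcal{C}, \mathcal{C}_H, \mathcal{C}_V)$, I would construct a sentence $\phi_{\boldsymbol{\mathcal{C}}} \in \FLthreetrans$ that is satisfiable if and only if $\boldsymbol{\mathcal{C}}$ tiles the plane. The domain of any model of $\phi_{\boldsymbol{\mathcal{C}}}$ will contain, among its elements, a collection that represents the points $(p,q)$ of the grid, with each such point carrying a unary predicate $C_c$ naming its tile.

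Two of the three transitive relations, call them $H$ and $V$, are used to generate the horizontal and vertical skeletons of the grid. Since fluted syntax only admits atoms of the form $H(x_1,x_2)$ and $V(x_1,x_2)$ (not their converses), these relations must point forward. I would impose existential conjuncts $\forall x_1(A(x_1) \to \exists x_2(A(x_2) \wedge H(x_1,x_2)))$, and analogously for $V$, so that every grid element has a forward $H$-witness and forward $V$-witness; transitivity then automatically produces the remaining forward chains. A system of auxiliary unary predicates, together with constraints on pairs satisfying $H$ and $V$, would be used to single out the \emph{immediate} horizontal and vertical successors from the transitive closure and to align the witnesses into rows and columns. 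The horizontal and vertical tiling constraints would then be imposed by universal conjuncts of the form $\forall x_1 \forall x_2(\mathit{Hsucc}(x_1,x_2) \wedge C_c(x_1) \to \bigvee_{(c,c') \in \mathcal{C}_H} C_{c'}(x_2))$, and similarly for $V$.

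The third transitive relation, call it $D$, is the crucial ingredient and is used to enforce the grid confluence property: going one step right and then one step down should arrive at the same grid cell as going one step down and then one step right. Without equality and without the ability to write reversed atoms, this commutation cannot be expressed directly in $\FLt$; the role of $D$ is to witness the identification. Intuitively, an existential conjunct forces each element to have a $D$-successor that is simultaneously the $V$-successor of its $H$-successor, and a universal conjunct on $D$ ensures that any such $D$-successor is also the $H$-successor of its $V$-successor; transitivity of $D$ then propagates this local commutation into a globally coherent grid. The reason three transitive relations are needed here, rather than the two that suffice for the analogous $\FOt$ undecidability result of Kieronski and Tendera, is precisely the fluted restriction on argument orders.

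The main obstacle will be the simultaneous enforcement of (i) functionality of the immediate horizontal and vertical successor relations and (ii) grid confluence, both in the complete absence of equality and of reverse-argument atoms. This requires a careful interplay between the three transitive relations, so that the unwanted elements introduced by transitive closure (elements $H$-related to $x_1$ that are not its immediate successor, for example) are consistently labelled and cannot be confused with genuine grid neighbours; and so that the $D$-witnesses do not accidentally glue together distinct grid points. Once the reduction is written out, verifying soundness (a tiling induces a model on the domain $\N \times \N$) is straightforward, while completeness (a model yields a tiling) follows by extracting the grid skeleton from the $H$- and $V$-successor structure and reading off the tile of each point from the $C_c$ predicates.
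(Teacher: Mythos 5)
Your high-level plan (reduction from $\N\times\N$ tiling) matches the paper's, but the way you deploy the three transitive relations contains a gap that the paper's actual construction is specifically designed to avoid. You dedicate two transitive relations $H$ and $V$ to the horizontal and vertical directions, give every element a forward $H$- and $V$-witness, and let ``transitivity automatically produce the remaining forward chains'', planning to recover the \emph{immediate} successor afterwards via unary predicates. This is where the argument breaks. With unboundedly long transitive $H$-chains and only finitely many unary predicates, any periodic colouring of a row repeats with some period $k$, so an element at position $p$ is $H$-related not only to its immediate successor at $p+1$ but to every element at $p+1+ik$ carrying the very same unary type. Your defined $\mathit{Hsucc}$ therefore relates $p$ to all of these, and the universal conjunct $\forall x_1\forall x_2(\mathit{Hsucc}(x_1,x_2)\wedge C_c(x_1)\to\bigvee_{(c,c')\in\mathcal{C}_H}C_{c'}(x_2))$ imposes the horizontal constraint on arbitrarily distant pairs; a genuine tiling then fails to yield a model, so soundness of the reduction is lost. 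The same objection applies to $V$, and your third relation $D$ is asked to witness the composition ``the $V$-successor of the $H$-successor of $x$'', which cannot even be written down in the two-variable fluted fragment without naming the intermediate element; you flag functionality and confluence as ``the main obstacle'' but supply no mechanism for either.

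The paper's construction resolves exactly this tension by doing something quite different: no transitive relation is tied to a grid direction. Instead all three relations $b$, $g$, $r$ are interleaved along a zig-zag path through the grid so that every monochromatic transitive path has length at most $7$, and a colour scheme of period $6\times 3$ above the diagonal and $3\times 6$ below (plus markers $l$, $f$, $e$, $e'$ for the borders and diagonals) guarantees that no colour repeats within a single transitive path. Immediate neighbours can then be identified unambiguously by colour, the relations $\rt,\lt,\up,\dw$ are mere defined abbreviations rather than primitive functional relations, and confluence emerges from conjuncts of group~(4) that promote certain edges of one transitive relation to another, combined with transitivity of short paths. The paper's concluding remarks make the point explicitly: the grid construction must keep transitive paths of \emph{bounded} length, which is the opposite of the unbounded forward chains on which your sketch relies. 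To repair your proposal you would essentially have to abandon the dedicated-$H$/$V$/$D$ architecture and reinvent a bounded-path interleaving of this kind.
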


\begin{proof}
	Suppose the signature contains transitive relations $b$ (black), $g$ (green)  and $r$ (red), and additional unary predicates $e$, $e'$, $f$, $l$, $c_{i,j}$ ($0\leq i\leq 5$, $0\leq j\leq 2$) and $d_{i,j}$ ($0\leq i \leq 2$, $0\leq j \leq 5$); we refer to the  $c_{i,j}$'s and to the $d_{i,j}$'s as {\em colours}. 
	
	We reduce from the $\N\times\N$ tiling problem. We first write a formula $\phi_{grid}$ that captures several properties of the intended expansion of the $\N\times\N$ grid  as shown 
	in Fig.~\ref{fig:grid-three-transitive:Left}. There the predicates $c_{i,j}$  and $d_{i,j}$ together define a partition of the universe as follows: 
	an element $(k,k')$ with $k'>k$ (i.e.~in the yellow region, above the diagonal) satisfies $c_{i,j}$ with $i=k\mod 6$, $j=k'\mod 3$, and an  element $(k,k')$ with $k\geq k'$ (i.e.~in the pink region, on or below the diagonal) satisfies $d_{i,j}$ with $i=k\mod 3, j=k'\mod 6$. 
	Paths of the same transitive relation have length at most 7 and follow one of four designated patterns. Remaining unary predicates mark the following elements: $l$---left column, $f$---bottom row, $e$---main diagonal, and  $e'$---elements with coordinates $(k,k+1)$. 
	\begin{figure}[thb]
		\begin{center}
			\begin{subfigure}[t]{0.5\textwidth}
				\begin{center}
					\resizebox{!}{4cm}
					{
						\begin{tikzpicture}[scale=0.95]
						
						\clip (-0.8,-0.8) rectangle (11.3,7.3);

						\path [fill=myYellow] (0,0.5) to (8,8.5) to (0,8.5) to (0,0.5);
						\path [fill=myMelon] (0,0.5) to  (8,8.5) to (14.4,8.3) to (14.4,0) to (0,0) to (0,0.5);
						
						\draw [help lines] (0,0) grid (18,18);
						
						\foreach \x in {0,1,2,3,4,5,6,7,8,9,10,11,12,13,14,15,16,17}
						\foreach \y in {0,1,2,3,4,5,6,7,8,9,10,11,12,13,14,15,16,17}
						{
							\filldraw[fill=white] (\x, \y) circle (0.1); 
						}

						\foreach \x in {1,7,13} \foreach \y in {\x,\x+3,\x+6,\x+9,\x+12,\x+15} \foreach \s in {0.1}
						{	\filldraw[fill=red] (\x, \y) circle (0.1);
							\draw [->, rounded corners, ultra thick, red]  (\x+\s,\y) -- (\x+\s,\y-1+\s) -- (\x+1-\s,\y-1+\s) -- (\x+2-\s,\y-1+\s) -- (\x+2-\s,\y-\s) -- (\x+1-\s,\y-\s) -- (\x+1-\s,\y+1-\s) -- (\x+\s,\y+1-\s);
							\draw [->, ultra thick, dashed, red] (\x+\s,\y+\s) -- (\x+\s,\y+1-\s-\s);
						}
						
						\foreach \x in {3,9,15} \foreach \y in {\x,\x+3,\x+6,\x+9,\x+12,\x+15} \foreach \s in {0.1}
						{	\filldraw[fill=black] (\x, \y) circle (0.1); 
							\draw [->, rounded corners, ultra thick, black]  (\x+\s,\y) -- (\x+\s,\y-1+\s) -- (\x+1-\s,\y-1+\s) -- (\x+2-\s,\y-1+\s) -- (\x+2-\s,\y-\s) -- (\x+1-\s,\y-\s) -- (\x+1-\s,\y+1-\s) -- (\x+\s,\y+1-\s);
							\draw [->, ultra thick, dashed, black] (\x+\s,\y+\s) -- (\x+\s,\y+1-\s-\s);
						}
						\foreach \x in {5,11} \foreach \y in {\x,\x+3,\x+6,\x+9,\x+12,\x+15} \foreach \s in {0.1}
						{	\filldraw[fill=green] (\x, \y) circle (0.1);
							\draw [->, rounded corners, ultra thick, green]  (\x+\s,\y) -- (\x+\s,\y-1+\s) -- (\x+1-\s,\y-1+\s) -- (\x+2-\s,\y-1+\s) -- (\x+2-\s,\y-\s) -- (\x+1-\s,\y-\s) -- (\x+1-\s,\y+1-\s) -- (\x+\s,\y+1-\s);
							\draw [->, ultra thick, dashed, green] (\x+\s,\y+\s) -- (\x+\s,\y+1-\s-\s);
						}


						\foreach \y in {2,8,14} \foreach \x in {\y+2,\y+5,\y+8,\y+11,\y+14} \foreach \s in {0.1}
						{	\filldraw[fill=red] (\x, \y) circle (0.1);
							\draw [->, rounded corners, ultra thick, red]  (\x+\s,\y) -- (\x+\s,\y-1+\s) -- (\x+\s,\y-2+\s) -- (\x+1-\s,\y-2+\s) -- (\x+2-\s,\y-2+\s) -- (\x+2-\s,\y-1-\s) -- (\x+1-\s,\y-1-\s) -- (\x+1-\s,\y-\s);
							\draw [->, ultra thick, dashed, red] (\x+\s,\y-\s) -- (\x+1-\s-\s,\y-\s);
						}	
						
						\foreach \y in {4,10,16} \foreach \x in {\y+2,\y+5,\y+8,\y+11,\y+14} \foreach \s in {0.1}
						{	\filldraw[fill=black] (\x, \y) circle (0.1);
							\draw [->, rounded corners, ultra thick, black]  (\x+\s,\y) -- (\x+\s,\y-1+\s) -- (\x+\s,\y-2+\s) -- (\x+1-\s,\y-2+\s) -- (\x+2-\s,\y-2+\s) -- (\x+2-\s,\y-1-\s) -- (\x+1-\s,\y-1-\s) -- (\x+1-\s,\y-\s);
							\draw [->, ultra thick, dashed, black] (\x+\s,\y-\s) -- (\x+1-\s-\s,\y-\s);
						}	
						
						\foreach \y in {6,12,18} \foreach \x in {\y+2,\y+5,\y+8,\y+11,\y+14} \foreach \s in {0.1}
						{	\filldraw[fill=green] (\x, \y) circle (0.1);
							\draw [->, rounded corners, ultra thick, green]  (\x+\s,\y) -- (\x+\s,\y-1+\s) -- (\x+\s,\y-2+\s) -- (\x+1-\s,\y-2+\s) -- (\x+2-\s,\y-2+\s) -- (\x+2-\s,\y-1-\s) -- (\x+1-\s,\y-1-\s) -- (\x+1-\s,\y-\s);
							\draw [->, ultra thick, dashed, green] (\x+\s,\y-\s) -- (\x+1-\s-\s,\y-\s);
						}	
						
						
						
						\foreach \y in {3,9,15} \foreach \x in {\y+2,\y+5,\y+8,\y+11,\y+14} \foreach \s in {0.1}
						{	\filldraw[fill=red] (\x, \y) circle (0.1);
							\draw [->, rounded corners, ultra thick,  red]  (\x,\y+\s) -- (\x-1+\s,\y+\s) -- (\x-1+\s,\y+1-\s) -- (\x-1+\s,\y+2-\s) -- (\x-\s,\y+2-\s) -- (\x-\s,\y+1-\s) -- (\x+1-\s,\y+1-\s) -- (\x+1-\s,\y+\s);
							\draw [->, ultra thick, dashed, red] (\x+\s,\y+\s) -- (\x+1-\s-\s,\y+\s);
						}
						
						\foreach \y in {5,11,17} \foreach \x in {\y+2,\y+5,\y+8,\y+11,\y+14} \foreach \s in {0.1}
						{	\filldraw[fill=black] (\x, \y) circle (0.1);
							\draw [->, rounded corners, ultra thick,  black]  (\x,\y+\s) -- (\x-1+\s,\y+\s) -- (\x-1+\s,\y+1-\s) -- (\x-1+\s,\y+2-\s) -- (\x-\s,\y+2-\s) -- (\x-\s,\y+1-\s) -- (\x+1-\s,\y+1-\s) -- (\x+1-\s,\y+\s);
							\draw [->, ultra thick, dashed, black] (\x+\s,\y+\s) -- (\x+1-\s-\s,\y+\s);
						}
						\foreach \y in {1,7,13,20} \foreach \x in {\y+2,\y+5,\y+8,\y+11,\y+14,\y+17} \foreach \s in {0.1}
						{	\filldraw[fill=green] (\x, \y) circle (0.1);
							\draw [->, rounded corners, ultra thick,  green]  (\x,\y+\s) -- (\x-1+\s,\y+\s) -- (\x-1+\s,\y+1-\s) -- (\x-1+\s,\y+2-\s) -- (\x-\s,\y+2-\s) -- (\x-\s,\y+1-\s) -- (\x+1-\s,\y+1-\s) -- (\x+1-\s,\y+\s);
							\draw [->, ultra thick, dashed, green] (\x+\s,\y+\s) -- (\x+1-\s-\s,\y+\s);
						}
						
						%
						
						\foreach \x in {2,8,14} \foreach \y in {\x,\x+3,\x+6,\x+9,\x+12} \foreach \s in {0.1}
						{	\filldraw[fill=black] (\x, \y) circle (0.1);
							\draw [->, rounded corners, ultra thick,  black]  (\x,\y+\s) -- (\x-1+\s,\y+\s) -- (\x-2+\s,\y+\s) -- (\x-2+\s,\y+1-\s) -- (\x-2+\s,\y+2-\s) -- (\x-1-\s,\y+2-\s) -- (\x-1-\s,\y+1-\s) -- (\x-\s,\y+1-\s);
							\draw [->, ultra thick, dashed, black] (\x-\s,\y+\s) -- (\x-\s,\y+1-\s-\s);
						}
						\foreach \x in {4,10,16} \foreach \y in {\x,\x+3,\x+6,\x+9,\x+12} \foreach \s in {0.1}
						{	\filldraw[fill=green] (\x, \y) circle (0.1);
							\draw [->, rounded corners, ultra thick,  green]  (\x,\y+\s) -- (\x-1+\s,\y+\s) -- (\x-2+\s,\y+\s) -- (\x-2+\s,\y+1-\s) -- (\x-2+\s,\y+2-\s) -- (\x-1-\s,\y+2-\s) -- (\x-1-\s,\y+1-\s) -- (\x-\s,\y+1-\s);
							\draw [->, ultra thick, dashed, green] (\x-\s,\y+\s) -- (\x-\s,\y+1-\s-\s);
						}
						\foreach \x in {6,12} \foreach \y in {\x,\x+3,\x+6,\x+9,\x+12} \foreach \s in {0.1}
						{	\filldraw[fill=red] (\x, \y) circle (0.1);
							\draw [->, rounded corners, ultra thick,  red]  (\x,\y+\s) -- (\x-1+\s,\y+\s) -- (\x-2+\s,\y+\s) -- (\x-2+\s,\y+1-\s) -- (\x-2+\s,\y+2-\s) -- (\x-1-\s,\y+2-\s) -- (\x-1-\s,\y+1-\s) -- (\x-\s,\y+1-\s);
							\draw [->, ultra thick, dashed, red] (\x-\s,\y+\s) -- (\x-\s,\y+1-\s-\s);
						}
						
						%
						
						
						\foreach \x in {0,3,6,9,12,15,18} \foreach \y in {0} \foreach \s in {0.1}
						{\filldraw[fill=black] (\x, \y) circle (0.1);
							\draw [->, rounded corners, ultra thick,  black]  (\x+\s,\y) -- (\x+\s,\y+1-\s) -- (\x+1-\s,\y+1-\s) -- (\x+1-\s,\y+\s);
							\draw [->, ultra thick, dashed, black] (\x+\s,\y+\s) -- (\x+1-\s-\s,\y+\s);
						}
						
						
						\foreach \y in {1,4,7,10,13,16,19} \foreach \x in {0} \foreach \s in {0.1}
						{	\filldraw[fill=green] (\x, \y) circle (0.1);
							\draw [->, rounded corners, ultra thick,  green]  (\x,\y+\s) -- (\x+1-\s,\y+\s) -- (\x+1-\s,\y+1-\s) -- (\x+\s,\y+1-\s);
							\draw [->, ultra thick, dashed, green] (\x+\s,\y+\s) -- (\x+\s,\y+1-\s-\s);
						}

						\foreach \x in {0,3,6,9,12,15}
						\foreach \y in {0} 
						{
							\filldraw[fill=black] (\x, \y) circle (0.1); 
						}

						%
						\foreach \x in {0,1,2,3,4,5,6,7,8,9,10,11,12,13,14,15,16,17} \coordinate [label=center:\x] (A) at (\x, -0.5);
						\foreach \y in {0,1,2,3,4,5,6,7,8,9,10,11,12,13,14,15,16,17}  \coordinate [label=center:\y] (A) at (-0.5,\y);
						\end{tikzpicture}
					}
				\end{center} 
				\caption{three transitive relations:  $b$,  {\color{green} $g$} and  {\color{red} $r$}. Filled nodes depict the beginning of a transitive path of the same colour; dotted lines connect the first element with the last element on such  path. }
			\label{fig:grid-three-transitive:Left}
			\end{subfigure}	
			\hspace{0.6cm} \vspace{-0.5cm}
			\begin{subfigure}[t]{0.4\textwidth}					
				\begin{center}
					\resizebox{5.5cm}{4cm}
					{
							\begin{tikzpicture}[xscale=0.8,yscale=0.8]
						\clip (-0.8,-0.8) rectangle (6.3,4.3);
						\foreach \x in {1,3,5,7}
						\foreach \y in {1,3,5,7} \foreach \z in {0.1} 
						{
							\draw[color=red,thick,-, >=latex] (\x+\z, \y+\z) -- (\x+1-\z, \y+\z) --
							(\x+1-\z, \y+1-\z) -- (\x+\z, \y+1-\z) -- (\x+\z, \y+\z);
							\draw[color=red,thick,-, >=latex] (\x+\z, \y+\z) --
							(\x+1-\z, \y+1-\z) -- (\x+\z, \y+1-\z) -- (\x+1-\z, \y+\z);
							
						}	
						
						\foreach \x in {0,2,4,6}
						\foreach \y in {0,2,4,6} \foreach \z in {0.1} 
						{
							\draw[color=blue,thick,-, >=latex] (\x+\z, \y+\z) -- (\x+1-\z, \y+\z) --
							(\x+1-\z, \y+1-\z) -- (\x+\z, \y+1-\z) -- (\x+\z, \y+\z);
							\draw[color=blue,thick,-, >=latex] (\x+\z, \y+\z) --
							(\x+1-\z, \y+1-\z) -- (\x+\z, \y+1-\z) -- (\x+1-\z, \y+\z);
						}	
						
						\foreach \x in {1,5}
						\foreach \y in {1,5} \foreach \z in {0.15} 
						{
							\draw[color=blue,thick,->, >=latex] (\x, \y) -- (\x+1-\z, \y);
							\draw[color=blue,thick,->, >=latex] (\x, \y) -- (\x, \y+1-\z);
							\draw[color=blue,thick,->, >=latex] (\x-1, \y) -- (\x-1, \y+1-\z);
							\draw[color=blue,thick,->, >=latex] (\x, \y-1) -- (\x+1-\z, \y-1);						
						}	
						
						\foreach \x in {3,7}
						\foreach \y in {3,7} \foreach \z in {0.15} 
						{
							\draw[color=blue,thick,->, >=latex] (\x, \y) -- (\x+1-\z, \y);
							\draw[color=blue,thick,->, >=latex] (\x, \y) -- (\x, \y+1-\z);
							\draw[color=blue,thick,->, >=latex] (\x-1, \y) -- (\x-1, \y+1-\z);
							\draw[color=blue,thick,->, >=latex] (\x, \y-1) -- (\x+1-\z, \y-1);						
						}

						\foreach \x in {1,5}
						\foreach \y in {3,7} \foreach \z in {0.15} 
						{
							\draw[color=blue,thick,->, >=latex] (\x+1-\z, \y) -- (\x, \y) ;
							\draw[color=blue,thick,->, >=latex] (\x, \y+1-\z) -- (\x, \y);
							\draw[color=blue,thick,->, >=latex] (\x-1, \y+1-\z) -- (\x-1, \y);
							\draw[color=blue,thick,->, >=latex] (\x+1-\z, \y-1) -- (\x, \y-1);						
						}	
						
						\foreach \x in {3,7}
						\foreach \y in {1,5} \foreach \z in {0.15} 
						{
							\draw[color=blue,thick,->, >=latex] (\x+1-\z, \y) -- (\x+\z, \y) ;
							\draw[color=blue,thick,->, >=latex] (\x, \y+1-\z) -- (\x, \y+\z);
							\draw[color=blue,thick,->, >=latex] (\x-1, \y+1-\z) -- (\x-1, \y+\z);
							\draw[color=blue,thick,->, >=latex] (\x+1-\z, \y-1) -- (\x+\z, \y-1);						
						}	
						
						%
						%

						\foreach \x in {0,1,2,3,4,5,6,7}
						\foreach \y in {0,1,2,3,4,5,6,7}
						{
							\filldraw[fill=white] (\x, \y) circle (0.1); 
							\pgfmathtruncatemacro{\abc}{mod(\y,4)} 
							\pgfmathtruncatemacro{\bcd}{mod(\x,4)} 
						}

						
						\draw[ ->, dotted, very thick] (7.5,0) -- (8,0);
						\draw[ ->, dotted, very thick] (0,7.5) -- (0,8);
						\draw[ ->, dotted, very thick] (7.5,7.5) -- (8,8);
						
						
						\foreach \x in {0,1,2,3,4,5,6,7,8,9,10,11,12,13,14,15,16,17} \coordinate [label=center:\x] (A) at (\x, -0.5);
						\foreach \y in {0,1,2,3,4,5,6,7,8,9,10,11,12,13,14,15,16,17}  \coordinate [label=center:\y] (A) at (-0.5,\y);
						
						\end{tikzpicture}

					}
				\end{center}
				\caption{two transitive relations: {\color{blue}$b$}   and  {\color{red}$r$}. Edges without arrows depict connections in both direction. 
				}
			\label{fig:grid-three-transitive:Right}
			\end{subfigure}
			
		\end{center}
		
		\caption{Expansions of the $\N\times\N$ grid in the proofs of Theorem~\ref{th:three} (a) and Theorem~\ref{th:two} (b).  
		}\label{fig:grid-three-transitive}
	\end{figure}
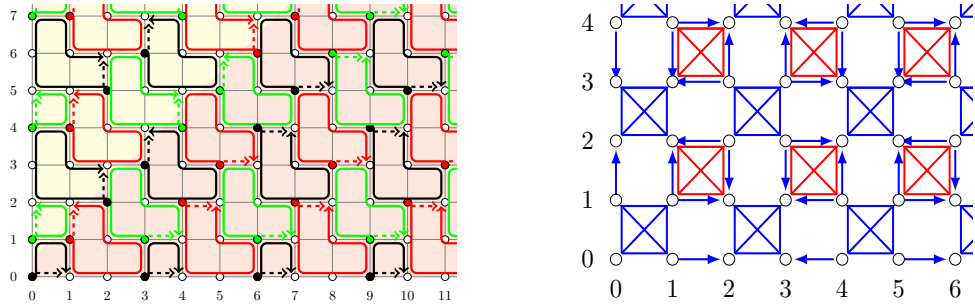

	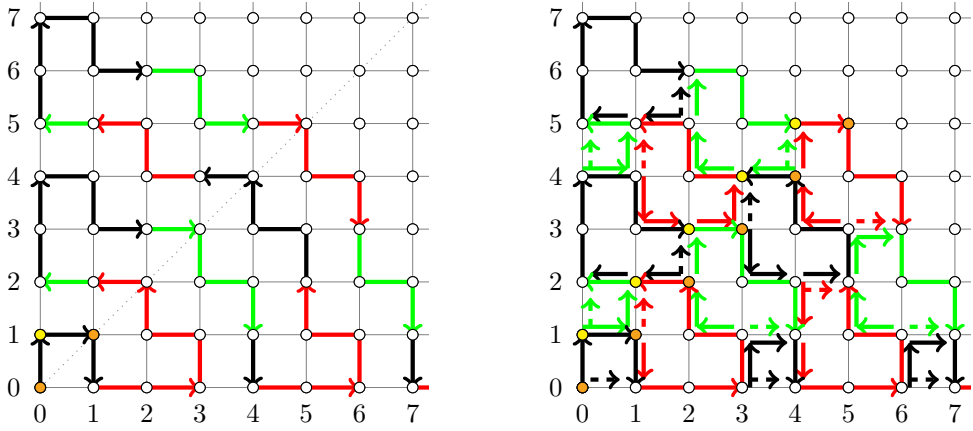
\begin{figure}[hbt]
	\begin{subfigure}[t]{0.45\textwidth}	
		\begin{center}
			\begin{minipage}{6.5cm}
				\begin{center}
					{

						\begin{tikzpicture}[scale=0.7]
						
						\clip (-0.8,-0.8) rectangle (7.3,7.3);
						
						\draw [help lines] (0,0) grid (18,18);
						\draw [gray, dotted] (0,0) -- (18,18);
						
						\foreach \x in {0,1,2,3,4,5,6,7,8,9,10,11,12,13,14,15,16,17}
						\foreach \y in {0,1,2,3,4,5,6,7,8,9,10,11,12,13,14,15,16,17}
						{
							\filldraw[fill=white] (\x, \y) circle (0.1); 
						}
						
						\draw[ultra thick, ->, black] (0,0) -- (0,1);
						
						\foreach \x in {1} \draw[ultra thick, <-, black] (\x,\x) -- (\x-1,\x);
						\foreach \x in {2,8,14} \draw[ultra thick, ->, red] (\x,\x) -- (\x-1,\x);
						\foreach \x in {3,9,15} \draw[ultra thick, <-, green] (\x,\x) -- (\x-1,\x);
						\foreach \x in {4,10,16} \draw[ultra thick, ->, black] (\x,\x) -- (\x-1,\x);
						\foreach \x in {5,11,17} \draw[ultra thick, <-, red] (\x,\x) -- (\x-1,\x);
						
						
						\foreach \y in {5,11,17} \foreach \x in {\y,\y+3,\y+6,\y+9,\y+12,\y+15} \draw[ultra thick, ->, red] (\x,\y) -- (\x,\y-1) -- (\x+1,\y-1) -- (\x+1,\y-2);
						\foreach \y in {2,8,14} \foreach \x in {\y,\y+3,\y+6,\y+9,\y+12,\y+15} \draw[ultra thick, <-, red] (\x,\y) -- (\x,\y-1) -- (\x+1,\y-1) -- (\x+1,\y-2);
						
						\foreach \y in {4,10,16} \foreach \x in {\y} 
						\draw[ultra thick, <-, black] (\x,\y) -- (\x,\y-1) -- (\x+1,\y-1) -- (\x+1,\y-2);
						
						\foreach \y in {1} \foreach \x in {\y,\y+3,\y+6,\y+9,\y+12,\y+15} \draw[ultra thick, ->, black] (\x,\y) -- (\x,\y-1);
						
						\foreach \y in {3,9,15} \foreach \x in {\y,\y+3,\y+6,\y+9,\y+12,\y+15} \draw[ultra thick, ->, green] (\x,\y) -- (\x,\y-1) -- (\x+1,\y-1) -- (\x+1,\y-2);
						
						
						\foreach \x in {3,9,15} \foreach \y in {\x+1}
						\draw[ultra thick, ->, red] (\x,\y) -- (\x-1,\y) -- (\x-1,\y+1) -- (\x-2,\y+1);
						
						\foreach \x in {4,10,16} \foreach \y in {\x+1,\x+4,\x+7,\x+10,\x+13,\x+16} \draw[ultra thick, <-, green] (\x,\y) -- (\x-1,\y) -- (\x-1,\y+1) -- (\x-2,\y+1);
						\foreach \x in {7,13} \foreach \y in {\x+1,\x+4,\x+7,\x+10,\x+13,\x+16} \draw[ultra thick, ->, green] (\x,\y) -- (\x-1,\y) -- (\x-1,\y+1) -- (\x-2,\y+1);
						\foreach \x in {1} \foreach \y in {\x+1,\x+4,\x+7,\x+10,\x+13,\x+16} \draw[ultra thick, ->, green] (\x,\y) -- (\x-1,\y);
						
						\foreach \x in {2,8,14} \foreach \y in {\x+1,\x+4,\x+7,\x+10,\x+13,\x+16} \draw[ultra thick, <-, black] (\x,\y) -- (\x-1,\y) -- (\x-1,\y+1) -- (\x-2,\y+1);
						
						\foreach \x in {0} \foreach \y in {2,5,8,11,14,17} \draw[ultra thick, ->, black] (\x,\y) -- (\x,\y+2);
						
						\foreach \y in {0} \foreach \x in {1,4,7,10,13,16} \draw[ultra thick, ->, red] (\x,\y) -- (\x+2,\y);
						
						\foreach \x in {0,1,2,3,4,5,6,7,8,9,10,11,12,13,14,15,16,17} \coordinate [label=center:\x] (A) at (\x, -0.5);
						\foreach \y in {0,1,2,3,4,5,6,7,8,9,10,11,12,13,14,15,16,17}  \coordinate [label=center:\y] (A) at (-0.5,\y);
						
						\foreach \x in {0,1,2,3,4,5,6,7,8,9,10,11,12,13,14,15,16,17}
						\foreach \y in {0,1,2,3,4,5,6,7,8,9,10,11,12,13,14,15,16,17}
						{
							\filldraw[fill=white] (\x, \y) circle (0.1); 
						}

						\foreach \x in {0,1} 
						{
							\filldraw[fill=YellowOrange] (\x, \x) circle (0.1); 
						}
						\foreach \x in {0} 
						{
							\filldraw[fill=yellow] (\x, \x+1) circle (0.1); 
						}

						\end{tikzpicture}
					}
				\end{center}
			\end{minipage}
		\end{center}
	\caption{Path starting with an initial element and generated by witnesses for conjuncts of group (3). 
	}
		\label{fig:embeddingLeft}
	\end{subfigure}
	\hspace{0.6cm}
	\begin{subfigure}[t]{0.45\textwidth}	
		\begin{center}
			\begin{minipage}{6.5cm}
				\begin{center}
					{
						\begin{tikzpicture}[scale=0.7]
						
						\clip (-0.8,-0.8) rectangle (7.3,7.3);
						
						\draw [help lines] (0,0) grid (18,18);
						
						\foreach \x in {0,1,2,3,4,5,6,7,8,9,10,11,12,13,14,15,16,17}
						\foreach \y in {0,1,2,3,4,5,6,7,8,9,10,11,12,13,14,15,16,17}
						{
							\filldraw[fill=white] (\x, \y) circle (0.1); 
						}

						
						\foreach \s in {0.15}
						{			\draw [->, ultra thick, dashed, red] (1+\s,1+\s) -- (1+\s,1+1-\s-\s);
							\draw [->, ultra thick, red] (1+\s,1-\s) -- (1+\s,0+\s);
						}
						
						\foreach \x in {1} \foreach \y in {\x+3} \foreach \s in {0.15}
						{	
							\draw [->,  ultra thick, red]  (\x+\s,\y) -- (\x+\s,\y-1+\s);
							\draw [->,  ultra thick, red] (\x+\s,\y-1+\s) -- (\x+1-\s,\y-1+\s);
							\draw [->,  ultra thick, red] (\x+1+\s,\y-1+\s) -- (\x+2-\s,\y-1+\s);
							\draw [->,  ultra thick, red] (\x+2-\s,\y-1+\s) -- (\x+2-\s,\y-\s);
							\draw [->, ultra thick, dashed, red] (\x+\s,\y+\s) -- (\x+\s,\y+1-\s-\s);
						}
						
						\foreach \x in {3} \foreach \y in {\x} \foreach \s in {0.15}
						{	
							\draw [->,  ultra thick, black]  (\x+\s,\y) -- (\x+\s,\y-1+\s);
							\draw [->,  ultra thick, black] (\x+\s,\y-1+\s) -- (\x+1-\s,\y-1+\s);
							\draw [->,  ultra thick, black] (\x+1+\s,\y-1+\s) -- (\x+2-\s,\y-1+\s);
							\draw [->, ultra thick, dashed, black] (\x+\s,\y+\s) -- (\x+\s,\y+1-\s-\s);
							
						}
						%
						

						
						\foreach \y in {2} \foreach \x in {\y+2} \foreach \s in {0.15}
						{	
							\draw [->, ultra thick, red]  (\x+\s,\y) -- (\x+\s,\y-1+\s);
							\draw [->, ultra thick, red] (\x+\s,\y-1-\s) -- (\x+\s,\y-2+\s);
							\draw [->, ultra thick, dashed, red] (\x+\s,\y-\s) -- (\x+1-\s-\s,\y-\s);
						}	

						\foreach \y in {3,9,15} \foreach \x in {\y+2} \foreach \s in {0.15}
						{	
							\draw [->, ultra thick,  red]  (\x-\s,\y+\s) -- (\x-1+\s,\y+\s);
							\draw [->, ultra thick,  red] (\x-1+\s,\y+\s) -- (\x-1+\s,\y+1-\s);
							\draw [->, ultra thick,  red] (\x-1+\s,\y+1+\s) -- (\x-1+\s,\y+2-\s);
							\draw [->, ultra thick, dashed, red] (\x+\s,\y+\s) -- (\x+1-\s-\s,\y+\s);
						}
						
						\foreach \y in {1} \foreach \x in {\y+2} \foreach \s in {0.15}
						{
							\draw [->, ultra thick,  green]  (\x-\s,\y+\s) -- (\x-1+\s,\y+\s);
							\draw [->, ultra thick,  green] (\x-1+\s,\y+\s) -- (\x-1+\s,\y+1-\s);
							\draw [->, ultra thick,  green] (\x-1+\s,\y+1+\s) -- (\x-1+\s,\y+2-\s);
							\draw [->, ultra thick, dashed, green] (\x+\s,\y+\s) -- (\x+1-\s-\s,\y+\s);
							
						}
						
						\foreach \y in {1,7,13,20} \foreach \x in {\y+5,\y+8,\y+11,\y+14,\y+17} \foreach \s in {0.15}
						{
							\draw [->, ultra thick,  green]  (\x-\s,\y+\s) -- (\x-1+\s,\y+\s);
							\draw [->, ultra thick,  green] (\x-1+\s,\y+\s) -- (\x-1+\s,\y+1-\s);
							\draw [->, ultra thick,  green] (\x-1+\s,\y+1+\s) -- (\x-1+\s,\y+2-\s);
							\draw [->, ultra thick,  green] (\x-1+\s,\y+2-\s)  -- (\x-\s,\y+2-\s);
							\draw [->, ultra thick, dashed, green] (\x+\s,\y+\s) -- (\x+1-\s-\s,\y+\s);
							
						}

						%
						
						\foreach \x in {2} \foreach \y in {\x,\x+3,\x+6,\x+9,\x+12} \foreach \s in {0.15}
						{	
							\draw [->, ultra thick,  black]  (\x-\s,\y+\s) -- (\x-1+\s,\y+\s);
							\draw [->, ultra thick,  black]  (\x-1-\s,\y+\s) -- (\x-2+\s,\y+\s);
							\draw [->, ultra thick, dashed, black] (\x-\s,\y+\s) -- (\x-\s,\y+1-\s-\s);
						}
						
						\foreach \x in {4,10,16} \foreach \y in {\x} \foreach \s in {0.15}
						{	
							\draw [->, ultra thick,  green]  (\x,\y+\s) -- (\x-1+\s,\y+\s); 
							\draw [->, ultra thick,  green] (\x-1-\s,\y+\s) -- (\x-2+\s,\y+\s);
							\draw [->, ultra thick,  green] (\x-2+\s,\y+\s) -- (\x-2+\s,\y+1-\s); 
							\draw [->, ultra thick,  green] (\x-2+\s,\y+1+\s) -- (\x-2+\s,\y+2-\s);
							\draw [->, ultra thick, dashed, green] (\x-\s,\y+\s) -- (\x-\s,\y+1-\s-\s);
						}
						%
						%
						
						
						\foreach \x in {0} \foreach \y in {0} \foreach \s in {0.15}
						{
							\draw [->, ultra thick, dashed, black] (\x+\s,\y+\s) -- (\x+1-\s-\s,\y+\s);
						}

						\foreach \x in {3,6,9,12,15,18} \foreach \y in {0} \foreach \s in {0.15}
						{
							\draw [->, ultra thick,  black]  (\x+\s,\y) -- (\x+\s,\y+1-\s);
							\draw [->, ultra thick,  black]  (\x+\s,\y+1-\s) -- (\x+1-\s,\y+1-\s);
							\draw [->, ultra thick, dashed, black] (\x+\s,\y+\s) -- (\x+1-\s-\s,\y+\s);
						}
						
						
						\foreach \y in {1,4} \foreach \x in {0} \foreach \s in {0.15}
						{	
							\draw [->, ultra thick,  green]  (\x,\y+\s) -- (\x+1-\s,\y+\s);
							\draw [->, ultra thick,  green] (\x+1-\s,\y+\s) -- (\x+1-\s,\y+1-\s);
							\draw [->, ultra thick, dashed, green] (\x+\s,\y+\s) -- (\x+\s,\y+1-\s-\s);
						}

						\foreach \x in {0,3,6,9,12,15}
						\foreach \y in {0} 
						{
							\filldraw[fill=black] (\x, \y) circle (0.1); 
						}

						\draw[ultra thick, ->, black] (0,0) -- (0,1);
						
						\foreach \x in {1} \draw[ultra thick, <-, black] (\x,\x) -- (\x-1,\x);
						\foreach \x in {2,8,14} \draw[ultra thick, ->, red] (\x,\x) -- (\x-1,\x);
						\foreach \x in {3,9,15} \draw[ultra thick, <-, green] (\x,\x) -- (\x-1,\x);
						\foreach \x in {4,10,16} \draw[ultra thick, ->, black] (\x,\x) -- (\x-1,\x);
						\foreach \x in {5,11,17} \draw[ultra thick, <-, red] (\x,\x) -- (\x-1,\x);
						
						
						\foreach \y in {5,11,17} \foreach \x in {\y,\y+3,\y+6,\y+9,\y+12,\y+15} \draw[ultra thick, ->, red] (\x,\y) -- (\x,\y-1) -- (\x+1,\y-1) -- (\x+1,\y-2);
						\foreach \y in {2,8,14} \foreach \x in {\y,\y+3,\y+6,\y+9,\y+12,\y+15} \draw[ultra thick, <-, red] (\x,\y) -- (\x,\y-1) -- (\x+1,\y-1) -- (\x+1,\y-2);
						
						\foreach \y in {4} \foreach \x in {\y} \draw[ultra thick, <-, black] (\x,\y) -- (\x,\y-1) -- (\x+1,\y-1) -- (\x+1,\y-2);
						
						\foreach \y in {1} \foreach \x in {\y,\y+3,\y+6,\y+9,\y+12,\y+15} \draw[ultra thick, ->, black] (\x,\y) -- (\x,\y-1);
						
						\foreach \y in {3,9,15} \foreach \x in {\y,\y+3,\y+6,\y+9,\y+12,\y+15} \draw[ultra thick, ->, green] (\x,\y) -- (\x,\y-1) -- (\x+1,\y-1) -- (\x+1,\y-2);
						
						
						\foreach \x in {3,9,15} \foreach \y in {\x+1} \draw[ultra thick, ->, red] (\x,\y) -- (\x-1,\y) -- (\x-1,\y+1) -- (\x-2,\y+1);
						
						
						\foreach \x in {4,10,16} \foreach \y in {\x+1,\x+4,\x+7,\x+10,\x+13,\x+16} \draw[ultra thick, <-, green] (\x,\y) -- (\x-1,\y) -- (\x-1,\y+1) -- (\x-2,\y+1);
						\foreach \x in {7,13} \foreach \y in {\x+1,\x+4,\x+7,\x+10,\x+13,\x+16} \draw[ultra thick, ->, green] (\x,\y) -- (\x-1,\y) -- (\x-1,\y+1) -- (\x-2,\y+1);
						\foreach \x in {1} \foreach \y in {\x+1,\x+4,\x+7,\x+10,\x+13,\x+16} \draw[ultra thick, ->, green] (\x,\y) -- (\x-1,\y);
						
						\foreach \x in {2,8,14} \foreach \y in {\x+1,\x+4,\x+7,\x+10,\x+13,\x+16} \draw[ultra thick, <-, black] (\x,\y) -- (\x-1,\y) -- (\x-1,\y+1) -- (\x-2,\y+1);
						
						\foreach \x in {0} \foreach \y in {2,5,8,11,14,17} \draw[ultra thick, ->, black] (\x,\y) -- (\x,\y+2);
						
						\foreach \y in {0} \foreach \x in {1,4,7,10,13,16} \draw[ultra thick, ->, red] (\x,\y) -- (\x+2,\y);

						%
						\foreach \x in {0,1,2,3,4,5,6,7,8,9,10,11,12,13,14,15,16,17} \coordinate [label=center:\x] (A) at (\x, -0.5);
						\foreach \y in {0,1,2,3,4,5,6,7,8,9,10,11,12,13,14,15,16,17}  \coordinate [label=center:\y] (A) at (-0.5,\y);
						
						\foreach \x in {0,1,2,3,4,5,6,7,8,9,10,11,12,13,14,15,16,17}
						\foreach \y in {0,1,2,3,4,5,6,7,8,9,10,11,12,13,14,15,16,17}
						{
							\filldraw[fill=white] (\x, \y) circle (0.1); 
						}
						
						\foreach \x in {0,1,2,3,4,5} 
						{
							\filldraw[fill=YellowOrange] (\x, \x) circle (0.1); 
						}
						\foreach \x in {0,1,2,3,4} 
						{
							\filldraw[fill=yellow] (\x, \x+1) circle (0.1); 
						}

						\end{tikzpicture}
					}	
				\end{center}
			\end{minipage}
		\end{center}
	\caption{Additional edges arising from conjuncts of group (4) (solid lines drawn inside grid cells) and transitivity (dashed lines). Nodes on the diagonals  are marked orange ($e$) and  yellow ($e'$).}
	\label{fig:embeddingRight}
\end{subfigure}
		\caption{Construction of the intended model of $\phi_{grid}$ 
			in the proof of Theorem~\ref{th:three}.}
		\label{fig:embedding}
	\end{figure}

	The formula $\phi_{grid}$ comprises a large number of conjuncts. To help give an overview of the construction, we have organized these conjuncts into groups, each of which secures a particular property (or collection of properties) exhibited by its models.
	The first two properties are very simple:
	\begin{enumerate}
		\item[(1)] There is an `initial' element satisfying $d_{00}(x) \wedge e(x) \wedge l(x)\wedge f(x)$.\label{rgb:initial}
		\item[(2)] The predicates $c_{i,j}$ and $d_{i,j}$ together partition the universe.
	\end{enumerate}	
	The third property generates the path shown in Fig.~\ref{fig:embeddingLeft}:
	\begin{enumerate}
		\item[(3)] Each element has a $b$- $r$- or $g$- successor as shown in the path shown in Fig.~\ref{fig:embeddingLeft}, and satisfying the appropriate predicates $c_{i,j}$ or $d_{i,j}$. Specifically, if a node in this path has coordinates $(x,y)$ with $y > x$,  	
		then it satisfies $c_{i,j}$ where $i = x \mod 3$ and $j = y \mod 6$; 
		and when $y \leq x$, then the node satisfies $d_{i,j}$ where $i = x \mod 3$ and $j = y \mod 6$.
	\end{enumerate}
	The conjuncts enforcing this property have the form
	\begin{equation}
	\forall x \big( colour(x)\wedge diag(x)\wedge border(x) \rightarrow \exists y (t(x,y)\wedge colour'(y))\big),\tag{3a}\label{rgb:witnesses}
	\end{equation}
	where $colour$ and $colour'$ stand for one of the predicate letters $c_{i,j}$ or $d_{i,j}$, $diag(x)$ stands for one of the literals $e'(x)$, $\neg e'(x)$, $e(x)$, $\neg e(x)$ or $\top$  (i.e.~the logical constant true), $border(x)$ stands for one of the literals $l(x)$, $\neg l(x)$, $f(x)$, $\neg f(x)$ or $\top$, and  $t$ stands for one of the transitive predicate letters $b$, $r$ or $g$. 
	The precise combinations of the literals and predicate letters in these conjuncts can be read from Fig.~\ref{fig:embeddingLeft} (cf.~Appendix, Table.~\ref{table:witnesses} for a full list). 
	%
	
To connect all pairs of elements that are neighbours in the standard grid we require a fourth property, which
we give in schematic form as follows:
	\begin{enumerate}
		
		\item[(4)]\label{IfBlack}
		{\em Certain} pairs of elements connected by \textit{one} transitive relation are also connected by \textit{another}, as indicated in Fig.~\ref{fig:embeddingRight}. 
\end{enumerate}		
Here are some examples of the conjuncts enforcing this property:
\begin{align}
		\forall x (c_{01}(x) &\rightarrow \forall y (b(x,y)\wedge (c_{11}(y) \vee d_{11}(y)) \rightarrow g(x,y))),\tag{4a}\label{tiling:4a}\\
		\forall x (d_{11}(x) &\rightarrow \forall y (b(x,y)\wedge d_{10}(y)\rightarrow r(x,y))),\tag{4b}\label{tiling:4b} \\
		\forall x (d_{11}(x) &\rightarrow \forall y (r(x,y)\wedge (c_{12}(y)\vee d_{12}(y))\rightarrow g(x,y))),\tag{4c}\label{tiling:4c}
		%
		\end{align}
		%
The role of these conjuncts can be explained referring to Fig.~\ref{fig:embeddingRight}.
For example, employing \eqref{tiling:4b} for the element $(1,1)$ in the intended model $\fG$, we get $\fG\models r((1,1),(1,0))$; hence by transitivity of $r$, also $\fG\models r((1,1),(1,2))$. This, applying \eqref{tiling:4c}, implies $\fG\models g((1,1),(1,2))$. By \eqref{tiling:4a}, we get $\fG \models g((0,1),(1,1))$ and, by transitivity of $g$, $\fG\models g((0,1),(0,2))$. The process is illustrated in Fig.~\ref{fig:embeddingRight}; when carried on along the zig-zag path, it constructs a grid-like structure. 

These conjuncts depend on having available the predicates marking the borders and the diagonals. Specifically, we require the following
property: 
	\begin{enumerate}
		%
		%
		\item[(5)] the predicates $l$, $f$, $e$ and $e'$ are distributed to mark the left-most column, the first row, the diagonal and
		the `super-diagonal'  
		of the grid, as indicated above. To secure this property, we add  to $\phi_{grid}$ several conjuncts, for instance:
	\end{enumerate}
	\begin{align}
	&\!\!\!\!\!\!\!\!\!	\bigwedge_{0\leq i\leq 2,\; 0\leq j\leq 5}
	\!\!\!\!\!\!\!\!\!\! \forall x \big(d_{i,j}(x)\wedge \pm e(x) \rightarrow \forall y ((b(x,y)\vee g(x,y) \vee r(x,y))\wedge d_{i+1,j+1}(y) \rightarrow \pm e(y)) \big),
	\tag{5a}\label{tiling:e}
	\end{align}
	where $\pm e(x)$ denotes uniformly $e(x)$ or $\neg e(x)$. Similar conjuncts are added for the super-diagonal, left column and bottom row; and also for the connection with and between $e$ and $e'$. 
The conjuncts ensuring properties (4) and (5) work in tandem.
For instance, applying~\eqref{tiling:e} to (1,1) we get $e$ is true at $(2,2)$; then, following the zig-zag path and applying more conjuncts from the group (4), we get that $g((2,2),(3,3))$ holds, so the node $(3,3)$ will be marked by $e$; this will propagate along the main diagonal. 
	
The structure $\fG$ depicted in Fig.~
\ref{fig:grid-three-transitive:Left}
 is a model of $\phi_{grid}$. In fact, $\phi_{grid}$ is an infinity axiom. To see this, let $\fA\models \phi_{grid}$ and define an injective embedding $\rho$ of the standard grid on $\N\times \N$ into $\fA$ as follows. Let $next:\N\times\N \mapsto \N\times \N$ be the successor function defined on $\N\times\N$ as depicted by the zig-zag path in the left-hand picture of Fig.~\ref{fig:embedding} starting at $(0,0)$ (ignoring any colours). Denote $s_0=(0,0)$, $s_n=next(s_{n-1})$ and $S_n=\{s_0, \ldots, s_n\}$. Let $a_0\in A$ be an element such that $\fA \models d_{00}(a)\wedge e(a) \wedge l(a)\wedge f(a)$ that exists by condition~(1). Define $\rho(s_0)=a_0$. Now, we proceed inductively: suppose $\rho(s_{n-1})$ has already been defined in step $n-1$ of the induction and $\rho(s_{n-1})=a_{n-1}$.
 Let $a_n$ be the witness of $a_{n-1}$ for the appropriate conjunct from the group~(3), i.e.~where the unary literals for $x$ agree with the unary literals satisfied by $a_{n-1}$ in $\fA$.  Define $\rho(s_n)=a_n$. 		
 Using induction one can  prove that $\rho$ is indeed injective: in the inductive step we assume that  
 $\fA\upharpoonright\{a_0, \ldots, a_{n-1}\}$ is isomorphic to $\fG\upharpoonright S_{n-1}$, and 
 we show that $a_n\not\in \{a_0, \ldots, a_{n-1}\}$ and 
 $\fA\upharpoonright\{a_0, \ldots, a_{n}\}$ and $\fG\upharpoonright S_n$ are again isomorphic. 
 In the proof one considers several cases depending on the 1-type realized by $a_n$. The formula $\phi_{grid}$ ensures that $a_0, \ldots, a_{18}$ are all distinct, and any eight consecutive elements of the sequence $a_0,\ldots, a_n$ are always distinct. Consider $a_{18}=\rho(4,2)$ that requires a witness $b\in A$ for a conjunct from the group~(3) such that $\fA\models r(a_{18},b)\wedge d_{11}(b)$. Suppose, $b=a_{2}=\rho(1,1)$, since $\fA\models d_{11}(a_2)$. Then, by transitivity of $g$, $\fA \models g(a_{18},a_{10})$, 
 which is a contradiction with $\fG \models \neg g((4,2),(1,1))$.  Other cases are similar and due to page limits have been omitted.

 We are now ready to define the horizontal and vertical successors in models of of $\phi_{grid}$. In fact, instead of defining the horizontal grid successor $\htw$ as one binary relation, we define two disjoint binary relations $\rt(x,y)$ and $\lt(x,y)$ such that $\rt$ and the inverse of $\lt$ together give the expected horizontal grid successor; they are defined respecting the `direction' of the transitive edges in the models. In the intended model  $\rt((x_1,y_1),(x_2,y_2))$ holds if $x_2=x_1+1$, $y_2=y_1$ and $(x_1,y_1)$ and $(x_2,y_2)$ are connected by $b$, $g$ or $r$; and for $\lt((x_1,y_1),(x_2,y_2))$ to hold we require $x_2=x_1-1$ instead of $x_2=x_1+1$. 
We present the definition of $\rt(x,y)$ in detail below\footnote{Addition in subscripts of the $c_{i,j}$'s is
always understood modulo 6 in the first position, and modulo 3 in the second position, i.e. $c_{i+a,j+b}$ denotes $c_{(i+a)\,\mathsf{mod}\: 6,(j+b)\,\mathsf{mod}\: 3}$.
Similarly, addition in subscripts of the $d_{i,j}$'s is understood modulo 3 in the first position, and modulo 6 in the second position.}
	\begin{align*}
	\rt(x,y) := &  (b(x,y) \vee g(x,y) \vee r(x,y))\quad \wedge  \\
	&	(c_{01}(x) \wedge d_{11}(y)) \vee 	(c_{20}(x) \wedge d_{03}(y)) \vee (c_{42}(x)\wedge d_{25}(y))\quad \vee\\
	& \mynegsp\mynegsp\mynegsp\mynegsp\mynegsp  \mynegsp\mynegsp\mynegsp\mynegsp 	\bigvee_{(i,j)\notin \{(0,2),(1,2),(2,1),(3,1),(4,0),(5,0)\} } \mynegsp\mynegsp\mynegsp \mynegsp\mynegsp\mynegsp\mynegsp   (c_{ij}(x) \wedge c_{i+1,j}(y) )\quad \vee
	\mynegsp	 \bigvee_{(i,j)\notin \{(2,1),(1,3),(0,5)\}}\mynegsp\mynegsp\mynegsp\mynegsp   (d_{ij}(x) \wedge d_{i+1,j}(y) )
	\end{align*}	
	The relation $\rt$ connects elements that are connected by  $b$, $g$ or $r$ and satisfy one of the possible combinations of colours: in the second line the combinations for crossing the diagonal are listed, in the third line the left disjunction describes combinations when both elements are located above the diagonal, and in the right disjunction---when both elements are located on and below the diagonal. The definition of  $\lt(x,y)$ complements that  of $\rt$.
	Analogously, we define relations $\up$ and $\dw$ that together define the vertical grid successor. 
	
	Now we are ready to write formulas that properly assign tiles to elements of the model. We do this with a formula  $\phi_{tile}$,
	which again features several conjuncts enforcing various properties of its models. Fortunately, the properties in question 
	are much simpler this time:
	\begin{enumerate}
		\item[(6)] Each node encodes precisely one tile.\label{tiling:C1}	
		\item[(7)] Adjacent tiles respect ${\cal C}_H$.
		\item[(8)] Adjacent tiles respect ${\cal C}_V$ (written as above using $\up$ and $\dw$). 
	\end{enumerate}
Property (6) is secured by the conjunct $\forall x \big(\bigvee_{C \in {\cal C}} C(x) \wedge \bigwedge_{C \neq D} (\neg C (x) \vee \neg D(x))\big)$.
Property (7)  is secured by the conjunct	
			\begin{eqnarray*}
				%
				\bigwedge_{C\in {\cal C}}	\forall x \big(C(x) \rightarrow \forall y \big((\rt(x,y) \rightarrow \bigvee_{C': (C,C') \in {\cal C}_H}  C'(y))
				\quad \wedge \quad 
				(\lt(x,y) \rightarrow \bigvee_{C': (C',C) \in {\cal C}_H}  C'(y))\big)\big); \label{three:H1f} 
			\end{eqnarray*}
and property (8) is analogous. We remark that these latter formulas are not strictly fluted but can be rewritten as fluted using classical tautologies (cf.~formula (7) in Section~\ref{app:twotrans}).
	
Finally, let $\eta_{\boldsymbol{\cal C}}$ be the conjunction of $\phi_{grid}$ and $\phi_{tile}$. The following Claim completes the reduction and the proof of our theorem.
	%
	\begin{claim}
		$\eta_{\boldsymbol{\cal C}}$ is satisfiable  iff  $\boldsymbol{\cal C}$ tiles  $\N \times \N$.
	\end{claim}
		\textit{Proof:} ($\Leftarrow$) If  $\boldsymbol{\cal C}$ tiles  $\N \times \N$ then  to show that $\eta_{\boldsymbol{\cal C}}$ is satisfiable we can expand our intended model $\fG$ for $\phi_{grid}$ assigning to every element of the grid a unique $C\in {\cal C}$ given by the tiling.  
		
		($\Rightarrow$) Let $\fA\models \eta_{\boldsymbol{\cal C}}$. 
		Let $\rho$ be the embedding of the standard $\N\times\N$ grid into $\fA$ defined above. One can inductively show that $\rho$  maps neighbours in the grid to elements connected by one of the relations $\lt, \rt, \up, \dw$ as follows ($i,j\geq 0$):
		$$\fA\models \rt(\rho(i,j),\rho(i+1,j)) \dot{\vee} \lt(\rho(i+1,j),\rho(i,j)) \wedge \up(\rho(i,j),\rho(i,j+1)) \dot{\vee} \dw(\rho(i,j+1),\rho(i,j)).$$
		(Here, $\dot{\vee}$ is exclusive disjunction.)
		So, we can define a tiling of the standard grid assigning to every node $(i,j)$ the unique tile $C$ such that $\fA \models C(\rho(i,j))$. Conditions (7) and (8) together with the above observation  ensure that this assignment satisfies the 
		tiling conditions.
\end{proof}
%
We remark that the formula $\phi_{grid}$ in the proof of Theorem~\ref{th:three} is an axiom of infinity, hence the satisfiability and the finite satisfiability problems do not coincide. Moreover, all formulas used in the proof are either guarded or can easily be rewritten as guarded. Furthermore, in the proof it would suffice to assume that $b$, $g$ and $r$ are interpreted as equivalence relations. Hence, we can strengthen the above theorem as follows.
\begin{corollary}
	The  satisfiability problem for the intersection of the fluted fragment with the two-variable guarded fragment is undecidable in the presence of three transitive relations \textup{(}or three equivalence relations\textup{)}. 
\end{corollary}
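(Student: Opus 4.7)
The plan is to derive the corollary from Theorem~\ref{th:three} by two independent refinements of the reduction: first, verify that every conjunct of $\eta_{\boldsymbol{\cal C}}$ is, or can be equivalently rewritten as, a sentence of $\GFt$; second, verify that the same reduction works when $b$, $g$, $r$ are constrained to be equivalence relations (not merely transitive). The first point is essentially syntactic bookkeeping; the second requires re-examining one direction of the reduction.

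For the first refinement I would inspect the schematic groups of conjuncts in turn. The witness conjuncts (3a), $\forall x(\alpha(x)\to\exists y(t(x,y)\wedge\beta(y)))$, are guarded, with the unary atom $\alpha(x)$ guarding $\forall x$ and the binary atom $t(x,y)$ guarding $\exists y$. The universal conjuncts of groups (4) and (5a), $\forall x(\alpha(x)\to\forall y(t(x,y)\wedge\beta(y)\to\gamma(x,y)))$, are logically equivalent to the guarded sentence $\forall x\forall y(t(x,y)\to(\alpha(x)\wedge\beta(y)\to\gamma(x,y)))$, with guard $t(x,y)$. The tiling conjuncts (7) and (8), already noted in the excerpt to admit a fluted rewriting via classical tautologies, become guarded under the same rewriting: after unfolding $\rt,\lt,\up,\dw$ the outer binary guard is one of $b,g,r$ and an outer unary guard $C(x)$ is available. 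The partition conjunct (6) splits into pairwise-disjointness statements $\forall x(C(x)\to\neg D(x))$, each guarded, and a totality statement; the latter is automatically satisfied in our construction, since every element of any relevant model is either the initial element (declared to satisfy $d_{00}$) or a witness for some (3a)-conjunct that explicitly assigns a colour.

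For the second refinement, the $(\Rightarrow)$ direction is immediate: equivalence relations are a fortiori transitive, so any model of $\eta_{\boldsymbol{\cal C}}$ under the equivalence constraint is a model under the transitive constraint, and Theorem~\ref{th:three}'s argument yields a tiling. For $(\Leftarrow)$, given a tiling of $\N\times\N$, I would start from the intended model $\fG$ of Fig.~\ref{fig:grid-three-transitive:Left}, extend $b^\fG,g^\fG,r^\fG$ to their reflexive-symmetric-transitive closures, and verify that no conjunct of $\eta_{\boldsymbol{\cal C}}$ is violated. Groups (3a) are monotone under adding edges. The potentially problematic conjuncts lie in groups (4) and (5a), which have the monotone form \emph{more $t$-edges imply more $t'$-edges}; one must check that the cascade of implied edges reaches a fixed point without demanding an edge inconsistent with the colour structure.

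The main obstacle I anticipate is precisely this closure-and-propagation check in the $(\Leftarrow)$ direction. The proof of Theorem~\ref{th:three} places transitive arrows carefully so that certain non-edges hold (for instance, the distinctness argument at $a_{18}=\rho(4,2)$ relies on $\fG\models\neg g((4,2),(1,1))$). Symmetrization and reflexive closure may destroy such non-edges, and I would have to re-establish injectivity of $\rho$ using only positive edge information, together with the fact that the finitely many colour classes each occupy a sparse sublattice of $\N\times\N$, so that any two elements of the same colour class are already well separated in the zig-zag path. This is a finite but delicate case analysis, organized by pairs of colours and by the four zig-zag patterns of Fig.~\ref{fig:grid-three-transitive:Left}, and it is where the bulk of the technical work of strengthening Theorem~\ref{th:three} to the corollary would lie.
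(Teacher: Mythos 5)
Your proposal follows the same route as the paper: the paper's entire justification for this corollary is the one-sentence remark preceding it, asserting that all conjuncts used in the proof of Theorem~\ref{th:three} are guarded or easily rewritten as guarded, and that the construction still goes through when $b$, $g$ and $r$ are interpreted as equivalence relations. Your conjunct-by-conjunct guardedness check (including the fix for the unguarded totality statement) and your analysis of the equivalence closure of the intended model --- in particular the observation that the injectivity argument for $\rho$ relies on non-edges that symmetrization could destroy --- supply exactly the details the paper leaves implicit.
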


Now we improve the undecidability result to the case of $\FLtwotrans$ with  equality. 
\begin{theorem}
	The \textup{(}finite\textup{)} satisfiability problem for the two-variable fluted fragment with equality is undecidable in the presence of two transitive relations. \label{th:two}
\end{theorem}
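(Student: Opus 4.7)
The plan is to reduce from a tiling problem (tilings of $\N\times\N$ for satisfiability and of $\Z_t\times\Z_t$ for finite satisfiability) to $\FLtwotrans$ with equality, using the structure depicted in Fig.~\ref{fig:grid-three-transitive:Right}: model elements correspond to grid points, the transitive relation $b$ forms $2\times 2$ cliques aligned to even coordinates, and $r$ forms $2\times 2$ cliques aligned to odd coordinates, so that the two clique families overlap on single elements. Each element belongs to exactly one $b$-clique and one $r$-clique, and horizontal and vertical grid successors are those that share a clique with the current element but differ in an appropriate coordinate.

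First, I introduce unary predicates $c_{ij}$ and $d_{ij}$ ($i,j\in\{0,1\}$) marking the position of an element within its $b$-clique and its $r$-clique respectively. Existential conjuncts of the form $\forall x(c_{ij}(x) \rightarrow \exists y(b(x,y) \wedge c_{i'j'}(y)))$, one for each $(i',j')\neq(i,j)$, ensure that every $c_{ij}$-element has three $b$-neighbours carrying the three other $c$-labels. Equality enters through universal conjuncts of the form $\forall x(c_{ij}(x) \rightarrow \forall y(c_{ij}(y) \wedge b(x,y) \rightarrow x=y))$, pinning down the $b$-cliques to size exactly four; note that the atom $x=y$ is fluted when written with argument sequence $(x_1,x_2)$. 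Symmetric axioms govern the $r$-relation and the $d_{ij}$-labels. Further conjuncts, in the style of the group~(4) axioms of Theorem~\ref{th:three}, coordinate the two labellings so that the overlap pattern of the clique families matches Fig.~\ref{fig:grid-three-transitive:Right}. Once this grid skeleton is secured, the horizontal and vertical successor relations $\rt$, $\lt$, $\up$, $\dw$ are defined as in Theorem~\ref{th:three}, i.e.~as disjunctions of $b$- or $r$-links together with compatible $c$- or $d$-labels, and the tiling conjuncts of properties~(6)--(8) are added verbatim.

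The main obstacle will be to verify that these axioms really do force the intended grid structure: with only two transitive relations, there is ample room for degenerate models in which cliques collapse, merge, or the $c$- and $d$-labellings drift inconsistently across neighbouring cliques. Equality is crucial for preventing the first problem (bounding clique sizes exactly), while a careful, systematic choice of overlap conjuncts---constraining how adjacent $b$- and $r$-cliques interlock---is needed for the second. In effect, these overlap conjuncts, combined with the equality atoms, play the role that the third transitive relation plays in the proof of Theorem~\ref{th:three}; this is consistent with the remark there that the third relation may be constrained to be the identity, which equality now provides for free. Throughout, the only syntactic discipline required is to write every binary atom with arguments $(x_1,x_2)$ under the prevailing quantifier ordering, so that all subformulas remain in $\FLtwotrans$.
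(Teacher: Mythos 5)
Your overall strategy is the paper's own: the same picture (Fig.~\ref{fig:grid-three-transitive:Right}), the same two families of interlocking $2\times 2$ cliques for $b$ and $r$, equality used to close the cyclically generated cliques, overlap conjuncts forcing certain $r$-edges to be $b$-edges and vice versa, and then $\htw/\vtw$ plus the standard tiling conjuncts. However, there is a concrete gap in your choice of colour granularity. You take $c_{ij}$ and $d_{ij}$ with $i,j\in\{0,1\}$, recording positions \emph{within} the two cliques; but in the intended model the position of $(k,l)$ in its $r$-clique is $((k+1)\bmod 2,(l+1)\bmod 2)$, a function of its position $(k\bmod 2,\,l\bmod 2)$ in its $b$-clique, so your two labellings jointly carry only mod-$2$ information --- four colours in total. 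The paper's proof uses colours $c_{i,j}$ with $0\leq i,j\leq 3$, i.e.\ coordinates mod~$4$, and this is not a cosmetic difference: the inter-clique overlap edges you need for propagation, together with transitivity, connect elements at grid distance~$2$ (e.g.\ $(0,0)\to_b(1,0)\to_b(2,0)$ once the directed edge from $(1,0)$ to $(2,0)$ is present). With mod-$2$ colours, $(0,0)$ and $(2,0)$ carry the same label, so your axiom $\forall x(c_{ij}(x)\rightarrow\forall y(c_{ij}(y)\wedge b(x,y)\rightarrow x=y))$ forces them to be identical. Hence the intended grid expansion is \emph{not} a model of your formula, and the direction ``$\boldsymbol{\mathcal{C}}$ tiles the plane $\Rightarrow$ the formula is satisfiable'' of the reduction fails; at best your axioms admit only collapsed toroidal structures of period~$2$.

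A second, related failure of the mod-$2$ labelling is directionality. The horizontal successor must be defined by a clause of the shape $c_{ij}(x)\wedge c_{i+1,j}(y)$ together with a $b$- or $r$-link, and since $i+1\equiv i-1\pmod 2$ this condition is symmetric in $x$ and $y$: right- and left-neighbours become indistinguishable, so the ${\cal C}_H$ constraint would be imposed in both directions and the tiling encoding breaks. Both problems disappear once the colours are taken mod~$4$ in each coordinate (16 colours), because then elements at distance~$2$ on a monochromatic transitive path have distinct colours (the alternating orientation of the inter-clique edges keeps monochromatic paths short enough that the same-colour-implies-equality axiom is consistent with the intended model), and $i+1\bmod 4$ is no longer an involution. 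With that repair your argument coincides with the paper's; the remaining verification (that every model is grid-like, via the confluence property established by case analysis on colours, and the toroidal $\Z_{4m}\times\Z_{4m}$ models for finite satisfiability) still has to be carried out, but it follows the template of Theorem~\ref{th:three} as you indicate.
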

\begin{proof}
	We write a formula $\phi_{grid}$ over a signature consisting of transitive relations $b$ and $r$, and unary predicates $c_{i,j}$ ($0\leq i,j\leq 3$).  The formula $\phi_{grid}$  captures several properties of the intended expansion of the $\Z\times\Z$ grid as shown  
	Fig.~\ref{fig:grid-three-transitive:Right}:
	%
	\begin{enumerate}[(1)]
		\item there is an initial element: $\exists x . c_{00}(x)$.
		\item the predicates $c_{i,j}$ partition the universe.
		\item transitive paths do not connect distinct elements of the same colour:
	$		\bigwedge_{0\leq i,j\leq 3} \forall x (c_{ij}(x)\rightarrow \forall y ((b(x,y)\vee r(x,y))\wedge c_{ij}(y) \rightarrow x=y))$
		%
		\item  each element belongs to a 4-element blue clique and to a 4-element red clique.
		\item {\em certain} pairs of elements connected by $r$ are also connected by $b$, and {\em certain} pairs of elements connected by $b$ are also connected by $r$.  
	\end{enumerate}		
    We have given property (5) only schematically, of course; its role is analogous to that of property (4) in the proof of Theorem~\ref{th:three}.
	The remainder of the proof is similar to the one presented for Theorem~\ref{th:three} and due to space limits it is relegated to the~Appendix. We note that $\phi_{grid}$ has also finite models expanding a toroidal grid structure $\Z_{4m}\times \Z_{4m}$  ($m>0$) obtained  by identifying elements from columns 0 and $4m$ and from rows 0 and $4m$. Hence, the proof gives undecidability for both the satisfiability and the finite satisfiability problems.
\end{proof}

Again, the formulas used in the above proof are guarded or can  be rewritten as guarded. Also it suffices to assume that $r$ is an equivalence relation. Hence we get the following
\begin{corollary}
	The \textup{(}finite\textup{)} satisfiability problem for the intersection of the fluted fragment with equality with the two-variable guarded fragment is undecidable in the presence of two transitive relations \textup{(}or one transitive and one equivalence relation\textup{)}.
\end{corollary}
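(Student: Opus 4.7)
My plan is to revisit the formula $\eta_{\boldsymbol{\mathcal C}}$ constructed in the proof of Theorem~\ref{th:two} and verify, conjunct by conjunct, that every formula used there either already lies in $\GFt$ or admits a guarded rewriting that is still fluted (and still uses only equality plus the two distinguished binary predicates $b$, $r$). Since the reduction in Theorem~\ref{th:two} already proves undecidability for the two-variable fluted fragment with equality and two transitive relations, it suffices to check (i)~that all conjuncts can be made guarded, and (ii)~that the intended model can be taken to interpret one of $b$, $r$ as an equivalence relation, so that the stronger claim in the corollary follows from essentially the same reduction.

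For (i), I inspect each of the groups~(1)--(5) of conjuncts of $\phi_{grid}$ (and the analogues of (6)--(8) in $\phi_{tile}$). The existential sentence in~(1) is trivially guarded. The partitioning conjuncts of~(2) have the form $\forall x(c_{ij}(x)\rightarrow \neg c_{kl}(x))$, guarded by $c_{ij}(x)$. Conjuncts of shape $\forall x(c_{ij}(x)\rightarrow \forall y((b(x,y)\vee r(x,y))\wedge c_{ij}(y)\rightarrow x=y))$ from~(3) are classically equivalent to the conjunction of the two guarded sentences
\[
\forall x\forall y(b(x,y)\wedge c_{ij}(x)\wedge c_{ij}(y)\rightarrow x=y) \quad\text{and}\quad \forall x\forall y(r(x,y)\wedge c_{ij}(x)\wedge c_{ij}(y)\rightarrow x=y),
\]
both of which are fluted with guards $b(x,y)$ and $r(x,y)$ respectively. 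The clique requirements of~(4) are already of the guarded existential shape $\forall x(c_{ij}(x)\rightarrow \exists y(r(x,y)\wedge\ldots))$. Finally, every schema of~(5) has the form $\forall x(c_{ij}(x)\rightarrow \forall y(r(x,y)\wedge \delta(y)\rightarrow b(x,y)))$ (or symmetrically with $b$ and $r$ swapped), which splits over the finitely many disjuncts in $\delta(y)$ and is equivalent to a conjunction of guarded-and-fluted sentences $\forall x\forall y(r(x,y)\wedge c_{ij}(x)\wedge c_{kl}(y)\rightarrow b(x,y))$. The tiling conjuncts analogous to~(7)--(8) in the proof of Theorem~\ref{th:three} are treated in the same manner, distributing $\rt(x,y)$, $\lt(x,y)$, $\up(x,y)$, $\dw(x,y)$ over their atomic bases $b(x,y)$ and $r(x,y)$ so that each resulting universal conjunct is guarded by an atom and remains fluted.

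For (ii), I examine the intended model depicted in Fig.~\ref{fig:grid-three-transitive:Right} and observe that the red relation $r$ is realized there as a union of disjoint $4$-element cliques, which (after closing under reflexivity on those cliques) forms an equivalence relation; none of the conjuncts~(1)--(5) or their $\phi_{tile}$ counterparts asserts antisymmetry or irreflexivity of $r$, so stipulating $r$ to be an equivalence relation does not destroy the intended model and hence preserves the ``$\Leftarrow$'' direction of the reduction. The ``$\Rightarrow$'' direction uses only transitivity of $r$, and is therefore strengthened, not weakened, when $r$ is forced to be an equivalence relation.

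The one place that needs genuine care will be checking that after the guarded rewriting every conjunct still has its arguments lined up in fluted order; in particular, when I split a disjunctive base such as $b(x,y)\vee r(x,y)$, the resulting pieces must still be boolean combinations of atoms whose variable sequences form a contiguous suffix of $x_1,\dots,x_m$. For $m=2$ this is automatic because every binary atom $b(x,y)$ or $r(x,y)$ over the variables $x_1=x$ and $x_2=y$ is fluted, and equality $x=y$ is allowed as an atomic base; so no obstacle in fact arises, and the corollary follows at once from the (now guarded) reduction used in Theorem~\ref{th:two}.
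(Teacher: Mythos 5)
Your overall strategy coincides with the paper's: the paper justifies this corollary in a single sentence by observing that the formulas in the proof of Theorem~\ref{th:two} are guarded or can be rewritten as guarded, and that it suffices to take $r$ to be an equivalence relation. Your part~(ii) is essentially that observation made explicit, and it is sound: the red $4$-cliques in the intended model are pairwise disjoint and symmetric, adding the reflexive pairs violates no universal conjunct (in group~(3) the consequent is $x=y$; in groups~(5)--(7) the antecedent and consequent colours are distinct, so the case $y=x$ is vacuous), and the forward direction of the reduction uses only transitivity of $r$.

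There is, however, a concrete error in your guarded rewritings of groups~(3) and~(5). The flattened sentences you propose, e.g.\ $\forall x\forall y\,(b(x,y)\wedge c_{ij}(x)\wedge c_{ij}(y)\rightarrow x=y)$ and $\forall x\forall y\,(r(x,y)\wedge c_{ij}(x)\wedge c_{kl}(y)\rightarrow b(x,y))$, are guarded but \emph{not} fluted: in $\FL^{[2]}$ every atom must have as its arguments a contiguous suffix of $x_1,x_2$ ending in $x_2$, so the unary atom $c_{ij}(x_1)$ may not occur inside the scope of the quantifier binding $x_2$. Your closing claim that ``no obstacle in fact arises'' because binary atoms over $x_1,x_2$ are fluted misses exactly this point---the obstacle is the unary atom on $x_1$, not the binary ones. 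The repair is immediate and is what the paper's formulas already do: keep the nested, relativized form and split only the non-atomic antecedent, e.g.\ $\forall x\,(c_{ij}(x)\rightarrow\forall y\,(b(x,y)\rightarrow(c_{ij}(y)\rightarrow x=y)))$ together with its $r$-twin, which is simultaneously fluted (every atom under $\forall y$ ends in $y$) and guarded (guards $c_{ij}(x)$ and $b(x,y)$); the same nesting handles group~(5) and the tiling conjuncts~(7). One further small omission: the covering half of the partition conjunct~(2), $\forall x\bigvee_{i,j}c_{ij}(x)$, has no relational guard and is guarded only because equality is available (so $x=x$ can serve as guard)---worth stating explicitly, since guardedness is the whole content of the corollary.
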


\section{Conclusions}
In this paper, we considered the ($m$-variable) fluted fragment in the presence of different numbers of transitive relations.
We showed that $\FLotrans$ has the finite model property, but $\FLthreetrans$ admits axioms of infinity and the satisfiability problem for $\FLthreetrans$ is undecidable. This contrasts with known results for other decidable fragments, in particular, \FOt, where the satisfiability and finite satisfiability problems are undecidable
in the presence of two transitive relations, and where the \textit{finite} satisfiability problem is decidable in the presence of one transitive relation.
It is open whether the (finite) satisfiability problem for $\FL$ in the presence of \textit{two} transitive relations, $\ft_1$ and $\ft_2$, is decidable.
We point out that Lemma~\ref{lma:eliminate} in Section~\ref{sec:onetrans} could be generalized to normal form formulas from $\FLtwotransmpo$. Hence,  the (finite) satisfiability problem for  $\FL$ in the presence of two transitive relations is decidable if and only if the corresponding problem for $\FLt$
with two transitive relations is decidable. Unfortunately neither the method of Sec.~\ref{sec:onetrans} (to show
decidability) nor that of Sec.~\ref{sec:undecidable} (to show
undecidability) appears to apply here. The barrier in the former case is that pairs of elements can be related by both $\ft_1$ and $\ft_2$ via divergent $\ft_1$- and $\ft_2$-chains, so that simple certificates of the kind employed for $\FLotranstMinus$ do not guarantee the existence of models. The barrier in the latter case is that the grid construction has to build models featuring transitive paths of {\em bounded} length, and this seems not to be achievable with just two transitive relations. Finally, we expect that the undecidability result for $\FLthreetrans$ can be extended to get undecidability of the corresponding finite satisfiability problem.

\bibliography{purdyTrans} 

\begin{thebibliography}{10}

\bibitem{purdyTrans:2003handbook}
F.~Baader, D.~Calvanese, D.~L. McGuinness, D.~Nardi, and P.~F.
  Patel{-}Schneider, editors.
\newblock {\em The Description Logic Handbook: Theory, Implementation, and
  Applications}.
\newblock Cambridge University Press, 2003.

\bibitem{purdyTrans:BGG97}
E.~B{\"o}rger, E.~Gr{\"a}del, and Y.~Gurevich.
\newblock {\em The Classical Decision Problem}.
\newblock Springer, 1997.

\bibitem{purdyTrans:gkv97}
E.~Gr{\"a}del, P.~Kolaitis, and M.~Vardi.
\newblock On the decision problem for two-variable first-order logic.
\newblock {\em Bulletin of Symbolic Logic}, 3(1):53--69, 1997.

\bibitem{purdyTrans:herzig90}
A.~Herzig.
\newblock A new decidable fragment of first order logic.
\newblock In {\em Abstracts of the 3rd {L}ogical {B}iennial {S}ummer {S}chool
  and {C}onference in {H}onour of {S.~C.Kleene}}, June 1990.

\bibitem{purdyTrans:kph09}
Y.~Kazakov and I.~Pratt-Hartmann.
\newblock A note on the complexity of the satisfiability problem for graded
  modal logic.
\newblock In {\em Logic in Computer Science}, pages 407--416. IEEE, 2009.

\bibitem{purdyTrans:kieronski06}
E.~Kiero{\'{n}}ski.
\newblock On the complexity of the two-variable guarded fragment with
  transitive guards.
\newblock {\em Information and Computation}, 204:1663--1703, 2006.

\bibitem{purdyTrans:kmp-ht}
E.~Kiero\'{n}ski, J.~Michalyszyn, I.~Pratt-Hartmann, and L.~Tendera.
\newblock Two-variable first-order logic with equivalence closure.
\newblock {\em SIAM Journal on Computing}, 43(3):1012--1063, 2014.

\bibitem{purdyTrans:KO12}
E.~Kiero\'{n}ski and M.~Otto.
\newblock Small substructures and decidability issues for first-order logic
  with two variables.
\newblock {\em Journal of Symbolic Logic}, 77:729--765, 2012.

\bibitem{purdyTrans:KT09}
E.~Kiero\'{n}ski and L.~Tendera.
\newblock On finite satisfiability of two-variable first-order logic with
  equivalence relations.
\newblock In {\em Logic in Computer Science}. IEEE, 2009.

\bibitem{purdyTrans:KT18}
E.~Kiero\'{n}ski and L.~Tendera.
\newblock Finite satisfiability of the two-variable guarded fragment with
  transitive guards and related variants.
\newblock {\em ACM Transactions on Computational Logic}, 19(2):8:1--8:34, 2018.

\bibitem{purdyTrans:ladner77}
R.~Ladner.
\newblock The computational complexity of provability in systems of modal
  propositional logic.
\newblock {\em SIAM Journal on Computing}, 6:467--480, 1980.

\bibitem{purdyTrans:noah80}
A.~Noah.
\newblock Predicate-functors and the limits of decidability in logic.
\newblock {\em Notre Dame Journal of Formal Logic}, 21(4):701--707, 1980.

\bibitem{purdyTrans:P-H18}
I.~Pratt-Hartmann.
\newblock Finite satisfiability for two-variable, first-order logic with one
  transitive relation is decidable.
\newblock {\em Mathematical Logic Quarterly}, 64(6):218--248, 2018.

\bibitem{purdyTrans:ph18}
I.~Pratt-Hartmann.
\newblock The finite satisfiability problem for two-variable, first-order logic
  with one transitive relation is decidable.
\newblock {\em Mathematical Logic Quarterly}, 64(3):218--248, 2018.

\bibitem{purdyTrans:P-HST16}
I.~Pratt{-}Hartmann, W.~Szwast, and L.~Tendera.
\newblock Quine's fluted fragment is non-elementary.
\newblock In {\em 25th {EACSL} Annual Conference on Computer Science Logic,
  {CSL}}, volume~62 of {\em LIPIcs}, pages 39:1--39:21. Schloss Dagstuhl -
  Leibniz-Zentrum fuer Informatik, 2016.

\bibitem{P-HST-FLrev}
I.~Pratt{-}Hartmann, W.~Szwast, and L.~Tendera.
\newblock The fluted fragment revisited.
\newblock {\em Journal of Symbolic Logic}, 2019.
\newblock (Forthcoming).

\bibitem{purdyTrans:purdy96a}
W.~C. Purdy.
\newblock Fluted formulas and the limits of decidability.
\newblock {\em Journal of Symbolic Logic}, 61(2):608--620, 1996.

\bibitem{purdyTrans:purdy02}
W.~C. Purdy.
\newblock Complexity and nicety of fluted logic.
\newblock {\em Studia Logica}, 71:177--198, 2002.

\bibitem{purdyTrans:quine69}
W.~V. Quine.
\newblock On the limits of decision.
\newblock In {\em Proceedings of the 14th International Congress of
  Philosophy}, volume III, pages 57--62. University of Vienna, 1969.

\bibitem{purdyTrans:quine76b}
W.~V. Quine.
\newblock The variable.
\newblock In {\em The Ways of Paradox}, pages 272--282. Harvard University
  Press, revised and enlarged edition, 1976.

\bibitem{purdyTrans:schmitz16}
S.~Schmitz.
\newblock Complexity hierarchies beyond {E}lementary.
\newblock {\em {ACM} Transactions on Computation Theory}, 8(1:3):1--36, 2016.

\bibitem{purdyTrans:scott62}
D.~Scott.
\newblock A decision method for validity of sentences in two variables.
\newblock {\em Journal of Symbolic Logic}, 27:477, 1962.

\bibitem{purdyTrans:st04}
W.~Szwast and L.~Tendera.
\newblock The guarded fragment with transitive guards.
\newblock {\em Annals of Pure and Applied Logic}, 128:227--276, 2004.

\bibitem{ST-FO2T}
W.~Szwast and L.~Tendera.
\newblock On the satisfiability problem for fragments of the two-variable logic
  with one transitive relation.
\newblock {\em Journal of Logic and Computation}, 2019.
\newblock Forthcoming.

\bibitem{Tendera18}
L.~Tendera.
\newblock Decidability frontier for fragments of first-order logic with
  transitivity.
\newblock In {\em Proceedings of the 31st International Workshop on Description
  Logics co-located with 16th International Conference on Principles of
  Knowledge Representation and Reasoning {(KR} 2018)}, 2018.
\newblock URL: \url{http://ceur-ws.org/Vol-2211/paper-02.pdf}.

\bibitem{purdyTrans:vardi95}
M.~Vardi.
\newblock On the complexity of bounded-variable queries.
\newblock In {\em Proceedings of the Fourteenth {ACM} {SIGACT-SIGMOD-SIGART}
  {S}ymposium on {P}rinciples of {D}atabase {S}ystems}, pages 266--276, 1995.

\end{thebibliography}

\appendix

\section{Proofs from Section \ref{sec:onetrans}}
\subsection{Proof of Lemma~\ref{lma:nf}}
We employ standard `re-writing' techniques, temporarily allowing 0-ary predicates. By moving negations inward in the usual way, we may assume without loss of generality that negation symbols in $\phi$ apply only to atoms. 
Suppose $\phi_0$ has a subformula $\theta= \exists x_{k+1} . \zeta$ ($0 \leq k < 2$), where $\zeta$ is quantifier-free. Let $p$ be a fresh predicate of arity $k$ (thus: 0-ary predicates can be introduced), and again let $\phi_1:= \phi_0[\theta/p(x_1, \dots, x_k)]$ and $\psi_1:= \forall x_1 \cdots \forall x_k (p(x_1, \dots, x_k) \leftrightarrow \theta)$; hence, $\models \phi_1 \wedge \psi_1 \rightarrow \phi_0$, and any model of $\phi_0$ can be expanded to a model of $\phi_1 \wedge \psi_1$. Thus, $\psi_1$ is of the form $\forall x_1 \cdots \forall x_k (p(x_1, \dots, x_k) \leftrightarrow \exists x_{k+1} . \zeta)$, where $\zeta$ is fluted and quantifier-free.
Suppose, alternatively, that $\phi_0$ has a subformula $\theta= \forall x_{k+1} . \zeta$. where $\zeta$ is quantifier-free. We proceed in the similarly, except that 
$\psi_1$ is of the form $\forall x_1 \cdots \forall x_k (p(x_1, \dots, x_k) \leftrightarrow \forall x_{k+1} . \zeta)$.

Proceeding similarly with $\phi_1$ in place of $\phi_0$, we obtain $\phi_2$ and $\psi_2$, and so on, until we reach some point where
the $\FL^{m}$-sentence $\phi_s$ is a proposition letter. Defining $\psi'$ to be $\psi_1 \wedge \cdots \wedge \psi_s \wedge \phi_s$, we see that 
$\models \psi' \rightarrow \phi$, and any model of $\phi$ can be expanded to a model of $\psi'$. It should be clear that the size
of $\psi'$ is at most linear in the size of $\phi$. Each of the $\psi_i$ and indeed $\phi_s$
is either a proposition letter or (following variable re-naming) of one of the forms 
\begin{align*}
& \ \forall x_1  (p(x_1) \leftrightarrow \exists x_2 . \lambda) \qquad \forall x_1  (p(x_1) \leftrightarrow \forall x_2 . \lambda)
\qquad  P \leftrightarrow \forall x_1 . \mu\qquad P \leftrightarrow \exists x_1. \mu,
\end{align*} 
where $\lambda$ is a quantifier-free fluted formula in variables $x_1, x_2$ and $\mu$ a quantifier-free fluted formula with variable $x_1$. Furthermore, we may re-write any formula $\forall x_1  (p(x_1) \leftrightarrow \exists x_2 . \lambda)$ equivalently as a conjunction
$\forall x_1  (p(x_1) \rightarrow \exists x_2 . \lambda) \wedge \forall x_1  (\neg p(x_1) \rightarrow \forall x_2  \neg \lambda)$, 
and similarly for $\forall x_1  (p(x_1) \leftrightarrow \forall x_2 . \lambda)$.

Eliminate the 0-ary predicates by guessing their truth values and carrying out the obvious simplifications. Let $\Psi'$ be any
satisfiable collection
of formulas obtained in this way if there is one (or any of them if $\phi$ is not satisfiable).
Massaging $\Psi'$ into a set of formulas $\Psi$ of the desired forms is then completely routine.

\subsection{Proof of Lemma~\ref{lma:resolution}}\label{app:resolution}
	Enumerate the atoms $r(x_1, \dots, x_m)$ where  $r \in \Sigma$ is of maximal arity $m$, as 
$\rho_1, \dots, \rho_n$.
Define a \textit{level-$i$ extension}
of $\tau^-$ inductively as follows: (i) $\tau^-$ is a level-0 extension of $\tau^-$; (ii) if $\tau'$ is a level-$i$ extension of $\tau^-$ ($0 \leq i < n$), then $\tau' \wedge \rho_{i+1}$  and $\tau' \wedge \neg \rho_{i+1}$ are 
level-$(i+1)$ extensions of $\tau^-$. Thus, the level-$n$ extensions of $\tau^-$ are exactly 
the fluted $m$-types over $\Sigma \cup \set{\ft}$ entailing
$\tau^-$. 
If $\tau'$ is a level-$i$ extension of $\tau^-$ ($0 \leq i < n$), we say that $\tau'$ {\em violates} a clause $\delta$
if, for every literal in $\delta$, the opposite literal is in $\tau'$; we say that $\tau'$ {\em violates} a set of clauses
$\Delta$ if $\tau'$ violates some $\delta \in \Delta$.
We construct a sequence of level-$i$ extensions of $\tau^-$ ($i = 0, 1, \dots$) none of which violates $\Gamma^*$.

By definition, $\tau^-$ is a level-0 extension of itself. 
Suppose now that $\tau'$ is a level-$i$ extension of $\tau^-$ ($0 \leq i < n$). We claim that, if both 
$\tau' \wedge \rho_{i+1}$  and $\tau' \wedge \neg \rho_{i+1}$ violate $\Gamma^*$, then so does $\tau'$. For otherwise, there must be a clause $\neg \rho_{i+1} \vee \gamma' \in \Gamma^*$ violated by $\tau' \wedge \rho_{i+1}$ and a clause $\rho_{i+1} \vee \delta' \in \Gamma^*$ violated by $\tau' \wedge \neg \rho_{i+1}$. But in that case $\tau'$ violates the fluted resolvent $\gamma' \vee \delta'$,  contradicting the
supposition that $\tau'$ does not violate $\Gamma^*$. This proves the claim. Now, since $\tau^-$ 
by hypothesis is consistent with $\Gamma^\circ$, it does not violate $\Gamma^\circ$. Therefore, since $\tau^-$ involves no $m$-literals
$\pm r(x_1, \dots, x_m)$ for $r \in \Sigma$, it does not violate $\Gamma^*$ either. By the above claim, then,
there must be at least one level-$n$ extension $\tau$ of $\tau^-$ which does not
violate $\Gamma^* \supseteq \Gamma$. Since $\tau$ is a fluted $m$-type, this proves the lemma.

\subsection{Proof of Lemma~\ref{lma:eliminateNonTrans}}\label{app:eliminateNonTrans}
Let $\phi$ be the formula
%
\begin{equation}
\forall x_1 x_{2} . \Omega \ \wedge
{\bigwedge_{i=1}^{s} }\forall x_1 \left(p_i(x_1) \rightarrow \exists x_2 . \Gamma_i\right) \wedge
{\bigwedge_{j=1}^{t}} \forall x_1 (q_j(x_1) \rightarrow \forall x_2 . \Delta_j),
\label{eq:cnfTwo}
\end{equation}
where $\Omega, \Gamma_1, \dots, \Gamma_s, \Delta_1, \dots, \Delta_t$ are sets of fluted 2-clauses,
and $p_1, \dots, p_s$, $q_1, \dots, q_t$ unary predicates. Write $T = \{1, \dots, t\}$.
For all $i$ ($1 \leq i \leq s$) and all $J \subseteq T$, let $p_{i,J}$ and $q_J$ be new unary
predicates.
The intended interpretation of $p_{i,J}(x_1)$ is ``$x_1$ satisfies $p_i$, and also satisfies $q_j$ for every $j \in J$;'' and
the intended interpretation of $q_{J}(x_1)$ is ``$x_1$ satisfies $q_j$ for every $j \in J$.''
Let $\phi'$ be the conjunction of the sentences
\begin{align}
&
\bigwedge_{i=1}^{s} \bigwedge_{J \subseteq T} \forall x_2
(p_i(x_2) \wedge \bigwedge_{j \in J} q_j(x_2))  \rightarrow p_{i,J}(x_2))
\label{eq:phiStar1Two}\\
&
\bigwedge_{J \subseteq T}\forall x_2
((\bigwedge_{j \in J} q_j(x_2)) \rightarrow q_{J}(x_2))
\label{eq:phiStar2Two}\\
& 
\bigwedge_{i=1}^{s} \bigwedge_{J \subseteq T} \forall x_1
\left( p_{i,J}(x_1)\rightarrow \exists x_2 \left(\Gamma_i \cup \Omega \cup \bigcup \{\Delta_j \mid j \in J \}\right)^\circ \right) 
\label{eq:phiStar3Two}\\ 
&
\bigwedge_{J \subseteq T}\forall x_1 
\left(q_{J}(x_1) \rightarrow \forall x_2 \left(\Omega \cup \bigcup \{\Delta_j \mid j \in J \}\right)^\circ \right).
\label{eq:phiStar4Two}
\end{align}
We claim that, if $\phi$ is satisfiable, then so is $\phi'$ (see Appendix, Section~\ref{app:eliminateNonTrans}). This proves the lemma, since $\phi'$ evidently contains no
non-distinguished binary predicates, and hence is in $\FLotranstMinus$.

For suppose $\fA \models \phi$. We expand
$\fA$ to a model $\fA' \models \phi'$ by setting, for all $i$ ($1 \leq i \leq s$) and all $J \subseteq T$,\\
$	p_{i,J}^{\fA'}  = \{ a \mid \text{$\fA \models p_i[a]$ and $\fA \models q_j[a]$ for all $j \in J$}  \}$ and 
$q_{J}^{\fA'}  = \{ a \mid \text{$\fA \models q_j[a]$ for all $j \in J$}  \}$.

\medskip\noindent	To see that $\fA' \models \phi'$, we simply check the truth of conjuncts~\eqref{eq:phiStar1Two}--\eqref{eq:phiStar4Two} in $\fA'$ in turn.
Sentences~\eqref{eq:phiStar1Two} and~\eqref{eq:phiStar2Two} are immediate. For~\eqref{eq:phiStar3Two}, fix $i$ and $J$, and suppose
$\fA' \models p_{i,J}[a_1]$. By the definition of $\fA'$, $\fA \models p_i[a_1]$ and 
$\fA \models q_j[a_1]$ for all $j \in J$. Since $\fA \models \phi$, there exists
$b$ such that $\fA \models \Gamma_i[a_1, b]$, \mbox{$\fA \models \Omega[a_1, b]$} and
$\fA \models \Delta_j[a_1, b]$ for all $j \in J$. Since resolution is a valid inference step,
$\fA' \models \left(\Gamma_i \cup \Omega \cup \bigcup \{\Delta_j \mid j \in J \}\right)^\circ[a_1,b]$. This establishes the truth of~\eqref{eq:phiStar3Two} in $\fA'$. Sentence~\eqref{eq:phiStar4Two} is handled similarly.

Conversely, we claim that, if 
$\phi'$ is satisfiable over a domain $A$, then $\phi$  is satisfiable over a domain of size  $s \cdot |A|$.
For suppose $\fA \models \phi'$. Let $\fB$ be the model of $\phi'$ guaranteed by Lemma~\ref{lma:multiply}, where $z= s$. 
We may assume that $\fA$ and hence $\fB$ interpret no non-distinguished predicates of arity $2$.
We proceed to expand $\fB$ to a model $\fB' \models \phi$ by interpreting the non-distinguished predicates of arity $2$ occurring in $\phi$.      
Pick any element $a_1$ from $B$, and let $J$ be the set of all $j$ ($1 \leq j \leq t$)
such that $\fB \models q_j[a_1]$. Suppose also that, for some $i$ ($1 \leq i \leq s$)
$\fA \models p_i[a_1]$. 
From~\eqref{eq:phiStar1Two}, $\fB \models p_{i,J}[a_1]$; and
from~\eqref{eq:phiStar3Two},
we may pick $b_i \in B$ such that
$\fB \models \left(\Gamma_i \cup \Omega \cup \bigcup \{\Delta_j \mid j \in J \}\right)^\circ[a_1,b_i]$. From the properties secured for $\fB$ by Lemma~\ref{lma:multiply}, we know that if, for fixed $a$,
we have $\fA \models \alpha_i[a_1]$ for more than one value of $i$, then
we may choose the corresponding elements $b_{i}$ so that they are all distinct.  
For each such $b_i$, then, let $\tau_i = \ftp^{\fB}[a_1,b_{i}]$. Thus, $\tau_i(x_1,x_2)$ is consistent
with $\left(\Gamma_i \cup \Omega \cup \bigcup \{\Delta_j \mid j \in J \}\right)^\circ$. By Lemma~\ref{lma:resolution}, there exists a 
fluted $2$-type $\tau_i^+ \supseteq \tau_i(x_1, x_2)$ such that $\tau_i^+$ is consistent with $\Gamma_i \cup \Omega \cup  \bigcup \{\Delta_j \mid j \in J \}$. 
Set $\ftp^{\fB'}[a_1,b_{i}] = \tau^+_i$. Since $\tau^+_i \supseteq \tau_i$,
only non-distinguished binary predicates are being assigned, so that there is no clash with $\fB$. Moreover, since the $b_i$ are all distinct (for a given element $a$),
these assignments do not clash with each other. In this way, every existential conjunct of $\phi$ is witnessed in $\fB'$ for every element $a$, and no static or universal conjunct of $\phi$ is violated for the tuples from $B$ for which the binary predicates of $\Sigma$ have been defined.
Now let $\langle a_1, a_2 \rangle$ be any ordered pair from $B$ for which the binary predicates
of $\Sigma$ have not been defined, and let $J$ be the set of all $j$ ($1 \leq j \leq t$)
such that $\fB \models q_j[a_1]$. From~\eqref{eq:phiStar2Two},
$\fB \models q_J[a_1]$; and from~\eqref{eq:phiStar4Two}, 
$\fB \models \left(\Omega \cup  \bigcup \{\Delta_j \mid j \in J \} \right)^\circ[a_1,a_2]$. 
Let $\tau= \ftp^\fB[a_1,a_2]$.  Hence $\tau(x_1, x_2)$ is consistent with 
$\left(\Omega \cup \bigcup \{\Delta_j \mid j \in J \} \right)^\circ$. By Lemma~\ref{lma:resolution}, there exists a 
fluted 2-type $\tau^+ \supseteq \tau$ such that $\tau^+$ is consistent 
with $\Omega \cup  \bigcup \{\Delta_j \mid j \in J \}$. Set $\ftp^{\fB'}[a_1,a_2] = \tau^+$. 
Since $\tau^+ \supseteq \tau$, only non-distinguished binary predicates are being assigned, so that there is no clash with $\fB$. 
Evidently, no static or universal conjunct of $\phi$ is violated in this process. Thus, $\fB' \models \phi$, as required.

\subsection{Proof of Lemma~\ref{lma:eliminate}}
	Similar
	to the proof of Lemma~\ref{lma:eliminateNonTrans}. Taking $\phi$ to be as in~\eqref{eq:cnf},
	write $T = \{1, \dots, t\}$. For
	all $i$ ($1 \leq i \leq s$) and all $J \subseteq T$, let $p_{i,J}$ and $q_J$ be new
	predicates of arity $m-2$.
	The intended interpretation of $p_{i,J}(x_2, \dots, x_{m-1})$ is ``for some $x_1$, 
	the tuple $x_1, \dots, x_{m-1}$ satisfies $\alpha_i$ and also satisfies $\beta_j$ for every $j \in J$;'' and
	the intended interpretation of $q_{J}(x_2, \dots, x_{m-1})$ is ``for some $x_1$, 
	the tuple $x_1, \dots, x_{m-1}$ satisfies $\beta_j$ for every $j \in J$.''
	Let $\phi'$ be the conjunction of the sentences
	\begin{align}
	&
	\bigwedge_{i=1}^{s} \bigwedge_{J \subseteq T} \forall x_1 \cdots \forall x_{m-1}
	((\alpha_i \wedge \bigwedge_{j \in J} \beta_j )  \rightarrow p_{i,J}(x_2, \dots, x_{m-1}))
	\label{eq:phiStar1}\\
	&
	\bigwedge_{J \subseteq T}\forall x_1 \cdots \forall x_{m-1}
	((\bigwedge_{j \in J} \beta_j) \rightarrow q_{J}(x_2, \dots, x_{m-1}))
	\label{eq:phiStar2}\\
	& 
	\bigwedge_{i=1}^{s} \bigwedge_{J \subseteq T} \forall x_2 \cdots \forall x_{m-1}
	( p_{i,J}(x_2, \dots, x_{m-1})\rightarrow \exists x_m (\Gamma_i \cup \Omega \cup \bigcup_{j \in J} \Delta_j )^\circ )
	\label{eq:phiStar3}\\ 
	&
	\bigwedge_{J \subseteq T}\forall x_2 \cdots \forall x_{m-1}
	( q_{J}(x_2, \dots, x_{m-1})\rightarrow \forall x_m (\Omega \cup  \bigcup_{j \in J} \Delta_j )^\circ ).
	\label{eq:phiStar4}
	\end{align}
	We claim that, if $\phi$ is satisfiable, then so is $\phi'$. Note that, since $m \geq 2$, \eqref{eq:phiStar3} and~\eqref{eq:phiStar4} do not involve $x_1$. By decrementing all variable indices in these conjuncts, therefore,
	we obtain a formula of $\FLotransm$ as required by the lemma. The proof of the claim proceeds almost identically to Lemma~\ref{lma:eliminateNonTrans}.

\section{Proof of Theorem~\ref{th:two}: undecidability of $\FLtwotrans$ with equality}\label{app:twotrans}
Below we present the complete proof that have been roughly sketched in Section~\ref{sec:undecidable}. 

Suppose the signature contains two transitive relations $b$ (blue) and $r$ (red), and additional unary predicates $c_{i,j}$ ($0\leq i\leq 3$, $0\leq j\leq 3$) called {\em colours}. We write a formula $\phi_{grid}$  capturing several properties of the intended expansion of the $\Z\times\Z$ grid  as shown in Fig.~\ref{fig:grid-two-transitive}.  There, each element with coordinates $(k,l)$ satisfies $c_{i,j}$, where $i=k\mod 4$ and $j=l\mod 4$ and the transitive relations connect only some elements that are close in  the grid.  The formula $\phi_{grid}$ is a conjunction of the following statements. 

\begin{figure}[thb]
	\begin{center}
		\begin{minipage}{13cm}
			\begin{minipage}{12.5cm}
				\begin{center}
					\begin{tikzpicture}[xscale=1,yscale=1]
					\clip (-0.15,-0.15) rectangle (7.3,4.3);
					\foreach \x in {-1,1,3,5,7}
					\foreach \y in {-1,1,3,5,7} \foreach \z in {0.1} 
					{
						\draw[color=red,thick,-, >=latex] (\x+\z, \y+\z) -- (\x+1-\z, \y+\z) --
						(\x+1-\z, \y+1-\z) -- (\x+\z, \y+1-\z) -- (\x+\z, \y+\z);
						\draw[color=red,thick,-, >=latex] (\x+\z, \y+\z) --
						(\x+1-\z, \y+1-\z) -- (\x+\z, \y+1-\z) -- (\x+1-\z, \y+\z);
						
					}	
					
					\foreach \x in {0,2,4,6}
					\foreach \y in {0,2,4,6} \foreach \z in {0.1} 
					{
						\draw[color=blue,thick,-, >=latex] (\x+\z, \y+\z) -- (\x+1-\z, \y+\z) --
						(\x+1-\z, \y+1-\z) -- (\x+\z, \y+1-\z) -- (\x+\z, \y+\z);
						\draw[color=blue,thick,-, >=latex] (\x+\z, \y+\z) --
						(\x+1-\z, \y+1-\z) -- (\x+\z, \y+1-\z) -- (\x+1-\z, \y+\z);
					}	
					
					\foreach \x in {1,5}
					\foreach \y in {1,5} \foreach \z in {0.15} 
					{
						\draw[color=blue,thick,->, >=latex] (\x, \y) -- (\x+1-\z, \y);
						\draw[color=blue,thick,->, >=latex] (\x, \y) -- (\x, \y+1-\z);
						\draw[color=blue,thick,->, >=latex] (\x-1, \y) -- (\x-1, \y+1-\z);
						\draw[color=blue,thick,->, >=latex] (\x, \y-1) -- (\x+1-\z, \y-1);						
					}	
					
					\foreach \x in {3,7}
					\foreach \y in {3,7} \foreach \z in {0.15} 
					{
						\draw[color=blue,thick,->, >=latex] (\x, \y) -- (\x+1-\z, \y);
						\draw[color=blue,thick,->, >=latex] (\x, \y) -- (\x, \y+1-\z);
						\draw[color=blue,thick,->, >=latex] (\x-1, \y) -- (\x-1, \y+1-\z);
						\draw[color=blue,thick,->, >=latex] (\x, \y-1) -- (\x+1-\z, \y-1);						
					}

					\foreach \x in {1,5}
					\foreach \y in {3,7} \foreach \z in {0.15} 
					{
						\draw[color=blue,thick,->, >=latex] (\x+1-\z, \y) -- (\x, \y) ;
						\draw[color=blue,thick,->, >=latex] (\x, \y+1-\z) -- (\x, \y);
						\draw[color=blue,thick,->, >=latex] (\x-1, \y+1-\z) -- (\x-1, \y);
						\draw[color=blue,thick,->, >=latex] (\x+1-\z, \y-1) -- (\x, \y-1);						
					}	
					
					\foreach \x in {3,7}
					\foreach \y in {1,5} \foreach \z in {0.15} 
					{
						\draw[color=blue,thick,->, >=latex] (\x+1-\z, \y) -- (\x+\z, \y) ;
						\draw[color=blue,thick,->, >=latex] (\x, \y+1-\z) -- (\x, \y+\z);
						\draw[color=blue,thick,->, >=latex] (\x-1, \y+1-\z) -- (\x-1, \y+\z);
						\draw[color=blue,thick,->, >=latex] (\x+1-\z, \y-1) -- (\x+\z, \y-1);						
					}	
					
					\foreach \x in {1,5}
					\foreach \y in {3,7} \foreach \z in {0.15} 
					{
						\draw[color=blue,thick,->, >=latex] (\x+1-\z, \y) -- (\x+\z, \y) ;
						\draw[color=blue,thick,->, >=latex] (\x, \y+1-\z) -- (\x, \y+\z);
						\draw[color=blue,thick,->, >=latex] (\x-1, \y+1-\z) -- (\x-1, \y+\z);
						\draw[color=blue,thick,->, >=latex] (\x+1-\z, \y-1) -- (\x+\z, \y-1);						
					}

					\foreach \x in {0,1,2,3,4,5,6,7}
					\foreach \y in {0,1,2,3,4,5,6,7}
					{
						\filldraw[fill=white] (\x, \y) circle (0.2); 
						\pgfmathtruncatemacro{\abc}{mod(\y,4)} 
						\pgfmathtruncatemacro{\bcd}{mod(\x,4)} 
						\coordinate [label=center:$_{\bcd\abc}$] (A) at (\x,\y);
					}

					
					\draw[ ->, dotted, very thick] (7.5,0) -- (8,0);
					\draw[ ->, dotted, very thick] (0,7.5) -- (0,8);
					\draw[ ->, dotted, very thick] (7.5,7.5) -- (8,8);
					
					%
					\end{tikzpicture}\end{center}
			\end{minipage}
			
		\end{minipage}		
	\end{center}
	
	\caption{Intended expansion of the $\N\times\N$ grid with two transitive relations {\color{blue}$b$}   and  {\color{red}$r$}. Edges without arrows represent connections in both direction. Nodes are marked by the indices of  the $c_{ij}$s they satisfy.  
		\label{fig:grid-two-transitive}}
\end{figure}
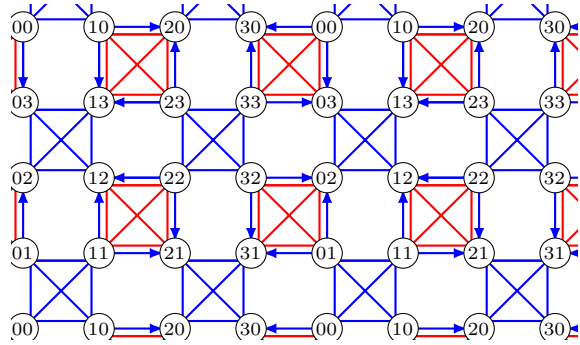

\begin{enumerate}[(1)]
	\item there is an initial element: $\exists x c_{00}(x) $.

	\item the predicates $c_{i,j}$ enforce a partition of the universe: $	\forall x \dot{\bigvee}_{0\leq i \leq 3} \dot{\bigvee}_{0\leq j \leq 3} c_{i,j}(x) 
	$.	\label{tiling:partition}
	\item transitive paths do not connect distinct elements of the same colour:
	\begin{eqnarray*}
		\bigwedge_{0\leq i,j\leq 3} \forall x (c_{ij}(x)\rightarrow \forall y ((b(x,y)\vee r(x,y))\wedge c_{ij}(y) \rightarrow x=y))
	\end{eqnarray*}\label{Clique}
	
	\item   
	-each element belongs to a 4-element blue clique; this property is expressed by writing the following conjuncts for each $i,j\in \{0,2\}$:\label{CliqueBlue}
	\begin{align*}
	%
	\forall x (c_{ij}(x) &\rightarrow \exists y (b(x,y)\wedge c_{i+1,j}(y)))\\ 
	\forall x (c_{i+1,j}(x) &\rightarrow \exists y (b(x,y)\wedge c_{i+1,j+1}(y))) 	\nonumber\\
	\forall x (c_{i+1,j+1}(x) &\rightarrow \exists y (b(x,y)\wedge c_{i,j+1}(y))) \nonumber\\
	\forall x (c_{i,j+1}(x) &\rightarrow \exists y (b(x,y)\wedge c_{ij}(y))) 
	\end{align*}
	%
	
	-each element belongs to a 4-element red clique;  we write the following conjuncts for each $i,j\in \{1,3\}$:\label{CliqueRed}
	%
	\begin{align*}
	%
	\forall x (c_{ij}(x) &\rightarrow \exists y (r(x,y)\wedge c_{i+1,j}(y))) \\
	\forall x (c_{i+1,j}(x) &\rightarrow \exists y (r(x,y)\wedge c_{i+1,j+1}(y))) \nonumber\\
	\forall x (c_{i+1,j+1}(x) &\rightarrow \exists y (r(x,y)\wedge c_{i,j+1}(y)))  \\
	\forall x (c_{i,j+1}(x) &\rightarrow \exists y (r(x,y)\wedge c_{ij}(y)))
	\end{align*}
	
	\item 		- a group of formulas saying that some pairs of elements connected by $r$ are also connected by $b$: 
	\begin{align}
	\bigwedge_{i=0,2} \forall x (c_{ii}(x) \rightarrow &\forall y (r(x,y)\wedge (c_{i,i-1}(y)\vee c_{i-1,i}(y) )\rightarrow b(x,y)))\tag{6a}
	\label{IfRed00}\\
	\bigwedge_{i=1,3} \forall x (c_{ii}(x) \rightarrow &\forall y (r(x,y)\wedge (c_{i,i+1}(y)\vee c_{i+1,i}(y))\rightarrow b(x,y)))\tag{6b}
	\label{IfRed11}\\	
	\bigwedge_{i=0,2} \forall x (c_{i,i+1}(x) \rightarrow &\forall y (r(x,y)\wedge (c_{i,i+2}(y)\vee c_{i-1,i+1}(y))\rightarrow b(x,y)))\tag{6c}
	\label{IfRed01}\\
	\bigwedge_{i=1,3} \forall x (c_{i,i-1}(x) \rightarrow &\forall y (r(x,y)\wedge (c_{ii}(y)\vee c_{i,i-2}(y) )\rightarrow b(x,y)))\tag{6d}
	\label{IfRed10}
	\end{align}
	
	- a group of formulas saying that some pairs of elements connected by $b$ are also connected by $r$: 
	\begin{align}
	\bigwedge_{i=0,2} \forall x (c_{ii}(x) \rightarrow &\forall y (b(x,y)\wedge (c_{i,i-1}(y)\vee c_{i-1,i}(y) )\rightarrow r(x,y))) \tag{7a}
	\label{IfBlue00}\\
	\bigwedge_{i=1,3} \forall x (c_{ii}(x) \rightarrow &\forall y (b(x,y)\wedge (c_{i,i+1}(y) \vee  c_{i+1,i}(y) )\rightarrow r(x,y)))\tag{7b}
	\label{IfBlue11}\\
	\bigwedge_{i=0,2} \forall x (c_{i,i+1}(x) \rightarrow &\forall y (b(x,y)\wedge (c_{i,i+2}(y)\vee c_{i-1,i+1}(y))\rightarrow r(x,y))) \tag{7c}
	\label{IfBlue01}\\
	\bigwedge_{i=1,3} \forall x (c_{i,i-1}(x) \rightarrow &\forall y (b(x,y)\wedge (c_{ii}(y)\vee c_{i,i-2}(y))\rightarrow r(x,y)))\tag{7d}
	\label{IfBlue10}
	\end{align}
\end{enumerate}

One can note that $\phi_{grid}$ has also finite models expanding a toroidal grid structure $\Z_{4m}\times \Z_{4m}$  ($m>0$) obtained  by identifying elements from columns 0 and $4m$ and from rows 0 and $4m$. 

In order to encode tilings using two-variable logics it actually suffices  to define structures that are grid-like. 
A structure $\fG=(G,h,v)$ with two binary relation $h$ and $v$ is {\em grid-like}, if one of the standard grids $\N\times \N$, $\Z\times\Z$ or $\Z_t\times\Z_t$ can be homomorphically embedded into $\fG$. To show that a structure $\fG$ is grid-like it suffices to require that $\fG \models \forall x (\exists y h(x,y) \wedge \exists y v(x,y))$ (note that this is an $\FLt$-formula) and the following {\em confluence} property holds
\begin{equation*}
\fG \models  \forall x,x',y,y' ((h(x,y)\wedge v(x,x')\wedge v(y,y') \rightarrow h(x',y')).\tag{*}
\end{equation*}\label{tiling}
The confluence property above uses four variables and is not fluted. We will enforce it in $\FLt{}$ using the transitive relations.
Let us introduce the following definitions:
\begin{align*}
\htw(x,y) := & b(x,y) \wedge \mynegsp
\bigvee_{{\substack{i=0,2\\ j=0,1,2,3 }}} \mynegsp
(c_{ij}(x) \wedge c_{i+1,j}(y) )\vee  r(x,y)  \wedge \mynegsp
\bigvee_{\substack{i=1,3\\ j=0,1,2,3 }} \mynegsp
(c_{ij}(x) \wedge c_{i+1,j}(y) )\\
\vtw(x,y) := &  b(x,y)  \wedge \mynegsp
\bigvee_{\substack{i=0,1,2,3\\ j=0,2 }}\mynegsp (c_{ij}(x) \wedge c_{i,j+1}(y) )\vee 
r(x,y) \wedge  \mynegsp
\bigvee_{\substack{i=0,1,2,3\\ j=1,3 }}\mynegsp (c_{ij}(x) \wedge c_{i,j+1}(y) )
\end{align*}	
Let $\fA\models \phi_{grid}$. 	We show that every model $(\fA,\htw,\vtw)$ of $\phi_{grid}$ is grid-like.  
We first show that $\fA$	 satisfies $\forall x (\exists y \htw(x,y) \wedge \exists y \vtw(x,y))$. One considers several cases depending on the values of the unary predicates.

Let $a\in A$ and assume $\fA\models c_{00}(a)$. By \eqref{CliqueBlue} there are $a_1,a_2,a_3,a_4\in A$ such that $\fA \models b(a,a_1)\wedge c_{10}(a_1) \wedge b(a_1,a_2)\wedge c_{11}(a_2)\wedge b(a_2,a_3)\wedge c_{01}(a_3) \wedge b(a_3,a_4)\wedge c_{00}(a_4)$. 
By \eqref{Clique} $a=a_4$ and by transitivity of $b$ the elements $a,a_1,a_2,a_3$ form a blue clique in $\fA$. Hence, $\fA\models \htw(a,a_1)\wedge \vtw(a,a_3)$. 

The same argument works if $a$ has one of the colours $c_{02}, c_{20}$ or $c_{22}$ and, similarly, applying (5) 
instead of \eqref{CliqueBlue} when $a$ has the colours $c_{11},c_{31},c_{13}$ or $c_{33}$. 

Consider now the case $\fA\models c_{10}(a)$. By \eqref{CliqueBlue} there is $a'\in A$ such that $\fA\models b(a,a')\wedge c_{11}(a')$, hence $\fA\models \vtw(a,a')$. Moreover, by \eqref{CliqueRed} there are $a_1,a_2,a_3,a_4\in A$ such that $\fA \models r(a,a_1)\wedge c_{13}(a_1) \wedge r(a_1,a_2)\wedge c_{23}(a_2)\wedge r(a_2,a_3)\wedge c_{20}(a_3)\wedge  r(a_3,a_4)\wedge c_{10}(a_4)$. 
By \eqref{Clique} $a=a_4$ and by transitivity of $r$ the elements $a,a_1,a_2,a_3$ form a red clique in $\fA$. By \eqref{IfRed10} $\fA\models b(a,a_1)$, hence $\fA\models \htw(a,a_1)$. 

Remaining cases are shown similarly. 

Now,  we show the confluence property (*). Let $a,a',b,b'\in A$ and $\fA \models  \htw(a,b)\wedge \vtw(a,a')\wedge \htw(b,b')$. One needs to consider several cases depending on the colour of $a$, in each of them showing that $\fA \models \htw(a',b')$. For instance:
\begin{itemize}
	\item	$\fA \models c_{00}(a)$. Then 
	$\fA \models c_{01}(a')\wedge b(a,a')\wedge b(a,b) \wedge c_{10}(b)\wedge b(b,b') \wedge c_{11}(b') $. By \eqref{CliqueBlue} $b'$ is a member of a blue clique containing elements of colours $c_{11}, c_{01}, c_{00}, c_{10}$. Since by \eqref{Clique} the relation $b$ does not connect distinct elements of the same colour, $a'$ belongs to the blue clique of $b'$ and $\fA \models b(a',a)$. Now, by transitivity of $b$, $\fA \models \htw(a',b')$. 
	%
	
	\item 
	$\fA \models c_{30}(a)$. Then  $\fA \models c_{31}(a')\wedge b(a,a')\wedge r(a,b) \wedge c_{00}(b)\wedge b(b,b') \wedge c_{01}(b') $. 
	Similarly as above, $b$ belongs to a red clique of $a$, hence $\fA\models r(b,a)$. 
	By \eqref{IfRed00}, $\fA \models b(b,a)$. Moreover, $b'$ is in a blue clique of $b$, and so $\fA \models b(b',b)$. 
	By transitivity of $b$,  $\fA \models b(b',a')$.  Now, by \eqref{IfBlue01}, $\fA\models r(b',a')$. By \eqref{CliqueRed}, $a'$ is a member of a red clique that, by \eqref{Clique}, must contain $b'$. Hence $\htw(a',b')$ holds.
	
	
\end{itemize}	
Remaining cases can be shown in a similar way. Hence, every model of $\phi_{grid}$ is grid-like. Now we ensure that we also can assign tiles to elements of the grid-like models using fluted formulas. The task in $\FOt$ is easy, it suffices to require that
\begin{enumerate}
	\item[(6)] each node encodes precisely one tile: \quad $	\forall x (\dot{\bigvee}_{C \in {\cal C}} C x)$,\label{tiling:C1}
	\item[(7)] adjacent tiles respect ${\cal C}_H$ and ${\cal C}_V$:
\end{enumerate}
\begin{align*}
\bigwedge_{C\in {\cal C}}	\forall x (Cx \rightarrow \forall y (\htw(x,y) \rightarrow \bigvee_{C': (C,C') \in {\cal C}_H}  C'y))\tag{7a}\label{tiling:H1fE}\\
\bigwedge_{C\in {\cal C}}	\forall x (Cx \rightarrow \forall y (\vtw(x,y) \rightarrow \bigvee_{C':(C,C') \in {\cal C}_V} C'y)).\tag{7b}\label{tiling:V1fE}
\end{align*}
The above two formulas are not fluted but can be written as fluted. Namely, 	using first-order tautologies,  each conjunct in (7a) 
can be equivalently written as follows:
\begin{eqnarray*}
	\bigwedge_{i=0,2, j=0,1,2,3 }\forall x (Cx \wedge c_{ij}(x) \rightarrow  \forall y (b(x,y)\wedge c_{i+1,j}(y) \rightarrow \bigvee_{C': (C,C') \in {\cal C}_H}  C'y))\wedge\\
	\bigwedge_{i=1,3, j=0,1,2,3 }\forall x (Cx \wedge c_{ij}(x) \rightarrow  \forall y (r(x,y)\wedge c_{i+1,j}(y) \rightarrow \bigvee_{C': (C,C') \in {\cal C}_H}  C'y)),
\end{eqnarray*}
and similarly for (7b). 
%
%
Let $\eta_{\boldsymbol{\cal C}}$ be the conjunction of $\phi_{grid}$ with the properties (6) and (7) written in $\FLt$, as explained.  It is routine to show that we have simultaneously reduced the plane tiling problem (respectively, the torus tiling problem) to the (finite) satisfiability problem. If $\boldsymbol{\cal C}$ tiles any of the spaces  $\Z \times \Z$ or $\Z_t \times \Z_t$, for some $t$, we expand the grids to our intended models. In the opposite direction, when $\fG\models \eta_{\boldsymbol{\cal C}}$ and $\fG$ is infinite we obtain a tiling of $\Z\times \Z$; in case $\fG$ is finite we obtain a tiling of $\Z_t\times \Z_t$ with $t$ divisible by 4. As a result we conclude that the satisfiability and the finite satisfiability problems for $\FLt{}$ with two transitive relations are undecidable.

\section{Table defining conjuncts in group (3) of the proof of Theorem~\ref{th:three}}
\begin{table}[htb]
	\begin{center}
		\begin{minipage}{14cm}
			\begin{minipage}{0.45\textwidth}
				\begin{center}
					
					$	\begin{array}{c|c|c|c|c}
					colour \quad & diag(x) &  border(x)	&  t(x,y) & colour'\\
					\hline
					
					d_{00}	&	e		&   l		& 	b	& 	c_{01}\\
					d_{00}	&	e		&   \neg l	& 	g	& 	c_{50}\\
					d_{14}	&	e		&   		& 	b	& 	c_{31}\\
					d_{22}	&	e		&  			& 	r	& 	c_{12}\\
					
					\hline
					
					
					d_{00}	&	\neg e	& 	 		& 	r	& 	d_{01}\\
					d_{01}	&			&   		& 	r	& 	d_{21}\\
					d_{21}	&			&   		& 	r	& 	d_{22}\\
					d_{22}	&	\neg e	&   		& 	b	& 	d_{23}\\
					
					d_{23}	&			&   		& 	b	& 	d_{13}\\
					d_{13}	&			&   		& 	b	& 	d_{14}\\
					d_{14}	&	\neg e	&   		& 	g	& 	d_{15}\\
					
					d_{15}	&			&   		& 	g	& 	d_{05}\\
					d_{05}	&			&   		& 	g	& 	d_{00}\\
					\hline
					
					d_{11}	&	\		&   		& 	b	& 	d_{10}\\
					d_{10}	&			&   \neg f	& 	b	& 	d_{20}\\
					
					d_{20}	&			&  \neg f	& 	b	& 	d_{25}\\
					
					d_{25}	&			&   		& 	r	& 	d_{24}\\
					d_{24}	&			&   		& 	r	& 	d_{04}\\
					
					d_{04}	&			&   		& 	r	& 	d_{03}\\
					
					d_{03}	&			&   		& 	g	& 	d_{02}\\
					d_{02}	&			&   		& 	g	& 	d_{12}\\
					d_{12}	&			&   		& 	g	& 	d_{11}\\
					
					\hline
					d_{20}	&			&   f	& 	r	& 	d_{00}\\
					d_{10}	&			&    f	& 	r	& 	d_{20}\\

				\end{array}$
				\end{center}
				\end{minipage}
				\hspace{1cm}	
				\begin{minipage}{0.45\textwidth}
				\begin{center}

				$	\begin{array}{c|c|c|c|c}
				colour \quad & diag(x) &  border(x)	&  t(x,y) & colour'\\
				\hline
				&			&  			& 		& 	\\
				c_{01}	&	 e'		&   		& 	b	& 	d_{11}\\
				c_{20}	&	e'		&   		& 	g	& 	d_{03}\\
				c_{42}	&	e'		&   		& 	r	& 	d_{25}\\
				\hline
				
				c_{01}	&	\neg e'	&   		& 	b	& 	c_{11}\\
				c_{11}	&			&   		& 	b	& 	c_{10}\\
				c_{10}	&			&   		& 	b	& 	c_{20}\\
				
				c_{20}	&	\neg e'	&   		& 	g	& 	c_{30}\\
				c_{30}	&			&   		& 	g	& 	c_{32}\\
				c_{32}	&			&   		& 	g	& 	c_{42}\\
				
				c_{42}	&	\neg e'	&   		& 	r	& 	c_{52}\\
				c_{52}	&			&   		& 	r	& 	c_{51}\\
				c_{51}	&			&   		& 	r	& 	c_{01}\\
				\hline

				c_{12}	&			&   		& 	g	& 	c_{02}\\
				c_{02}	&			&   \neg l	& 	g	& 	c_{00}\\
				c_{00}	&			&   \neg l	& 	g	& 	c_{50}\\
				
				c_{50}	&			&   		& 	b	& 	c_{40}\\
				c_{40}	&			&   		& 	b	& 	c_{41}\\
				c_{41}	&			&   		& 	b	& 	c_{31}\\
				
				c_{31}	&			&   		& 	r	& 	c_{21}\\
				c_{21}	&			&   		& 	r	& 	c_{22}\\
				c_{22}	&			&   		& 	r	& 	c_{12}\\
				\hline
				
				c_{02}	&			& l 		& 	b	& 	c_{00}\\
				c_{00}	&			& l 		& 	b	& 	c_{01}				
			\end{array}$
			\end{center}
			\end{minipage}
			\end{minipage}
			\end{center}
			
			\caption{Various combinations of the literals in conjuncts of the form \eqref{rgb:witnesses}; empty entries in columns $diag(x)$ or $border(x)$ mean $\top$. }
			\label{table:witnesses}
		\end{table}

	\end{document}